\newcommand{\bra}{(}
\newcommand{\ket}{)}
\newcommand{\im}{\text{im}\,}
\newcommand{\Z}{\mathbb{Z}}
\newcommand{\ST}{ST}
\newcommand{\HT}{HT}
\newcommand{\torgen}{\lambda_\varphi}
\newcommand{\polgen}{\lambda_\vartheta}
\newcommand{\radgen}{\lambda_r}
\newcommand{\toroform}{\sigma_\varphi}
\newcommand{\poloform}{\sigma_\vartheta}
\newcommand{\radform}{\sigma_r}
\newcommand{\torflux}{\Psi_T}
\newcommand{\polflux}{\Psi_P}
\newcommand{\torloop}{L_T}
\newcommand{\polloop}{L_P}
\newcommand{\radflux}{\Psi_R}
\newcommand{\torpath}{C_T}
\newcommand{\polpath}{C_P}
\newcommand{\radpath}{C_R}
\newcommand{\polocut}{S_P}
\newcommand{\toribbon}{S_T}
\newcommand{\torfield}{\nu_\varphi}
\newcommand{\polfield}{\nu_\vartheta}
\newcommand{\radfield}{\tilde{\nu}_\rho}
\newcommand{\gradphi}{"d\varphi"}
\newcommand{\Harm}{\text{Harm}}
\newcommand{\nochange}{%
}
\begin{document}
\title{Gauge freedom in magnetostatics and the effect on helicity in toroidal volumes}
\author{David Pfefferlé}
\author{Lyle Noakes}
\affiliation{The University of Western Australia, 35 Stirling Highway,
  Crawley WA 6009, Australia}

\date{\today}

\begin{abstract}
Magnetostatics defines a class of boundary value problems in which the topology of the domain plays a subtle role. For example, representability of a divergence-free field as the curl of a vector potential comes about because of homological considerations. With this in mind, we study gauge-freedom in magnetostatics and its effect on the comparison between magnetic configurations through key quantities such as the magnetic helicity. For this, we apply the Hodge decomposition of $k$-forms on compact orientable Riemaniann manifolds with smooth boundary, as well as de Rham cohomology, to the representation of magnetic fields through potential $1$-forms in toroidal volumes. An advantage of the homological approach is the recovery of classical results without explicit coordinates and assumptions about the fields on the exterior of the domain. In particular, a detailed construction of a \emph{minimal gauge} and a formal proof of \emph{relative helicity formulae} are presented.
\end{abstract}

\keywords{gauge freedom; magnetostatics; magnetic helicity; Hodge decomposition; de Rham cohomology.}

\maketitle

\section{Introduction}
\label{sec:intro}

{\nochange

Explicit calculations\cite{vmec,spec} of the magnetic field are often made in terms of the vector potential $\bm{A}(\bm{x})$; the magnetic field is then found by applying the \emph{curl} operator, i.e. $\bm{B}(\bm{x}) := \nabla\times \bm{A}$. The resulting magnetic field is then automatically divergence-free, $\nabla\cdot\bm{B}=\nabla\cdot(\nabla\times\bm{A})=0$, thereby satisfying one of Maxwell's equations. However the representation of the field by a vector potential is not unique, and this has consequences for key quantities such as the total magnetic helicity.
For example, let $\bm{A}_1$ and $\bm{A}_2$ be two vector potentials for a given magnetic configuration. They describe the same magnetic field precisely when their difference $\bm{s}=\bm{A}_1-\bm{A}_2$ is curl-free. Indeed, the part of the vector potential in the kernel of the curl operator is superfluous to the physics. The freedom to pick $\bm{A}$ among all vector fields that produce the same curl is called \emph{gauge freedom}. One could decide before performing any calculation on a convention or a \emph{gauge}, so that when the magnetic fields are the same, the vector potentials would match point-wise. Such a convention is called \emph{gauge fixing}.

A simpler occurrence of gauge freedom is the representation of a conservative force as the gradient of a potential function $V(\bm{x})$, namely $\bm{F}(\bm{x}):= -\nabla V$. Here similarly, different potentials represent the same force precisely when they differ by some constant.  The freedom is minimal in the sense that one has to agree upon just one number. To fix the gauge, one could adopt the convention that all potential functions should vanish at a point of reference, or that the average of all potential functions should vanish, or that the minimum of all potential functions should be zero. The convention affects, for example, the definition of the system's energy, but energy differences can compared in a gauge-invariant way.


Gauge freedom for vector potentials in magnetostatics is more involved than for conservative forces. While the physics (Maxwell equations) is gauge-independent, some important quantities depend on the convention that is chosen and on the domain on which the magnetic fields are represented: a convention that works well in a closed ball may not be appropriate for a solid torus.
}

Understanding gauge freedom is particularly important for the computation and interpretation of total magnetic helicity, which is the volume integral of the gauge-dependent quantity $\bm{A}\cdot \bm{B}$. It is known that total magnetic helicity is related to topological properties of the magnetic field~\citep{moreau-1961,moffatt-1969}. On a simply-connected orientable three-dimensional Riemannian manifold with boundary, total helicity is the asymptotic Hopf invariant of the magnetic field, measuring the average linking of its field-lines~\cite{arnold-1974}. The justification~\citep{berger-field-1984,berger-1999} uses the Biot-Savart law to express the vector potential as the integral of the magnetic field over a suitable kernel, and then identifies the magnetic helicity, in the form of a double-integral, as the average Gauss linking number of field-lines. The problems with this are, first, that the identification works if and only if the magnetic field is tangential to the boundary (otherwise the outside field must also be taken into account), and second, that the vector potential resulting from Biot-Savart is always divergence-free, which means that the so-called \emph{Coulomb} or \emph{minimal} gauge must be assumed.

Gauge-invariant or \emph{relative helicity} formulae have been derived for the case where the magnetic field remains tangential to the boundary~\cite{berger-field-1984,jensen-chu-1984,finn-antonsen-1985}. The way these formulae are justified is by asserting regularity of the solutions ``at infinity'', by performing clever book-keeping of the self and mutual helicities of magnetic fields restricted to simply-connected volumes of space, and finally by approximating the mutual helicity of the vacuum components as a product of fluxes. Based on intuition, the self-helicity of the vacuum field is for example set to zero~\cite{berger-1999}. The construction does not seem to extend to general boundary conditions on $\bm{B}$ and less regular fields. These formulae are widely used in the context of helicity injection in fusion plasmas~\cite{jensen-chu-1984} and reconnection processes in solar flare dynamics~\cite{taylor-1986}.

This paper explores an approach based on \emph{homology} to address the representation of magnetic fields via vector potentials, treating this exercise as a boundary value problem on a compact orientable Riemannian manifold. Using tools of differential geometry and algebraic topology from \emph{de Rham cohomology}~\citep{derham-1931} and \emph{Hodge theory}~\citep{hodge-1941}, a justification of the relative helicity formulae is provided.

The paper is organised as follows. Appendix \ref{sec:hodge-theory} reviews the general elements of de Rham cohomology and the Hodge decomposition of $k$-forms on compact manifolds. In section \ref{sec:cohomology}, we study the cohomology of solid toroidal and hollow toroidal volumes and write explicit generators for the de Rham cohomology groups. In section \ref{sec:hodge-decomp}, we list the conditions under which the magnetic field can be represented unambiguously by a vector potential, and how to decompose the latter in terms of a physical component and gauge terms. A pair of minimal gauges are proposed, depending on the topology of the toroidal domains (solid or hollow). In section \ref{sec:helicity}, the notion of magnetic helicity is discussed as well as how gauge freedom affects its computation. In the case of perfectly conducting boundary conditions, we show how to define for the two toroidal domains a relative total magnetic helicity, in such as way that its value is gauge-invariant. This enables the total magnetic helicity of two different magnetic field representations to be compared in a meaningful way.

It is assumed that the reader is somewhat familiar with notions of differential geometry and exterior calculus. It is recommended to read appendix~\ref{sec:hodge-theory} first to acclimate to the notation. Some readers may skip to section \ref{sec:helicity}, which highlights the practical consequences for the computation of total magnetic helicity.

\section{Cohomology of toroidal volumes}
\label{sec:cohomology}
The results of Hodge theory and de Rahm cohomology, as reviewed in appendix \ref{sec:hodge-theory}, are applied to the magnetostatics problem of representing a magnetic field through a vector potential on a toroidal volume, such as in a tokamak, a stellarator or even a knotted configuration, e.g. flux tubes in a reversed-field pinch or in the solar corona. 

Let us consider the magnetic field $\bm{B}:=(\star B)^\sharp$ to be the \emph{sharp} (isomorphism between vector fields and $1$-forms induced by the Riemannian metric) of the Hodge dual of a $2$-form called the \emph{magnetic flux} $B\in W^1\Omega^2(M)$ on $M$, a compact orientable Riemannian manifold with boundary. Knowing that $B$ is closed, $dB=0$ (corresponding to $\nabla\cdot \bm{B}=0$), we would like to know when it is exact, namely if the relation $B=dA$ between the magnetic flux and the potential $1$-form $A\in W^1\Omega^1(M)$ holds. The latter conveys the usual identity between the vector potential $\bm{A}:=A^\sharp$ and the magnetic field $\bm{B}=\curl\bm{A}$. To make proper use of the Hodge-Friedrichs-Morrey decomposition theorem (HFMD, see section~\ref{sec:hfmd}), we first characterise the de Rham cohomology groups, enabling us to identify the spaces of \emph{harmonic fields} (see Theorem \ref{thm:hodge-isomorphism}).

{\nochange
The first manifold $M=\ST \subset \R^3$ under consideration is the image of an embedding in $\R^3$ of the solid torus $D^2\times S^1=\{(x_1,x_2,x_3,x_4)\in \R^4|\, (x_1,x_2)\in D^2, (x_3,x_4)\in S^1\}\subset\R^4$, i.e. the Cartesian product of the unit disk $D^2=\{(x_1,x_2)\in \R^2|\, x_1^2+x_2^2\leq 1\}\subset \R^2$ and the unit circle $S^1=\{(y_1,y_2)\in \R^2|\, y_1^2+y_2^2=1\}\subset \R^2$. Let $\Phi : D^2\times S^1\to \ST \subset \R^3$ denote the embedding.  The boundary of the manifold $\partial \ST = \Phi(\partial D^2\times S^1)\cong S^1\times S^1 = T^2\subset \R^4$ is diffeomorphic to a torus.

Another manifold of interest is the hollow torus. Let $M=\HT\subset \R^3$ be the image of an embedding of the Cartesian product of a closed annulus and the unit circle. The closed annulus is treated as the product of an interval and another circle. We thus consider the embedding $\Phi:[0,1]\times S^1 \times S^1=\{(x_1,x_2,x_3,x_4,x_5)\in \R^5|\, x_1\in [0,1], x_2^2+x_3^2=1, x_4^2+x_5^2=1\}\to HT\subset \R^3$. The boundary of $\HT$ is the disjoint union of two tori, $\partial \HT\cong T^2_0 \cup T^2_1$ the interior and the exterior.

As subdomains of $\R^3$, $\ST$ and $\HT$ are given Riemannian metrics corresponding to the first fundamental forms from the Euclidean inner product.
}
It will be assumed throughout the paper that the embedding $\Phi$ preserves the orientation of the manifolds, in the sense that the standard volume form $dX\wedge dY\wedge dZ \in \Omega^3(M\subset \R^3)$ pulls back to a positive top form on the pre-image. This condition can be relaxed with the price of an extra sign in several formulae here below.

\subsection{Cohomology of the solid torus in $\R^4$}
{\nochange 
The cohomology of $\ST$ is isomorphic to that of $D^2\times S^1$. The unit disk (poloidal) being contractible, the absolute cohomology of $ST$ is isomorphic to that of the unit circle (toroidal). We list
\begin{align}
  \begin{split}
    H^0(\ST,\R)&\cong H^0(S^1,\R)\cong \R,\\
    H^1(\ST,\R)&\cong H^1(S^1,\R)\cong \R,
  \end{split}
\end{align}
and the rest is trivial $H^2(\ST,\R)=\{0\}$ and $H^3(\ST,\R)= \{0\}$. 
}

The relative cohomology of $ST$ can be deduced either via the Künneth formula or by invoking Poincaré-Lefschetz duality (see Corollary~\ref{thm:poincare-duality}), but the following reasoning provides an intuitive justification. Every point inside $\ST$ is homologous to a point on the boundary so that the zeroth relative cohomology group is trivial. Every closed loop can be moved to the boundary (including the generator of the circle's first homology) so that the first relative cohomology group is also trivial. The second relative homology of $\ST$ consists of surfaces, for which points on $\partial ST$ are discarded, that are closed (cycles) but do not bound any volume. A \emph{poloidal cut}, which opens the solid torus into a cylinder, fits exactly this description; the boundary of the poloidal cut lies on the boundary of the solid torus so that this relative chain is closed. The poloidal cut actually generates the solid torus' second homology (all other relative cycles are homologous to multiples of the poloidal cut), hence, the second relative cohomology is one-dimensional. To deduce $H^3_{dR}(\ST,\partial \ST,d)\cong \Harm_D^3(\ST)$, we recall Theorems~\ref{thm:hodge-isomorphism} and~\ref{thm:hodge-isomorphism2} from appendix~\ref{sec:hodge-theory} that all top forms are closed and Dirichlet but only the "constant" volume form is co-closed (no matter the Riemannian metric), so that $\Harm_D^3(\ST)\cong \{ c\mu \in \Omega^n(M)|\, c\in \R\}$. We thus list
\begin{align}
  \begin{split}
  H^0(\ST,\partial \ST, \R)& = \{0\},\\
  H^1(\ST,\partial \ST,\R)&= \{0\} ,\\
  H^2(\ST,\partial \ST, \R)&\cong \R,\\
  H^3(\ST,\partial \ST, \R)&\cong \R.
\end{split}
\end{align}


Let $\torpath:(0,2\pi)\to D^2\times S^1$, $\varphi\mapsto \torpath(\varphi) = (0,0,\cos\varphi,\sin\varphi)$ be a closed \emph{toroidal path} (our preferred representative of first homology generating class). An explicit representative $\torgen$ of the dual basis class $[\torgen] \in H_{dR}^1(D^2\times S^1,d)$ that generates the first cohomology group is, for example,
 \begin{align}
 \label{eq:rep_phi}
 \torgen :=\frac{1}{2\pi}(-x_4dx_3+x_3dx_4).
\end{align}
Indeed, $d\torgen = 0$ and the integral along any curve homologous to $\torpath$, is
\begin{align*}
 \ll [\torpath],[\torgen]\gg &= \int_{\torpath} \torgen
 = \int_0^{2\pi} \torpath^*\torgen
  = \int_0^{2\pi} \torgen(\torpath') d\varphi\nonumber\\
  &= \frac{1}{2\pi}\int_0^{2\pi}(\sin^2\varphi+\cos^2\varphi)d\varphi
  = 1.
\end{align*}

Let $\polocut:D^2\setminus \partial D^2 \to D^2\times S^1$, $(x_1,x_2)\mapsto \polocut(x_1,x_2) = (x_1,x_2,1,0)$ be a \emph{poloidal cut} (closed relative to the boundary, our preferred generator of second relative homology). An explicit representative $\poloform$ of the dual basis class $[\poloform]\in H^2_{dR}(D^2\times S^1,\partial D^2\times S^1,d)$ that generates the second relative cohomology group is, e.g.
\begin{align}
  \poloform:= \frac{1}{\pi}dx_1\wedge dx_2.
\end{align}
Indeed, $d\poloform=0$ and $\bm{t}\poloform = 0$ and the integral over any surface homologous to $\polocut$ is
\begin{align*}
 \ll[\polocut],[\poloform]\gg= \int_{\polocut} \poloform
  =  \int_{D^2} \polocut^*\poloform
  = \frac{1}{\pi}\int_{D^2} dx_1 dx_2 = 1.
\end{align*}

The wedge product of $\torgen$ and $\poloform$ is a closed Dirichlet $3$-form, i.e.  $d(\poloform\wedge \torgen)=0$ and $\bm{t} (\poloform\wedge \torgen)=\cancel{\bm{t}\poloform}\wedge\bm{t}\torgen=0$. It is thus the representative of a third relative cohomology class. Its integral over the solid torus is computed using the chart $V:D^2\setminus \partial D^2\times (0,2\pi)\to D^2\times S^1$ given by $V(x_1,x_2,\varphi)=(x_1,x_2,\cos\varphi,\sin\varphi)$ (our preferred generator of third relative homology). Explicitly,
\begin{align}
  \label{eq:wedge-canonical}
  \int\limits_{D^2\times S^1} \poloform\wedge \torgen
  &= \int\limits_V\poloform\wedge \torgen
  = \int\limits_{D^2}\int_0^{2\pi} V^*(\poloform\wedge\torgen)\nonumber\\
  &=\frac{1}{2\pi^2}\int\limits_{D^2}dx_1dx_2\int_0^{2\pi} d\varphi = 1.
\end{align}
Hence, $\poloform\wedge\torgen$ is actually a representative of the dual basis class $[\poloform\wedge\torgen]\in H^3_{dR}(D^2\times S^1,\partial D^2\times S^1,d)$ that generates the third relative cohomology group.

\subsection{Cohomology of the hollow torus in $\R^5$}
{\nochange The absolute cohomology of $HT$ is isomorphic to that of the torus $T^2=S^1\times S^1$. We list
\begin{align}
  \begin{split}
    H^0(\HT,\R)&\cong H^0(T^2,\R)\cong \R,\\
    H^1(\HT,\R)&\cong H^1(T^2,\R)\cong \R^2,\\
    H^2(\HT,\R)&\cong H^2(T^2,\R)\cong \R,\\
    H^3(\HT,\R)&\cong \{0\}
  \end{split}
\end{align}
}
By Poincaré-Lefschetz duality (see Corollary~\ref{thm:poincare-duality}), the relative cohomology is
\begin{align}
  \begin{split}
    H^0(\HT,\partial \HT,\R)&\cong \{0\},\\
    H^1(\HT,\partial \HT,\R)&\cong \R,\\
    H^2(\HT,\partial \HT,\R)&\cong \R^2,\\
    H^3(\HT,\partial \HT,\R)&\cong \R.
  \end{split}
\end{align}

Let $\torpath:(0,2\pi)\to [0,1]\times S^1\times S^1$, $\varphi\mapsto\torpath(\varphi)=(0,1,0,\cos\varphi,\sin\varphi)$ be a closed \emph{toroidal path} and $\polpath:(0,2\pi)\to [0,1]\times S^1\times S^1$, $\vartheta\mapsto \polpath(\vartheta)=(0,\cos\vartheta,\sin\vartheta,1,0)$ be a closed \emph{poloidal path} (our preferred generators of first homology). {\nochange Explicit representatives of the dual basis for the first absolute cohomology group are, for example,
\begin{align}
  \begin{split}
  \polgen&:= \frac{1}{2\pi}(-x_3dx_2+x_2dx_3),\\
  \torgen &:= \frac{1}{2\pi}(-x_5dx_4+x_4dx_5),
\end{split}
\end{align}
with the property that $\int_{\torpath} \torgen = 1$, $\int_{C_P}\polgen=1$, $\int_{C_P} \torgen=0$ and $\int_{\torpath}\polgen=0$. A representative for the dual basis class of the second absolute cohomology group is, for example, the wedge product
\begin{align}
  \radform: = \polgen\wedge\torgen,
\end{align}
with the property that $\int_{T^2} \radform=1$ where $T^2=\{.\}\times S^1\times S^1$ is a torus (our preferred generator of second homology).

Let $\radpath:[0,1]\to [0,1]\times S^1\times S^1$, $x\mapsto \radpath(x)=(x,1,0,1,0)$ be a \emph{radial path}. It is closed relative to the boundary, and thus represents a generator of first relative homology. A representative for the dual basis class of the first relative cohomology group is
\begin{align}
  \radgen: =dx_1,
\end{align}
with the property that $\int_{\radpath} \radgen = 1$.
}

Let $\polocut:(0,1)\times (0,2\pi)\to [0,1]\times S^1\times S^1$, $(r,\vartheta)\mapsto (r,\cos\vartheta,\sin\vartheta,1,0)$ be a \emph{poloidal cut} and $\toribbon:(0,1)\times (0,2\pi)\to [0,1]\times S^1\times S^1$, $(r,\varphi)\mapsto(r,1,0,\cos\varphi,\sin\varphi)$ be a \emph{toroidal annulus} (or ribbon). Representatives for the dual basis classes of the second relative cohomology group are the wedge products
\begin{align}
  \poloform &:= \radgen \wedge \polgen,&
  \toroform &:= \radgen\wedge\torgen,
\end{align}
with the property that $\int_{\toribbon} \toroform = 1 $, $\int_{\polocut}\poloform = 1$, $\int_{\toribbon}\sigma_{\varphi}=0$ and $\int_{\polocut}\sigma_{\vartheta}=0$. The third relative cohomology group is generated by forms cohomologous to the wedge products
\begin{align*}
  \poloform\wedge\torgen = -\toroform\wedge\polgen 
  = \radform\wedge \radgen
  = \radgen\wedge\polgen\wedge\torgen
\end{align*}
with the property that
\begin{align}
  \label{eq:vol-ht}
 \int\limits_{[0,1]\times S^1\times S^1}
\radgen\wedge\polgen\wedge\torgen=1. 
\end{align}

\section{Hodge decomposition of forms on toroidal volumes}
\label{sec:hodge-decomp}

From the purely homological observations of section \ref{sec:cohomology}, we draw a series of consequences on the Hodge decomposition of $k$-forms on the toroidal volumes $\ST$ and $\HT$. We refer here to appendix~\ref{sec:hodge-theory}.

\subsection{On a solid toroidal volume in $\R^3$}
{\nochange 


%

\begin{proposition}
  \label{prop:closed-2form}
  All closed $2$-forms on a solid toroidal volume $\ST$ are exact, i.e.  $B \in W^1\Omega^2(\ST)$, $dB=0 \iff B = d A$ where $A\in W^1\Omega^1(\ST)$.
\end{proposition}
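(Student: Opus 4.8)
The reverse implication is immediate and carries no topology: if $B=dA$ with $A\in W^1\Omega^1(\ST)$, then $dB=d\,dA=0$ because $d^2=0$. All the content lies in the forward implication, that every closed $2$-form is exact, and my plan is to obtain it as the Sobolev-level shadow of the vanishing $H^2(\ST,\R)=\{0\}$ recorded in Section \ref{sec:cohomology}.

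First I would place $B\in W^1\Omega^2(\ST)$ inside the Hodge-Friedrichs-Morrey decomposition of Section \ref{sec:hfmd}, writing it as an $L^2$-orthogonal sum of an exact part $d\alpha$, a co-exact part $\delta\beta$, and a harmonic field. Using $dB=0$ one checks that the co-exact part is itself closed; being simultaneously co-exact and closed, with the boundary conditions supplied by the decomposition, it must vanish by the integration-by-parts pairing $\|\delta\beta\|^2=\ll\beta,d\delta\beta\gg$ up to a boundary term that the decomposition kills. This reduces $B$ to the form $B=dA+\kappa$ with $\kappa$ a harmonic field representing the de Rham class $[B]\in H^2_{dR}(\ST)$.

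It then remains to identify and annihilate $\kappa$. Since no boundary condition is imposed on $B$, the pertinent harmonic fields are the \emph{Neumann} fields, and Theorem \ref{thm:hodge-isomorphism} gives $\Harm_N^2(\ST)\cong H^2_{dR}(\ST)\cong H^2(\ST,\R)$. The computation $H^2(\ST,\R)\cong H^2(S^1,\R)=\{0\}$ forces $\Harm_N^2(\ST)=\{0\}$, hence $\kappa=0$ and $B=dA$, with the regularity $A\in W^1\Omega^1(\ST)$ delivered by the elliptic estimates built into the Hodge-Friedrichs-Morrey theorem.

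I expect the main obstacle to be the bookkeeping at the Sobolev level rather than any new topology: one must be sure that the decomposition applies to merely $W^1$ forms and returns a genuine $W^1$ potential, and above all that it is the \emph{absolute} cohomology (equivalently the Neumann harmonic fields), not the relative one, that governs ``closed modulo exact'' for an unconstrained $B$. Once that identification is pinned down, the triviality of $H^2(\ST,\R)$ closes the argument with no further work.
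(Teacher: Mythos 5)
Your argument is correct and is essentially the paper's own proof: the paper simply cites the triviality of $H^2_{dR}(\ST,d)\cong H^2(\ST,\R)\cong\{0\}$ (defined at the $W^1$ level, so the Sobolev bookkeeping you worry about is already built into the definition), and your unpacking via HFMD plus the Hodge isomorphism $\Harm_N^2(\ST)\cong H^2_{dR}(\ST,d)$ is exactly the machinery the appendix uses to justify that one-line citation. No gap.
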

This corresponds to the usual assertion that a divergence-free magnetic field on a solid toroidal volume can be expressed as the \emph{curl} of the vector potential.

\begin{proof} This follows immediately from the fact that the second cohomology group $H_{dR}^2(\ST,d)\cong H^2(\ST,\R) \cong \{0\}$ is trivial.
\end{proof}


\begin{corollary}
  All co-closed $1$-forms on $\ST$ are co-exact, i.e.  $\tilde{B}\in \Omega^1(\ST)$, $\delta\tilde{B} = 0 \iff \tilde{B} = \delta \tilde{A}$ where $\tilde{A}\in W^1\Omega^2(\ST)$.
\end{corollary}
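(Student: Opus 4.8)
The plan is to obtain this corollary as the Hodge-dual restatement of Proposition~\ref{prop:closed-2form}, transporting the statement about closed and exact $2$-forms across the pointwise isomorphism furnished by the Hodge star $\star$. The reverse implication is immediate and purely algebraic: if $\tilde{B}=\delta\tilde{A}$ then $\delta\tilde{B}=\delta^2\tilde{A}=0$, since the codifferential squares to zero. Only the forward implication uses the topology of $\ST$, and that is where the Proposition enters.

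First I would recall that on the $3$-dimensional manifold $\ST$ the dimension is odd, so $\star\star=\mathrm{id}$ in every degree, and that the codifferential equals $\pm\,\star d\star$ with a sign depending only on the degree (in the usual convention, $\delta=-\star d\star$ on $1$-forms and $\delta=\star d\star$ on $2$-forms). Consequently, for a co-closed $1$-form $\tilde{B}$, the condition $\delta\tilde{B}=0$ is equivalent to $\star\delta\tilde{B}=0$, hence to $d(\star\tilde{B})=0$; that is, $\star\tilde{B}$ is a closed $2$-form. Applying Proposition~\ref{prop:closed-2form} then produces a potential $1$-form $A\in W^1\Omega^1(\ST)$ with $\star\tilde{B}=dA$.

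To close the argument I would set $\tilde{A}:=\star A\in W^1\Omega^2(\ST)$, which has the same Sobolev regularity because $\star$ is a pointwise algebraic isometry and therefore preserves $W^1$. Applying $\star$ to $\star\tilde{B}=dA$ and using $\star\star=\mathrm{id}$ gives $\tilde{B}=\star dA=\star d\star\tilde{A}$, which is exactly $\delta\tilde{A}$ up to the degree-dependent sign fixed above; absorbing that sign into the definition of $\tilde{A}$ yields $\tilde{B}=\delta\tilde{A}$ with no residual factor.

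The only genuine obstacle is consistent sign bookkeeping: one must track the factors $(-1)^{k(n-k)}$ appearing in $\star\star$ and the factors in $\delta=\pm\,\star d\star$ across degrees $1$ and $2$ with $n=3$, and verify that the chosen sign of $\tilde{A}$ makes the final identity exact. No new analysis is required beyond Proposition~\ref{prop:closed-2form}; the entire content is the duality between the pairs $(d,\text{exact})$ and $(\delta,\text{co-exact})$ combined with the triviality of $H^2_{dR}(\ST,d)$.
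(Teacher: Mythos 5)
Your proof is correct and follows essentially the same route as the paper: the paper invokes Hodge duality to identify $H_{dR}^1(\ST,\delta)\cong H_{dR}^2(\ST,d)\cong\{0\}$, and your argument simply makes that duality explicit by conjugating with $\star$ and applying Proposition~\ref{prop:closed-2form} to the closed $2$-form $\star\tilde{B}$. Your sign bookkeeping is sound (indeed, with the paper's convention $\delta=(-1)^k\star^{-1}d\star$ and $\star\star=\mathrm{id}$ on a Riemannian $3$-manifold, one checks directly that $\tilde{A}=\star A$ works with no residual sign), so the argument is complete.
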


\begin{proof}
By Hodge duality, Proposition \ref{prop:hodge-duality}, $H_{dR}^1(\ST,\delta)\cong H_{dR}^2(\ST,d)\cong \{0\}$.
\end{proof}

\begin{proposition}
  \label{prop:vacuum-field}
  On a solid toroidal volume, the space of Neumann harmonic $1$-fields $\Harm_N^1(\ST)$ is a one-dimensional vector space, spanned by a single field (unit vacuum toroidal field) $ \torfield$, defined up to sign by the following properties:
  \begin{enumerate}
  \item closed and co-closed, $d\torfield=0$ and $\delta\torfield=0$;
  \item everywhere tangential to the boundary, $\bm{n}\torfield=0$;
  \item unit $L^2$-norm, $||\torfield||_{L^2}=1$.
  \end{enumerate}
  The unit vacuum toroidal field is a purely geometric object and depends smoothly on the embedding $\Phi$.
\end{proposition}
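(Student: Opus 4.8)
The plan is to obtain $\torfield$ as the Hodge-theoretic representative of the nontrivial first cohomology class of $\ST$, and then verify that the listed properties pin it down up to sign. The key input is the Hodge isomorphism theorem (Theorem~\ref{thm:hodge-isomorphism}) from the appendix, which identifies the space of Neumann harmonic $1$-fields $\Harm_N^1(\ST)$ with the first absolute de Rham cohomology group $H^1_{dR}(\ST,d)\cong H^1(\ST,\R)$. Since we computed $H^1(\ST,\R)\cong\R$ in section~\ref{sec:cohomology}, the isomorphism immediately gives $\dim\Harm_N^1(\ST)=1$, establishing that the space is spanned by a single field. This reduces the problem to showing that the three listed conditions characterise a generator of this one-dimensional space uniquely up to sign.

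Next I would address the defining properties. Properties (1) and (2) are precisely the definition of a Neumann harmonic $1$-field: $d\torfield=0$ and $\delta\torfield=0$ say it is harmonic, and $\bm{n}\torfield=0$ is the Neumann (tangential) boundary condition. Hence any field satisfying (1) and (2) lies in the one-dimensional space $\Harm_N^1(\ST)$, so it is determined up to a scalar multiple. Property (3), the unit $L^2$-norm $||\torfield||_{L^2}=1$, fixes the magnitude of that scalar, and the only remaining freedom is the sign $\pm$. To see that $\torfield$ is genuinely nonzero (so that normalisation is possible), I would check that it is not exact: pairing against the toroidal cycle $\torpath$ via the period $\int_{\torpath}\torfield$ must be nonzero, since $\torfield$ represents the generator dual to $[\torgen]$. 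One can orient the sign convention, for instance, by requiring $\int_{\torpath}\torfield>0$, which is what the phrase "up to sign" encodes.

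For the final sentence — that $\torfield$ is a purely geometric object depending smoothly on the embedding $\Phi$ — I would argue as follows. The harmonic field is the unique solution in its cohomology class of the boundary value problem $d\torfield=0$, $\delta\torfield=0$, $\bm{n}\torfield=0$ with prescribed period $\int_{\torpath}\torfield=1$ (before normalisation). The operators $d$ and $\delta$, and the notions of tangential and normal traces, are defined entirely through the Riemannian metric, which on $\ST\subset\R^3$ is the first fundamental form pulled back from the Euclidean metric by $\Phi$; no external data (such as currents or sources) enter. This is what "purely geometric" means. Smooth dependence on $\Phi$ then follows because the metric coefficients depend smoothly on $\Phi$, the relevant elliptic boundary value problem (the Hodge Laplacian with Neumann-type conditions) has a solution operator that varies smoothly with its coefficients, and the $L^2$-normalisation is a smooth operation away from zero.

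The main obstacle I anticipate is the smooth-dependence claim rather than the existence and uniqueness, which follow almost formally from the Hodge isomorphism. A fully rigorous proof of smoothness would require invoking regularity and stability theory for the associated elliptic system under perturbations of the metric, together with the fact that the cohomological dimension stays constant (so the harmonic field cannot jump discontinuously or disappear). I would treat this at the level of citing elliptic boundary value problem theory and the continuity of the Hodge projection, rather than grinding through explicit estimates, since the one-dimensionality of $\Harm_N^1(\ST)$ rules out any degeneracy in the limit and makes the normalised generator a well-defined smooth section.
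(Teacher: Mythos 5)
Your argument is essentially identical to the paper's proof, which simply cites the chain of isomorphisms $\Harm_N^1(\ST)\cong H_{dR}^1(\ST,d)\cong H^1(\ST,\R)\cong\R$ and observes that properties (1)--(2) are the definition of a Neumann harmonic $1$-field while (3) is a normalisation. Your additional remarks on non-exactness via the period over $\torpath$ and on smooth dependence on $\Phi$ go beyond what the paper writes down, but they are consistent with it and do not change the route.
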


\begin{proof}
  $\Harm_N^1(\ST)\cong H_{dR}^1(\ST,d)\cong H^1(\ST,\R) \cong \R$. The first two properties define any Neumann harmonic $1$-field and the third is a normalisation.
\end{proof}

Multiples of the unit vacuum toroidal field $\torfield$ correspond to the only closed, co-closed and tangential $1$-forms that are not exact; they cannot be expressed as the exterior derivative of smooth functions, since $\bra \torfield,df\ket = 0$, $\forall f\in  W^1C^\infty(M)$.  The normalisation of $\torfield$ is arbitrary; one could equally well choose $\torfield$ to be dual with respect to loop integrals (with the drawback of having to divide by the $L^2$-norm in the decomposition formulae below).

The first de Rham cohomology group of solid toroidal volumes can now be explicitly written as $H^1_{dR}(\ST,d)=\{[c\torfield]\ |\ c\in \R\} = \text{span}\{[\torfield]\}$, the set of equivalence classes represented by multiples of the unit vacuum toroidal field.

\begin{corollary}
  \label{prop:dirichlet-2forms}
  The space of Dirichlet harmonic $2$-fields $\Harm^2_D(\ST)$ is a one-dimensional vector space, spanned by $\star \torfield$ (the unit vacuum poloidal flux). As a closed $2$-form, the latter can be written as an exact form $\star\torfield = d\Upsilon$ with $\delta d\Upsilon=0$ and $\bm{t}d\Upsilon = 0$.
\end{corollary}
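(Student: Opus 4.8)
The plan is to deduce both assertions from the Hodge star, combined with the exactness result already established for closed $2$-forms. First I would invoke the Hodge isomorphisms of the appendix (Theorems~\ref{thm:hodge-isomorphism} and~\ref{thm:hodge-isomorphism2}) to identify $\Harm_D^2(\ST)\cong H^2_{dR}(\ST,\partial\ST,d)$, which was computed above to be $\cong\R$; hence the space of Dirichlet harmonic $2$-fields is one-dimensional, and it remains only to produce an explicit non-zero generator.

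The generator is delivered by the Hodge star. On the three-manifold $\ST$, the operator $\star$ sends $1$-forms to $2$-forms and, up to sign, intertwines $d$ with $\delta$ while exchanging the tangential and normal boundary traces, so that $\bm{t}(\star\omega)=\pm\star(\bm{n}\omega)$. It therefore restricts to a norm-preserving isomorphism $\star:\Harm_N^1(\ST)\to\Harm_D^2(\ST)$. Since $\torfield$ spans $\Harm_N^1(\ST)$ by Proposition~\ref{prop:vacuum-field}, its image $\star\torfield$ spans $\Harm_D^2(\ST)$ and carries unit $L^2$-norm; explicitly, $d(\star\torfield)=0$, $\delta(\star\torfield)=0$ and $\bm{t}(\star\torfield)=0$ all follow from the defining properties $d\torfield=\delta\torfield=0$ and $\bm{n}\torfield=0$. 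This proves the first sentence of the statement.

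For the second sentence I would note that $\star\torfield$ is in particular a closed $2$-form on $\ST$, so Proposition~\ref{prop:closed-2form} applies directly and supplies a primitive $\Upsilon\in W^1\Omega^1(\ST)$ with $d\Upsilon=\star\torfield$. The two extra conditions are then immediate rewritings of the harmonic-Dirichlet data of the previous paragraph: $\delta d\Upsilon=\delta(\star\torfield)=0$ since $\star\torfield$ is co-closed, and $\bm{t}\,d\Upsilon=\bm{t}(\star\torfield)=0$ since $\star\torfield$ satisfies the Dirichlet condition. It is worth stressing that the primitive $\Upsilon$ cannot itself have vanishing tangential trace---otherwise $\star\torfield$ would lie in the exact subspace orthogonal to the harmonic fields and hence vanish---which is precisely why the non-trivial relative class survives, and why exactness here is not in tension with the Hodge decomposition.

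I expect the only delicate step to be the sign and trace bookkeeping for the Hodge star: confirming the precise identity $\star\delta=\pm d\star$ on a $3$-manifold and that $\star$ genuinely carries the Neumann trace to the Dirichlet trace. These are standard identities of the Hodge--Friedrichs--Morrey framework recalled in the appendix, so I would cite rather than re-derive them; everything else is formal substitution.
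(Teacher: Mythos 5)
Your argument is correct and follows the same route the paper intends for this corollary: Proposition~\ref{prop:hodge-duality} (the Hodge star isomorphism $\Harm_N^1(\ST)\cong\Harm_D^2(\ST)$) combined with Proposition~\ref{prop:vacuum-field} gives the one-dimensionality and the generator $\star\torfield$, and Proposition~\ref{prop:closed-2form} supplies the primitive $\Upsilon$, with the two extra conditions being restatements of $\delta(\star\torfield)=0$ and $\bm{t}(\star\torfield)=0$. Your closing observation that $\bm{t}\Upsilon$ cannot vanish matches the paper's subsequent remark, which it establishes by the equivalent computation $1=\|\star\torfield\|_{L^2}^2=\int_{\partial\ST}\bm{t}\Upsilon\wedge\torfield$.
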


\begin{remark}
  The tangential operator commutes with the exterior derivative, so that $d\bm{t} \Upsilon = \bm{t}d \Upsilon = 0$. However, this does not mean that $\bm{t}\Upsilon=0$. In fact,
  \begin{align*}
    1&=||\star \torfield||^2_{L^2} = \int_{\ST} \torfield\wedge d\Upsilon
    =- \int_{\ST} d(\torfield\wedge \Upsilon)\\
    &= \int_{\partial \ST} \bm{t}\Upsilon\wedge \bm{t}\torfield = \int_{\partial \ST} \bm{t}\Upsilon\wedge\torfield
  \end{align*}
  where the last step follows from the fact that the unit vacuum toroidal field is purely tangential on the boundary. It is thus necessary that $\bm{t}\Upsilon\neq 0$.
\end{remark}

\begin{proposition}[Hodge decomposition of $1$-forms on a solid torus]
  \label{prop:closed-oneform}
  All $1$-forms $A \in L^2\Omega^1(\ST)$ can be orthogonally decomposed as
  \begin{align*}
    A = \delta \beta + df_D + dh + c \torfield,
  \end{align*}
  where $\beta\in W^1\Omega_N^2(\ST)$, $f_D\in  W^1C^\infty(\ST)$ is a smooth Dirichlet function on $\ST$ such that $f_D|_{\partial \ST}=0$, $h\in  W^1C^\infty(\ST)$ is a harmonic function such that $\Delta h = 0$, $\torfield\in \Harm^1_N(\ST)$ is the unit vacuum toroidal field and $c=\bra A,\torfield\ket$ is the projection of $A$ onto it.
\end{proposition}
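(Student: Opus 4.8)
The plan is to derive the four-term decomposition by first applying the general Hodge-Friedrichs-Morrey decomposition (HFMD, section~\ref{sec:hfmd}) to $A$ and then refining its harmonic-field component using the cohomology of $\ST$ computed in section~\ref{sec:cohomology}. The HFMD supplies a mutually $L^2$-orthogonal splitting $A = df_D + \delta\beta + \kappa$, in which $f_D\in W^1C^\infty(\ST)$ is a Dirichlet function ($f_D|_{\partial\ST}=0$), $\beta\in W^1\Omega^2_N(\ST)$ is a Neumann $2$-form, and $\kappa$ is a harmonic $1$-field, that is $d\kappa=0$ and $\delta\kappa=0$ with no boundary condition imposed. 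These three summands are orthogonal by the theorem.

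The crux is to split the harmonic field as $\kappa = dh + c\,\torfield$ with $\Delta h = 0$ and $c\in\R$. Since $\kappa$ is closed and $H^1_{dR}(\ST,d)=\text{span}\{[\torfield]\}\cong\R$, there is a unique $c\in\R$ with $[\kappa]=c[\torfield]$, so that $\kappa - c\,\torfield = dg$ for some function $g$. Imposing co-closedness gives $0 = \delta\kappa = c\,\delta\torfield + \delta dg = \Delta g$, using $\delta\torfield = 0$ (Proposition~\ref{prop:vacuum-field}) together with $\delta g = 0$ on $0$-forms; hence $g=:h$ is harmonic and $dh$ is itself a harmonic field. This is where I expect the main subtlety to sit: one must confirm that the (infinite-dimensional) space of exact harmonic fields $\{dh:\Delta h = 0\}$ and the one-dimensional Neumann line $\R\,\torfield$ together exhaust $\Harm^1(\ST)$ and meet orthogonally. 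The orthogonality follows from $\bra dh,\torfield\ket = \bra h,\delta\torfield\ket = 0$ after integration by parts, the interior term vanishing because $\delta\torfield = 0$ and the boundary term because $\torfield$ is tangential, $\bm{n}\torfield = 0$ (Proposition~\ref{prop:vacuum-field}).

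It then remains to assemble $A = df_D + \delta\beta + dh + c\,\torfield$, to verify full orthogonality, and to fix the coefficient. The pairs $(\delta\beta,df_D)$, $(\delta\beta,dh)$ and $(\delta\beta,\torfield)$ are orthogonal as coexact against exact and against harmonic; $(df_D,dh)$ and $(df_D,\torfield)$ are orthogonal as exact against harmonic, equivalently by integration by parts using $f_D|_{\partial\ST}=0$; and $(dh,\torfield)$ was treated above, so all four terms are pairwise $L^2$-orthogonal. Finally, pairing $A$ with $\torfield$ annihilates the first three terms by orthogonality and leaves $\bra A,\torfield\ket = c\,||\torfield||^2_{L^2} = c$ by the unit normalisation of Proposition~\ref{prop:vacuum-field}, which fixes $c = \bra A,\torfield\ket$ and completes the argument.
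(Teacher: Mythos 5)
Your proof is correct and follows essentially the same route as the paper, whose one-line argument simply invokes HFMD together with Proposition~\ref{prop:vacuum-field} and the observation that $\delta f=0$ on functions so $\Delta h=\delta dh$. Your explicit splitting of the harmonic field $\kappa$ into $dh + c\torfield$ via the cohomology of $\ST$ is just an unpacking of the Friedrichs decomposition $L^2\Harm^1 = \Harm_N^1\oplus L^2\Harm^1_{ex}$ already contained in HFMD, so nothing genuinely new is being done.
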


\begin{proof}
  The decomposition of $1$-forms on $\ST$ immediately follows from HFMD and proposition \ref{prop:vacuum-field}. We note that for smooth functions $f\in \Omega^0(M)$, $\delta f = 0$, so that the Laplace-de Rham operator reduces to $\Delta f := (\delta d + d\delta) f = \delta d f$.
\end{proof}

\begin{theorem}[Minimal gauge on a solid torus]
  \label{thm:minimal_gauge}
  Let $B=dA\neq 0$ be a closed (and exact) $2$-form on $\ST$ as in proposition \ref{prop:closed-2form}. The potential $1$-form $A$ can be chosen uniquely to be minimal in terms of its $L^2$-norm by requiring that it be co-closed $\delta A=0$, tangential $\bm{n}A=0$ and whirl-free $\bra A,\torfield\ket = 0$.  In this gauge, $A$ carries the minimal amount of physical information about $B$.
\end{theorem}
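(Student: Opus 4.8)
The plan is to read the three gauge conditions as the orthogonality conditions for the $L^2$-minimisation, using the Hodge decomposition of Proposition~\ref{prop:closed-oneform}. I would write an arbitrary potential as $A = \delta\beta + df_D + dh + c\torfield$. Applying $d$ annihilates the three closed pieces $df_D$, $dh$ and $c\torfield$, leaving $dA = d\delta\beta = B$: the coexact piece $\delta\beta$ is the only component carrying the physics, while the other three span the gauge freedom (the closed $1$-forms). First I would record that $\delta\beta$ is the \emph{unique} coexact Neumann primitive of $B$, which exists by Proposition~\ref{prop:closed-2form}: if $\delta\beta_1$ and $\delta\beta_2$ both qualified, their difference would be simultaneously closed and coexact, hence orthogonal to itself via integration by parts (with $\bm{n}(\beta_1-\beta_2)=0$), hence zero.

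Next I would verify that $\delta\beta$ itself satisfies the three stated conditions, settling existence by taking $A:=\delta\beta$. Co-closedness is immediate from $\delta^2=0$. Whirl-freeness follows from integration by parts, $\bra\delta\beta,\torfield\ket = \bra\beta,d\torfield\ket = 0$, since $\torfield$ is closed and $\beta$ is Neumann, so the boundary term drops. The tangential condition $\bm{n}\,\delta\beta = 0$ is the one genuinely analytic point: it comes from the boundary commutation of $\bm{n}$ with $\delta$ (dual to the identity $\bm{t}d = d\bm{t}$ on $\partial\ST$), so that $\bm{n}\beta=0$ propagates to $\bm{n}\delta\beta=0$. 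I expect this to be the main obstacle, and I would either cite the corresponding boundary identity from Appendix~\ref{sec:hodge-theory} or argue it directly (in vector terms, the normal component of a curl depends only on the tangential part of the field, which vanishes here). I would also note the variational viewpoint as a cross-check: the minimiser is the orthogonal projection of $0$ onto the affine set $\{A:dA=B\}$, and unpacking $\langle A,df\rangle=0$ for all $f$ by parts yields precisely $\delta A=0$ in the interior together with $\bm{n}A=0$ on the boundary, so the tangential condition is not an extra assumption but the natural boundary term.

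For uniqueness I would take two potentials $A_1,A_2$ satisfying $dA_i=B$ and the three conditions and set $\eta = A_1-A_2$. Then $d\eta=0$, $\delta\eta=0$ and $\bm{n}\eta=0$, so $\eta$ is a Neumann harmonic $1$-field; by Proposition~\ref{prop:vacuum-field}, $\eta = a\,\torfield$ for some $a\in\R$. The whirl-free condition forces $a = \bra\eta,\torfield\ket = 0$, whence $A_1=A_2$.

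Finally, minimality is the clean payoff of orthogonality. Writing any competitor as $A=\delta\beta+\eta$ with $\eta$ closed, the cross term vanishes, $\bra\delta\beta,\eta\ket=\bra\beta,d\eta\ket=0$ (again integration by parts with $\bm{n}\beta=0$), so $\|A\|_{L^2}^2 = \|\delta\beta\|_{L^2}^2 + \|\eta\|_{L^2}^2 \ge \|\delta\beta\|_{L^2}^2$, with equality iff $\eta=0$. Thus $A=\delta\beta$ is the unique $L^2$-minimiser, and by the previous paragraphs it is exactly the $1$-form singled out by the three conditions. Since the orthogonality relation $\bra\delta\beta,\text{closed}\ket=0$ is what secures whirl-freeness, uniqueness of the coexact primitive, and minimality all at once, I would isolate it once and reuse it.
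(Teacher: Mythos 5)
Your proposal is correct and rests on the same orthogonal decomposition $A=\delta\beta+df_D+dh+c\,\torfield$ of Proposition~\ref{prop:closed-oneform} that the paper uses; the paper's own proof simply eliminates the three gauge components one by one ($\delta A=0\Rightarrow df_D=0$, $\bm{n}A=0\Rightarrow dh=0$, $\bra A,\torfield\ket=0\Rightarrow c=0$) and concludes minimality from orthogonality of the norms. The only substantive addition in your write-up is the explicit existence check that $\delta\beta$ itself is co-closed, tangential (via the commutation $\bm{n}\delta\beta=\delta\bm{n}\beta=0$) and whirl-free, a point the paper leaves implicit; this is a welcome completion rather than a different route.
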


Physical information refers here to the use of the $L^2$-norm to quantify the energy of fields, for example \emph{magnetic energy} by $||B||_{L^2}^2$.

\begin{proof}
  By proposition \ref{prop:closed-oneform}, $A=\delta \beta + df_D + dh + c\torfield$ so that $B=d\delta \beta$. A necessary and sufficient condition for $B\neq 0$ is $\delta\beta\neq 0$. By orthogonality of the decomposition, the $L^2$-norm squared of $A$ is the sum of positive quantities
  \begin{align*}
    ||A||^2_{L^2} = ||\delta \beta||_{L^2}^2 + || df_D||_{L^2}^2 + || dh||_{L^2}^2 + c^2 > 0
  \end{align*}
  If $A$ is co-closed, $\delta A=0$, proposition \ref{prop:closed-exact} shows that the component $df_D=0$. If $A$ is tangential, $0=\bm{n}A=\bm{n}dh$, but then
  \begin{align*}
    ||dh||_{L^2}=\bra dh,dh\ket = \bra h , \cancel{\Delta h} \ket + \int_{\partial \ST} \bm{t}h\wedge\star\cancel{\bm{n}dh} = 0
  \end{align*}
  and in fact $dh=0$. Finally, provided that $0 = \bra A,\torfield\ket = c$, $||A||^2_{L^2} = ||\delta\beta||_{L^2}^2$ is the minimal possible $L^2$-norm for which $B\neq0$.
\end{proof}

The requirement that the potential $1$-form $A$ be co-closed translates to the familiar requirement that the vector potential be divergence-free, which is commonly known as the Coulomb gauge. In contrast to the standard setting of $\R^3$, the two additional requirements are important because the solid toroidal volume is a manifold with boundary (hence the condition on the normal component) and its homology is non-trivial (hence the condition on the toroidal vacuum component).}

The Coulomb gauge (divergence-free vector potential) is rarely adopted in the context of toroidal magnetic confinement fusion and, to the authors' best knowledge, the tangential and whirl-free condition have never been applied. The minimal gauge, despite its mathematical attractiveness, has thus never been exploited. From a computational perspective, this implies that most vector potentials carry a significant fraction of physically irrelevant information. One would hope that computational effort is not wasted on those components.

{\nochange 
\begin{proposition}
  \label{prop:loop-integral}
  The integral of any closed $1$-form, $s\in W^1\Omega^1(\ST)$ such that $ds=0$, over any closed path is
  \begin{align*}
    \int_\Gamma s &=  (s,\torfield)m \torloop  , & m\in \Z,
  \end{align*}
  where $\torloop\neq 0 \in \R$ is a fixed coefficient that depends smoothly on the embedding $\Phi$ (normalisation of $\torfield$).
\end{proposition}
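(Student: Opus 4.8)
The plan is to split the statement into a topological part, controlling which homology classes a closed path $\Gamma$ can represent, and an analytic part, evaluating the single period of $s$. First I would use the fact that the integral of a closed $1$-form over a closed path depends only on the integer homology class of the path: if two closed paths $\Gamma,\Gamma'$ are homologous then $\Gamma-\Gamma'=\partial\Sigma$ for a $2$-chain $\Sigma$, and Stokes together with $ds=0$ gives $\int_\Gamma s-\int_{\Gamma'}s=\int_\Sigma ds=0$. Since $H_1(\ST,\Z)\cong\Z$ is generated by the toroidal loop $[\torpath]$ (section~\ref{sec:cohomology}), every closed path satisfies $[\Gamma]=m[\torpath]$ for a unique $m\in\Z$, and therefore $\int_\Gamma s=m\int_{\torpath}s$. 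This reduces the whole claim to evaluating the one period $\int_{\torpath}s$.

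Next I would reduce $s$ to its cohomological content. Because $s$ is closed, its de Rham class lies in $H^1_{dR}(\ST,d)=\mathrm{span}\{[\torfield]\}$, so $s=dg+c\,\torfield$ for some smooth $g$ and some constant $c$. Equivalently, applying the Hodge decomposition of Proposition~\ref{prop:closed-oneform}, write $s=\delta\beta+df_D+dh+c\,\torfield$; then $ds=0$ forces the co-exact piece $\delta\beta$ to be simultaneously closed and co-closed, hence a harmonic field, and by the orthogonality of the HFMD a co-exact harmonic field is orthogonal to itself and thus vanishes. Either way $s=dg+c\,\torfield$. Pairing with $\torfield$ in $L^2$ and using that $\torfield$ is co-closed and tangential, so that $(dg,\torfield)=(g,\delta\torfield)+\int_{\partial\ST}(\ldots)=0$, identifies the constant as $c=(s,\torfield)$, consistent with the notation of Proposition~\ref{prop:closed-oneform}.

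Finally I would evaluate the remaining period. Since the exact part integrates to zero over the closed curve $\torpath$, one gets $\int_{\torpath}s=c\int_{\torpath}\torfield$, so I would set $\torloop:=\int_{\torpath}\torfield$. That $\torloop\neq0$ follows from the non-degeneracy of the de Rham pairing between $H^1_{dR}(\ST,d)$ and $H_1(\ST,\Z)$: if this period vanished, $[\torfield]$ would annihilate the generator $[\torpath]$ and hence be trivial, contradicting that it spans $H^1_{dR}(\ST,d)$. Smoothness of $\torloop$ in the embedding $\Phi$ is inherited from the smooth dependence of $\torfield$ established in Proposition~\ref{prop:vacuum-field}. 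Assembling the pieces yields $\int_\Gamma s=m\,c\,\torloop=(s,\torfield)\,m\,\torloop$, as claimed.

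I expect the main obstacle to be the middle step: cleanly justifying that the co-exact component of a closed $1$-form vanishes and that the harmonic coefficient equals the $L^2$ projection $(s,\torfield)$, since this is exactly where the tangential boundary condition on $\torfield$ and the orthogonality of the Hodge--Friedrichs--Morrey decomposition must be invoked with care.
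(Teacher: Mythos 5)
Your proof is correct and follows essentially the same route as the paper: identify the class of $s$ as $(s,\torfield)[\torfield]$ via the Hodge decomposition, reduce $\Gamma$ to $m$ copies of the toroidal generator in $H_1(\ST)$, and define $\torloop$ as the period of $\torfield$ over that generator. The only difference is that you spell out details the paper leaves implicit (vanishing of the co-exact component, the identification $c=(s,\torfield)$, and the non-vanishing of $\torloop$ via the perfect pairing), all of which are handled correctly.
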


\begin{proof}
  The closed $1$-form $s$ is cohomologous to $c\torfield$ with $c=\bra s,\torfield\ket$. Thus,
  \begin{align*}
    \int_\Gamma s = c \int_\Gamma\torfield.
  \end{align*}
  The closed path $\Gamma$ is homologous to integer copies of $\Phi(\torpath)$. On $D^2\times S^1$, $\Phi^*(\torfield)$ is closed and is thus cohomologous to $\torloop \torgen$, with $\torloop\in \R$ computed by the loop integral
  \begin{align*}
    \torloop = \int_{\torpath}\Phi^*(\torfield)
    =\int_{\Phi(\torpath)}\torfield.
  \end{align*}
  Because $\torfield$ is unique (up to a sign), the factor $\torloop$ depends only on the embedding.
\end{proof}

\begin{proposition}
  \label{prop:flux-vac}
  The integral of a closed Dirichlet $2$-form, namely $B\in W^1\Omega_D^2(\ST)$ such that $dB=0$ and $\bm{t}B=0$, over a \emph{poloidal cut} $\Sigma_P\in [\Phi(\polocut)]\in H_2(\ST,\partial \ST)$, is equal to
  \begin{align*}
    \int_{\Sigma_P} B = \bra \star \torfield, B \ket \torflux
    = \torflux \int_{\ST}  \torfield \wedge B
  \end{align*}
  where $\torflux\neq 0\in \R$ is a fixed signed coefficient that depends smoothly on the embedding (through the normalisation of $\torfield$).
\end{proposition}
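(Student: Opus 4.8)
The plan is to mirror the proof of Proposition~\ref{prop:loop-integral}, exchanging the roles of closed $1$-forms and loops for those of closed Dirichlet $2$-forms and relative $2$-cycles, and the role of the vacuum toroidal field $\torfield$ for that of its Hodge dual $\star\torfield$. First I would note that, since $B$ is closed and Dirichlet, it represents a class $[B]\in H^2_{dR}(\ST,\partial\ST,d)$, and that by the Hodge isomorphism (see Theorems~\ref{thm:hodge-isomorphism} and~\ref{thm:hodge-isomorphism2}) together with Corollary~\ref{prop:dirichlet-2forms} this group is one-dimensional and spanned by the unit vacuum poloidal flux $\star\torfield$. The orthogonal projection of $B$ onto $\Harm^2_D(\ST)$ then identifies the harmonic Dirichlet representative of $[B]$ as $c\,\star\torfield$ with $c=\bra\star\torfield,B\ket$, where I use that $||\star\torfield||_{L^2}=||\torfield||_{L^2}=1$ because $\star$ is an $L^2$-isometry.

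The second step is to argue that the integral over $\Sigma_P$ sees only this class. Because the relative cohomology of $\ST$ is computed from Dirichlet forms, cohomologous representatives differ by $d\eta$ with $\bm{t}\eta=0$, so I write $B=c\,\star\torfield+d\eta$. Stokes' theorem gives $\int_{\Sigma_P}d\eta=\int_{\partial\Sigma_P}\bm{t}\eta$, and since $\Sigma_P$ is closed relative to the boundary we have $\partial\Sigma_P\subset\partial\ST$, where $\bm{t}\eta$ vanishes; hence the term drops and $\int_{\Sigma_P}B=c\int_{\Sigma_P}\star\torfield$. I then set $\torflux:=\int_{\Sigma_P}\star\torfield=\ll[\Sigma_P],[\star\torfield]\gg$, so that $\int_{\Sigma_P}B=c\,\torflux$. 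That $\torflux\neq0$ follows from non-degeneracy of the de Rham pairing between $H_2(\ST,\partial\ST)$ and $H^2_{dR}(\ST,\partial\ST)$, since $\Sigma_P$ generates the former and $\star\torfield$ the latter. Smooth dependence of $\torflux$ on the embedding is inherited from that of $\torfield$ (Proposition~\ref{prop:vacuum-field}), and its sign ambiguity matches the sign ambiguity in the choice of $\torfield$.

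Finally I would establish the remaining identity $\bra\star\torfield,B\ket=\int_{\ST}\torfield\wedge B$ by unwinding the $L^2$ inner product and using $\star\star=\mathrm{id}$ on $1$- and $2$-forms of a $3$-manifold: $\bra\star\torfield,B\ket=\bra B,\star\torfield\ket=\int_{\ST}B\wedge\star\star\torfield=\int_{\ST}B\wedge\torfield=\int_{\ST}\torfield\wedge B$, the last wedge-commutation being sign-free because $B$ has even degree. Combining the three steps yields $\int_{\Sigma_P}B=\bra\star\torfield,B\ket\,\torflux=\torflux\int_{\ST}\torfield\wedge B$.

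I expect the only delicate point to be the second step: ensuring that passing to a cohomologous representative of $B$ does not alter the integral over the relative cycle $\Sigma_P$. This hinges precisely on taking the exact part to have a Dirichlet primitive ($\bm{t}\eta=0$) and on $\Sigma_P$ being anchored on $\partial\ST$; both hold here but deserve explicit mention, as the boundary term that vanishes under these hypotheses is exactly what would obstruct the formula for a non-Dirichlet $B$ or a surface not closed relative to the boundary.
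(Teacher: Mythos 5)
Your proposal is correct and follows essentially the same route as the paper: identify the Dirichlet-harmonic representative $\bra B,\star\torfield\ket\,\star\torfield$ of $[B]\in H^2_{dR}(\ST,\partial\ST,d)$, use the well-definedness of the pairing with the relative cycle $\Sigma_P$ to reduce to $\torflux=\int_{\Sigma_P}\star\torfield$, and unwind the $L^2$ inner product via $\star\star=\mathrm{id}$. You in fact make explicit two points the paper leaves implicit (the Stokes argument showing the exact Dirichlet part integrates to zero over $\Sigma_P$, and the non-degeneracy argument for $\torflux\neq 0$), which is a welcome addition rather than a deviation.
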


\begin{proof}
  The closed Dirichlet $2$-form belongs to the equivalence class $B\in [\tilde{c}\star \torfield]\in H_{dR}^2(\ST,\partial \ST,d)$ of the second relative de Rham cohomology, with $\tilde{c}=\bra B,\star\torfield\ket$. Thus,
  \begin{align*}
    \int_{\Sigma_P} B = \tilde{c}\int_{\Sigma_P}\star\torfield = \tilde{c}\int_{\Phi(\polocut)}\star \torfield
  \end{align*}
  since the poloidal cut $\Sigma_P$ is homologous to $\Phi(\polocut)$. On $D^2\times S^1$, $\Phi^*(\star \torfield)$ is closed and Dirichlet since
  \begin{align*}
    \bm{t}\Phi^*(\star \torfield)
    = \Phi^*(\bm{t}\star\torfield)
    =\Phi^*(\star\bm{n}\torfield)=\Phi^*(0)=0.
  \end{align*}
  Hence, $\Phi^*(\star \torfield)\in [\torflux \poloform]$, with $\torflux\in \R$ computed by the surface integral
  \begin{align*}
    \torflux = \int_{\polocut} \Phi^*(\star \torfield) = \int_{\Phi(\polocut)}\star\torfield .
  \end{align*}
  Because $\star\torfield$ is unique (up to a sign), the factor $\torflux$ depends only on the embedding.
\end{proof}

\begin{theorem}
  \label{thm:product-wedge}
  Let $s\in W^1\Omega^1(\ST)$ be a closed $1$-form, $ds=0$ and $B\in W^1\Omega_D^2(\ST)$ be a closed Dirichlet $2$-form, $dB=0$ and $\bm{t}B=0$. Then,
  \begin{align*}
\int_{\ST} s\wedge B =     \bra s,\star^{-1} B\ket= \int_{\Gamma_T} s \int_{\Sigma_P} B,
  \end{align*}
  where $\Gamma_T\in [\Phi(\torpath)]\in H_1(\ST)$ is a closed toroidal path and $\Sigma_P\in[\Phi(\polocut)]\in H_2(\ST,\partial \ST)$ is a poloidal cut.
\end{theorem}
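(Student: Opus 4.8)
The plan is to route the wedge integral through the unit vacuum toroidal field $\torfield$: I will show that $\int_{\ST} s\wedge B$ depends only on the cohomology classes of its two factors, reduce it to a multiple of $\int_{\ST}\torfield\wedge B$ and $\bra s,\torfield\ket$, and then feed in Propositions~\ref{prop:loop-integral} and~\ref{prop:flux-vac}. This leaves a single embedding-dependent factor $\torloop\torflux$, and the decisive step is to prove that it equals $1$.

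The middle equality is immediate from the definition of the $L^2$ inner product together with $\star\star=\mathrm{id}$ on $2$-forms in dimension three: since $\star^{-1}B=\star B$ is a $1$-form, $\bra s,\star^{-1}B\ket=\int_{\ST}s\wedge\star(\star^{-1}B)=\int_{\ST}s\wedge B$. For the reduction I would first note that by Proposition~\ref{prop:closed-oneform} the co-exact part of a closed $1$-form vanishes (it is $L^2$-orthogonal to all closed forms), so $s=\bra s,\torfield\ket\,\torfield+d\psi$ with $\psi\in W^1C^\infty(\ST)$. The exact contribution then drops out, because $\int_{\ST}d\psi\wedge B=\int_{\ST}d(\psi B)=\int_{\partial\ST}\psi\,\bm{t}B=0$ using $dB=0$ and the Dirichlet condition $\bm{t}B=0$. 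Hence $\int_{\ST}s\wedge B=\bra s,\torfield\ket\int_{\ST}\torfield\wedge B$. Proposition~\ref{prop:flux-vac} gives $\int_{\ST}\torfield\wedge B=\torflux^{-1}\int_{\Sigma_P}B$, while Proposition~\ref{prop:loop-integral} (with $m=1$, since $\Gamma_T\in[\Phi(\torpath)]$) gives $\bra s,\torfield\ket=\torloop^{-1}\int_{\Gamma_T}s$, so that $\int_{\ST}s\wedge B=(\torloop\torflux)^{-1}\int_{\Gamma_T}s\int_{\Sigma_P}B$.

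It remains to prove the normalisation identity $\torloop\torflux=1$, which I regard as the heart of the argument. I would pull everything back by the orientation-preserving embedding $\Phi$ to $D^2\times S^1$. By construction $\Phi^*\torfield$ is closed with $\int_{\torpath}\Phi^*\torfield=\torloop$, hence $[\Phi^*\torfield]=\torloop[\torgen]$ in $H^1_{dR}(D^2\times S^1,d)$; likewise $\Phi^*(\star\torfield)$ is closed and Dirichlet (as in the proof of Proposition~\ref{prop:flux-vac}) with $\int_{\polocut}\Phi^*(\star\torfield)=\torflux$, so its relative class is $\torflux[\poloform]$. Applying the same cohomology-invariance of the wedge pairing and the unit normalisation of $\torfield$ yields
\begin{align*}
  1=\|\torfield\|_{L^2}^2=\int_{\ST}\torfield\wedge\star\torfield=\int_{D^2\times S^1}\Phi^*\torfield\wedge\Phi^*(\star\torfield)=\torloop\torflux\int_{D^2\times S^1}\poloform\wedge\torgen=\torloop\torflux,
\end{align*}
where I used $\torgen\wedge\poloform=\poloform\wedge\torgen$ and the canonical value~(\ref{eq:wedge-canonical}). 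Substituting $\torloop\torflux=1$ into the expression above gives the claimed formula.

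I expect the main obstacle to be the careful justification that the bilinear map $(s,B)\mapsto\int_{\ST}s\wedge B$ is well defined on the product of absolute degree-one and relative degree-two cohomology, i.e. the vanishing of \emph{both} boundary terms: one requires $\bm{t}B=0$ (used above), the other requires that a change of relative representative $B\mapsto B+d\eta$ be taken with $\bm{t}\eta=0$, so that $\int_{\ST}s\wedge d\eta=\pm\int_{\partial\ST}\bm{t}s\wedge\bm{t}\eta=0$. The secondary subtlety is orientation and Hodge-star sign bookkeeping, which is controlled by the standing assumption that $\Phi$ preserves orientation and by $\star\star=\mathrm{id}$ on $2$-forms in three dimensions.
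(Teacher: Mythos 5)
Your argument is correct, but it is organised differently from the paper's, and the difference is worth noting. Both proofs ultimately rest on the same key fact: the pairing $(s,B)\mapsto\int_{\ST}s\wedge B$ descends to the product of absolute first cohomology and relative second cohomology (your final paragraph correctly identifies the two boundary terms that must vanish for this), and is then evaluated by pulling back to $D^2\times S^1$ and using the normalisation~(\ref{eq:wedge-canonical}). The paper applies this cup-product argument \emph{directly} to the pair $(s,B)$: it reads off $[\Phi^*s]=l[\torgen]$ with $l=\int_{\Gamma_T}s$ and $[\Phi^*B]=\tilde l[\poloform]$ with $\tilde l=\int_{\Sigma_P}B$, and concludes in one computation. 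You instead first reduce the general pair to the special pair $(\torfield,\star\torfield)$ -- via the Hodge decomposition of $s$ (where the orthogonality of the co-exact part to closed forms is really Proposition~\ref{prop:closed-exact}, not~\ref{prop:closed-oneform}, a citation slip only) together with Propositions~\ref{prop:loop-integral} and~\ref{prop:flux-vac} -- and only then run the pullback/cup-product argument on that special pair to obtain $\torloop\torflux=1$. This inverts the paper's logical order: there, $\torloop\torflux=1$ is a \emph{corollary} of Theorem~\ref{thm:product-wedge}, whereas for you it is the heart of the proof. There is no circularity, since Propositions~\ref{prop:loop-integral} and~\ref{prop:flux-vac} are established independently, and your route has the minor virtue of making explicit which components of $s$ and $B$ actually contribute; but the detour through $\torfield$, $\torloop$ and $\torflux$ is strictly more work than applying the cohomology-invariance you already justified to the general pair, which would have given the theorem immediately and the normalisation identity for free afterwards.
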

}

\begin{proof} 
  The underlying reason is that the exterior product of closed differential forms induces the cup product in de Rham cohomology, so that $[\alpha]\smile[\beta] \equiv [\alpha\wedge\beta]$.

  Explicitly, the pullback $\Phi^* s\in [l\torgen]\in H_{dR}^1(D^2\times S^1,d)$ with
  \begin{align*}
    l=\int_{\Gamma_T} s=\int_{\torpath} \Phi^*s .
  \end{align*}
  The pullback $\Phi^*B\in [\tilde{l}\poloform]\in H_{dR}^2(D^2\times S^1,\partial D^2\times S^1,d)$ with
  \begin{align*}
   \tilde{l} = \int_{\Sigma_P} B= \int_{\polocut}\Phi^*B.
  \end{align*}
  The pullback of their wedge product belongs to the same class as $[\Phi^*(s\wedge B)] = [\Phi^*s\wedge\Phi^*B]= l\tilde{l}[\torgen\wedge \poloform] \in H_{dR}^3(D^2\times S^1,\partial D^2\times S^1,d)$. Hence,
  \begin{align*}
    \int\limits_{ST} s\wedge B
    &= \int\limits_{\Phi(D^2\times S^1)} s\wedge B
    = \int\limits_{D^2\times S^1} \Phi^*(s)\wedge\Phi^*(B)\\
    &=l\tilde{l}\int\limits_{D^2\times S^1} \torgen\wedge\poloform = l\tilde{l}
  \end{align*}
where the last step follows from equation (\ref{eq:wedge-canonical}).
One subtle point in the first step, where $\ST$ is replaced by $\Phi(D^2\times S^1)$ for the purpose of integration over the whole volume, is that the embedding must preserve the orientation of the manifold with respect to the standard volume form $dX\wedge dY\wedge dZ\in \Omega^3(\ST\subset \R^3)$, otherwise an extra minus sign must be included in the formula.
\end{proof}

The beauty here (and power of the homological approach) is that the result neither depends on the explicit representative forms $\torgen$ and $\poloform$, nor on the specific choice of embedding $\Phi$ (up to a minus sign if the manifold's orientation is reversed). It means that the factorisation $\int s\wedge B = \int s \int B$ occurs no matter the local coordinates used to represent the fields, which is particularly useful for numerical applications, either as a method of verifying the consistency and accuracy of the solver, or as a method of by-passing costly volume integrals. We warn that an exact match between $\int s\wedge B = \int s \int B$ is sensitive to round-off error, such that the order of numerical operations may matter in the intricate balance of volume integrals via quadrature rules.

{\nochange
\begin{corollary} The product of coefficients $\torloop$ and $\torflux$ is unity
  \begin{align}
    \torloop\torflux = 1.
  \end{align}
\end{corollary}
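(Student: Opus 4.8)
The plan is to apply Theorem~\ref{thm:product-wedge} to the single most natural pair of forms available on $\ST$, namely $s=\torfield$ and $B=\star\torfield$. First I would verify the hypotheses: by Proposition~\ref{prop:vacuum-field} the unit vacuum toroidal field $\torfield$ is closed, and by Corollary~\ref{prop:dirichlet-2forms} its Hodge dual $\star\torfield$ is a closed Dirichlet $2$-form, so both are admissible inputs to the theorem.

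Next I would evaluate the three expressions in Theorem~\ref{thm:product-wedge} for this choice. The middle (hence left) expression is immediate: $\int_{\ST}\torfield\wedge\star\torfield = \bra\torfield,\star^{-1}\star\torfield\ket = \bra\torfield,\torfield\ket = ||\torfield||_{L^2}^2 = 1$ by the unit-norm normalisation of $\torfield$ (property~3 of Proposition~\ref{prop:vacuum-field}). The two factors on the right are exactly the coefficients $\torloop$ and $\torflux$: choosing the representative $\Gamma_T\in[\Phi(\torpath)]$ gives $\int_{\Gamma_T}\torfield = \int_{\Phi(\torpath)}\torfield = \torloop$ directly from the definition established in the proof of Proposition~\ref{prop:loop-integral} (the case $m=1$ with prefactor $\bra\torfield,\torfield\ket=1$), while choosing $\Sigma_P\in[\Phi(\polocut)]$ gives $\int_{\Sigma_P}\star\torfield = \int_{\Phi(\polocut)}\star\torfield = \torflux$ from Proposition~\ref{prop:flux-vac} (the prefactor $\bra\star\torfield,\star\torfield\ket = ||\torfield||_{L^2}^2 = 1$ since $\star$ is an isometry). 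Assembling these, the theorem reads $1 = \torloop\,\torflux$, which is the claim.

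I do not anticipate a genuine obstacle, since the corollary is a direct specialisation of Theorem~\ref{thm:product-wedge}; the only point demanding care is the bookkeeping of normalisation constants, ensuring that the inner-product prefactors $\bra\torfield,\torfield\ket$ and $\bra\star\torfield,\star\torfield\ket$ appearing in Propositions~\ref{prop:loop-integral} and~\ref{prop:flux-vac} both collapse to unity, so that the two integrals reduce cleanly to $\torloop$ and $\torflux$. Orientation conventions, including the possible extra sign flagged in Theorem~\ref{thm:product-wedge}, are assumed fixed by the orientation-preserving hypothesis on the embedding $\Phi$.
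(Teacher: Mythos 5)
Your proposal is correct and follows exactly the paper's own argument: apply Theorem~\ref{thm:product-wedge} to $s=\torfield$ and $B=\star\torfield$, use $\bra\torfield,\torfield\ket=1$ on the left, and identify the two factors on the right with $\torloop$ and $\torflux$ as defined in Propositions~\ref{prop:loop-integral} and~\ref{prop:flux-vac}. The extra care you take in checking the hypotheses and the normalisation prefactors is sound but adds nothing beyond the paper's one-line computation.
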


\begin{proof} Applying the result of Theorem \ref{thm:product-wedge} to the unit vacuum toroidal field, we have
  \begin{align*}
    1= \int_{\ST}\torfield\wedge\star\torfield
    =\int_{\Gamma_T}\torfield\int_{\Sigma_P}\star\torfield
    = \torloop\torflux.
  \end{align*}

\end{proof}

}

\subsection{The illustrative example of an axisymmetric embedding}
To illustrate the results of the previous section, we consider the special case of an \emph{axisymmetric} solid toroidal volume, such as the standard tokamak configuration. The adjective axisymmetric means that the embedding maps the disk independently from the circle and takes the form $\Phi_{axi}:D^2\times S^1\to \R^3$, $(x_1,x_2,x_3,x_4)\mapsto (X,Y,Z)=(R(x_1,x_2) x_3,R(x_1,x_2)x_4,Z(x_1,x_2))$ where $(R,Z):D^2\to \Sigma_P\subseteq [R_{min},R_{max}]\times [Z_{min},Z_{max}]$ are functions referencing the major radius and the vertical position. The major radius matches the function $R(X,Y,Z)=\sqrt{X^2+Y^2} = R(x_1,x_2)$, with the requirement that $R_{min}>0$.

Locally, one can define the \emph{toroidal angle} $\varphi(X,Y,Z)$, such that $X= R \cos\varphi $ and $Y=R\sin\varphi$. Its value coincides with the local angle $\varphi$ on $S^1$, such that $x_3=\cos\varphi$ and $x_4=\sin\varphi$. The toroidal angle is not a smooth function on $\ST_{axi}\subset \R^3$, yet it is common to denote $\gradphi = (-YdX+XdY)/(X^2+Y^2)$, a (global) $1$-form on $\ST_{axi}$, as its local exterior derivative.
%
%
The pullback is a multiple of the representative defined in equation (\ref{eq:rep_phi}) of the first cohomology generating class, $\Phi_{axi}^*(\gradphi) = 2\pi \torgen$. Hence, $\gradphi$ is closed but deceivingly not exact since $\int_{\Phi_{axi}(\torpath)} \gradphi=\int_{\torpath} \Phi_{axi}^*(\gradphi) = 2\pi \int_{\torpath} \torgen = 2\pi$. This means that $[\gradphi]\neq[0]\in H^1_{dR}(\ST_{axi},d)$. 

In addition, it can be verified that $\gradphi$ is co-closed with respect to the Euclidean metric, $\delta \gradphi=0$, and tangential to the boundary of $\ST_{axi}$, $\bm{n}\gradphi=0$ (specific to axisymmetric embeddings $\Phi_{axi}$). These properties actually make the $1$-form $\gradphi$ a Neumann harmonic field, $"d\varphi "\in \Harm^1_N(\ST_{axi})$. The unit vacuum toroidal field on $\ST_{axi}$ is then simply $\torfield = \gradphi/||\gradphi||_{L^2}$ where
\begin{align*} 
  ||\gradphi||^2_{L^2}&= \int_{\ST_{axi}} \langle \gradphi,\gradphi\rangle \mu
  = 2\pi \int_{\Sigma_P} R^{-1}dRdZ,
\end{align*}
with $\Sigma_P$ being a \emph{poloidal cut} of the torus. If the latter domain is simply a rectangle $\Sigma_P = [R_{min},R_{max}]\times[Z_{min},Z_{max}]$, the $L^2$-norm reduces to $||\gradphi||^2_{L^2} = 2\pi (Z_{max}-Z_{min}) \ln(R_{max}/R_{min})$. 

When expressing the vector potential $1$-form in so-called \emph{covariant components} with respect to cylindrical coordinates, $A = A_RdR + A_ZdZ + A_\varphi \gradphi$, the whirl-free condition from the minimal gauge Theorem~\ref{thm:minimal_gauge} reads
\begin{align*}
  0& = (A,\gradphi)  \iff 
\int_{\Sigma_P} \frac{A_\varphi(R,Z)}{R} dR dZ = 0.
\end{align*}

For general embeddings $\Phi$, computing the unit toroidal vacuum field is not trivial (although it can be cast as the solution to a variational problem). The following proposition provides a recipe when $\Phi$ is not "substantially different" from axisymmetric.
\begin{proposition}
If $\ST$ is isotopic to $\ST_{axi}$ and there exists an isometry $\mathcal{I}$ such that $\ST\cap \mathcal{I}(\{(0,0,Z)\in \R^3\}) =\varnothing $, the unit toroidal vacuum field can be expressed as $\torfield =(\gradphi - dh)/||\gradphi-dh||_{L^2}$ where $\gradphi=\mathcal{I}^*[(-YdX+XdY)/(X^2+Y^2)]$ and $h\in W^1 C^\infty(\ST)$ is the unique solution (up to a constant) to the boundary value problem
\begin{align*}
\Delta h &= 0 ,&
\bm{n}dh &= -\bm{n}\gradphi.
\end{align*}
\end{proposition}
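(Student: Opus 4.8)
The plan is to reduce the statement to Proposition~\ref{prop:vacuum-field}: since $\Harm^1_N(\ST)$ is one-dimensional, it suffices to exhibit $\eta:=\gradphi-dh$ as a \emph{nonzero} element of $\Harm^1_N(\ST)$, after which dividing by $\|\eta\|_{L^2}$ produces $\torfield$ up to sign. The first task is to check that $\gradphi$ is a bona fide smooth $1$-form on all of $\ST$ that is both closed and co-closed. The hypothesis $\ST\cap\mathcal{I}(\{(0,0,Z)\})=\varnothing$ guarantees that the singular locus of the Euclidean angle form $\omega:=(-YdX+XdY)/(X^2+Y^2)$, namely the $Z$-axis, is displaced by $\mathcal{I}$ clear of $\ST$, so that $\gradphi=\mathcal{I}^*\omega$ is smooth throughout $\ST$. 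A direct computation (equivalently, the observation that $\omega^\sharp=R^{-1}\hat\theta$ is the divergence-free, curl-free field of an infinite wire) shows $d\omega=0$ and $\delta\omega=0$ away from the axis; since $\mathcal{I}$ is an isometry it commutes with $d$ and $\delta$, so $d\gradphi=0$ and $\delta\gradphi=0$ on $\ST$.

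Next I would establish that the boundary-value problem for $h$ is well posed. It is a Neumann problem for the scalar Laplacian, and the compatibility condition for $\Delta h=0$ with prescribed normal data is that the data integrate to zero over $\partial\ST$; this holds precisely because $\gradphi$ is co-closed and smooth on all of $\ST$, so the flux of $\gradphi^\sharp$ through the closed surface $\partial\ST$ vanishes by the divergence theorem. Standard elliptic theory (equivalently, HFMD applied to the harmonic-function component) then yields a solution $h\in W^1C^\infty(\ST)$, unique up to an additive constant. Setting $\eta:=\gradphi-dh$, I then verify the three defining properties: $d\eta=d\gradphi-d(dh)=0$; $\delta\eta=\delta\gradphi-\delta dh=-\Delta h=0$, using $\delta h=0$ for functions so that $\Delta h=\delta dh$; and $\bm{n}\eta=0$, which is exactly the content of the boundary condition, the normal data having been chosen so that $dh$ cancels the normal trace of $\gradphi$. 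Hence $\eta\in\Harm^1_N(\ST)$.

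Finally I would show $\eta\neq0$ and that it spans $\Harm^1_N(\ST)$. Because $dh$ is exact, $[\eta]=[\gradphi]\in H^1_{dR}(\ST,d)$, so it is enough to see that $\gradphi$ has a nonzero period. The toroidal loop $\Gamma_T$ links the displaced axis $\mathcal{I}(\{(0,0,Z)\})$ exactly once---this is where the isotopy hypothesis enters, linking number being an isotopy invariant equal to $1$ for the axisymmetric model---so that $\int_{\Gamma_T}\gradphi=\int_{\mathcal{I}^{-1}(\Gamma_T)}\omega=\pm2\pi\neq0$. Thus $[\eta]$ generates $H^1_{dR}(\ST,d)\cong\R$, so $\eta$ is a nonzero Neumann harmonic field; by Proposition~\ref{prop:vacuum-field} it is a scalar multiple of $\torfield$, and normalising gives the claimed formula. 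I expect the main obstacle to be the topological bookkeeping of the first and last steps---verifying simultaneously that the isometry pushes the singular axis outside $\ST$ \emph{and} that $\ST$ still encircles it exactly once, so that $\gradphi$ is at once globally smooth on $\ST$ and cohomologically nontrivial---whereas the analytic content (solvability of the Neumann problem and the three harmonicity checks) is comparatively routine given the appendix.
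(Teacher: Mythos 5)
Your proposal is correct and lands on the same three pillars as the paper's proof: $\gradphi$ is a globally smooth harmonic field on $\ST$ because the isometry moves the singular axis off the domain and preserves $d$ and $\delta$; subtracting an exact harmonic piece $dh$ leaves a Neumann harmonic field; and the period $\int_{\Gamma_T}\gradphi=\pm 2\pi$, preserved under the isotopy, forces that field to be a nonzero multiple of $\torfield$. The one place where you genuinely diverge is in how $h$ is produced. The paper simply applies the Friedrichs decomposition $L^2\Harm^1(\ST)=\Harm^1_N(\ST)\oplus L^2\Harm^1_{ex}(\ST)$ to the harmonic field $\gradphi$, which delivers $\gradphi=dh+c\torfield$ in one stroke -- existence of $h$, its uniqueness up to a constant, and the Neumann boundary condition (read off from $\bm{n}\torfield=0$) all come for free, with no separate solvability argument. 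You instead pose the scalar Neumann problem directly and verify its compatibility condition by hand, observing that the normal data integrate to zero over $\partial\ST$ because $\delta\gradphi=0$; this is more elementary and self-contained, at the cost of an extra (correctly executed) step that the decomposition theorem renders unnecessary. One small caution: with the boundary condition written as $\bm{n}dh=-\bm{n}\gradphi$, the field $\gradphi-dh$ has normal trace $2\bm{n}\gradphi$, so your claim that $\bm{n}\eta=0$ ``is exactly the content of the boundary condition'' only holds after flipping the sign to $\bm{n}dh=+\bm{n}\gradphi$ (equivalently $h\mapsto -h$). This sign slip is already present in the proposition and in the paper's own proof (the decomposition $\gradphi=dh+c\torfield$ forces the plus sign), so it is an inherited typo rather than a gap in your argument, but it is worth stating the convention you are actually using.
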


\begin{proof}
With $\ST\cap \mathcal{I}(\{(0,0,Z)\in \R^3\}) =\varnothing $, $\gradphi$ is a well-defined $1$-form on $\ST$. It is closed and co-closed with respect to the Euclidean metric, because $\mathcal{I}$ preserves the Hodge star. Hence, $\gradphi\in \Harm^1(\ST)$ and, by virtue of HD, can be decomposed as $\gradphi = dh + c \torfield$ where $h\in W^1C^\infty(\ST)$ is a harmonic function $\Delta h=0$, $\torfield\in \Harm_N^1(\ST)$ is the unit vacuum toroidal field and $c = \bra \gradphi,\nu_{\varphi}\ket$. Since $\bm{n}\torfield =0$, the harmonic function $h$ must compensate the normal component of $\gradphi$ on the boundary, $\bm{n}dh=-\bm{n}\gradphi$. This provides sufficient boundary conditions to uniquely determine $h$ (up to a constant). The isotopy ensures through the invariance of $2\pi = \int_{\Phi(\torpath)} \gradphi = c \torloop$ that $c\neq 0 $ and thus $||\gradphi-dh||_{L^2}>0$.
\end{proof}

\subsection{Hodge decomposition on a hollow toroidal volume in $\R^3$}
{\nochange
Arguing similarly as for the Hodge decomposition of $k$-forms on solid toroidal volumes, we obtain the following on a hollow torus.
\begin{proposition}
  \label{prop:closed-2form_ht}
  A closed $2$-form $B\in W^1\Omega^2(\HT)$, $dB=0$, on a hollow torus can be expressed as
  \begin{align*}
    B =  dA + c \radfield
  \end{align*}
  where $A\in W^1\Omega^1(\HT)$, and $\radfield \in \Harm_N^2(\HT)$ (the unit vacuum radial flux) is such that $d\radfield=0$, $\delta \radfield=0$, $\bm{n}\radfield=0$ and $|| \radfield||_{L^2}=1$. The coefficient $c=\bra B,\radfield\ket $.
\end{proposition}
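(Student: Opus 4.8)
The plan is to mirror the solid-torus arguments (Propositions~\ref{prop:closed-2form} and~\ref{prop:closed-oneform}), the only structural difference being that $H^2_{dR}(\HT,d)\cong H^2(\HT,\R)\cong\R$ is now one-dimensional rather than trivial, so a single harmonic obstruction to exactness must be carried along. First I would fix $\radfield$ using the Hodge isomorphism theorem (Theorem~\ref{thm:hodge-isomorphism}): since $\Harm_N^2(\HT)\cong H^2_{dR}(\HT,d)\cong\R$, the space of Neumann harmonic $2$-fields is one-dimensional. Properties (1)--(3) in the statement (closed, co-closed, and tangential $\bm{n}\radfield=0$) are precisely the defining conditions of a Neumann harmonic $2$-field, and the unit $L^2$-norm is merely a normalisation; hence such a $\radfield$ exists and is unique up to sign. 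It is the harmonic representative of the generator $\radform=\polgen\wedge\torgen$ of the second absolute cohomology, i.e. the analogue on $\HT$ of the unit vacuum toroidal field $\torfield$ on $\ST$.

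Next I would apply the Hodge--Friedrichs--Morrey decomposition (Section~\ref{sec:hfmd}) to the given closed form, writing $B=dA+\delta\beta+\kappa$ with $\kappa\in\Harm_N^2(\HT)$ and the three summands mutually $L^2$-orthogonal, the co-exact potential $\beta$ carrying the Neumann condition $\bm{n}\beta=0$ (as for the $1$-form $\beta\in W^1\Omega_N^2$ in Proposition~\ref{prop:closed-oneform}). The crucial step is to eliminate the co-exact piece. Since exact forms and harmonic fields are closed, $0=dB=d\delta\beta$; an integration by parts (Green's identity) then gives $||\delta\beta||_{L^2}^2=\bra\delta\beta,\delta\beta\ket=\bra\beta,d\delta\beta\ket+(\text{boundary term})=0$, where the boundary term vanishes on account of $\bm{n}\beta=0$ and the bulk term vanishes because $d\delta\beta=0$. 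Thus $\delta\beta=0$, exactly as in the computation killing $df_D$ and $dh$ in the minimal-gauge proof of Theorem~\ref{thm:minimal_gauge}.

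It then remains to identify the harmonic remainder. Because $\Harm_N^2(\HT)=\mathrm{span}\{\radfield\}$, I can write $\kappa=c\radfield$ for some $c\in\R$. Orthogonality of the decomposition pins down $c$: since $\radfield$ is co-closed and Neumann, Green's identity gives $\bra dA,\radfield\ket=\bra A,\delta\radfield\ket+\int_{\partial\HT}\bm{t}A\wedge\star\bm{n}\radfield=0$ for every $A$, so exact forms are $L^2$-orthogonal to $\radfield$ irrespective of any boundary condition on $A$; taking the inner product of $B=dA+c\radfield$ with $\radfield$ then yields $c=\bra B,\radfield\ket$. Taking this $A$ as the potential $1$-form completes the decomposition $B=dA+c\radfield$ with $c=\bra B,\radfield\ket$.

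I expect the main obstacle to be the vanishing of the co-exact term $\delta\beta$: this rests on matching the correct (Neumann) boundary condition carried by the HFMD potential so that the Green's-identity boundary term drops out. A secondary point requiring care is the orthogonality $\bra dA,\radfield\ket=0$, which guarantees that the projection coefficient is genuinely $\bra B,\radfield\ket$ and is not contaminated by the exact part; both facts follow from the co-closedness and tangency of $\radfield$ established in the first step.
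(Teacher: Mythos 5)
Your proposal is correct and follows essentially the same route as the paper: the paper's proof is a one-line appeal to $H^2_{dR}(\HT,d)\cong\R=\text{span}\{[\radfield]\}$, which via the Hodge isomorphism (Theorem~\ref{thm:hodge-isomorphism}) is exactly the HFMD argument you spell out — every closed $2$-form is cohomologous to its Neumann harmonic representative, and the projection coefficient is $\bra B,\radfield\ket$ because exact forms are $L^2$-orthogonal to co-closed tangential fields. Your version simply makes explicit the elimination of the co-exact piece and the identification of the coefficient, both of which the paper leaves implicit in the cited isomorphism.
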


\begin{proof}
  The second de Rham cohomology group of $\HT$ is one-dimensional $\R \cong H^2(\HT,\R)\cong H_{dR}^2(\HT,d)=\text{span}\{[\radfield]\}$. 
\end{proof}


\begin{proposition}
  \label{prop:radintegral}
  The integral of a closed $2$-form, $B\in W^1\Omega^2(\HT)$ with $dB=0$, over a toroidal surface $\radform\in [\Phi(T^2)]\in H_2(\HT)$ is equal to
  \begin{equation*}
    \int_{\radform} B =  \bra B, \radfield \ket \radflux,
  \end{equation*}
  where $\radflux\neq 0\in \R$ is a fixed coefficient that depends smoothly on the embedding (through the normalisation of $\radfield$).
\end{proposition}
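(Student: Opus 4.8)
The plan is to mirror the proof of Proposition~\ref{prop:flux-vac}, replacing the poloidal cut and the relative cohomology of the solid torus by the toroidal surface and the \emph{absolute} second cohomology of the hollow torus. First I would invoke Proposition~\ref{prop:closed-2form_ht} to write $B = dA + c\,\radfield$ with $c = \bra B,\radfield\ket$, so that $B$ represents the class $[B] = c\,[\radfield]$ in the one-dimensional group $H^2_{dR}(\HT,d)$.

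Next, since $\radform$ is a closed $2$-cycle (no boundary) homologous to $\Phi(T^2)$ and $B$ is closed, Stokes' theorem guarantees that $\int_{\radform}B = \int_{\Phi(T^2)}B$, and upon substituting the decomposition the exact part contributes nothing because $\Phi(T^2)$ is a closed surface, $\int_{\Phi(T^2)}dA = \int_{\partial\Phi(T^2)}A = 0$. This reduces the claim to
\begin{align*}
  \int_{\radform} B = c\int_{\Phi(T^2)}\radfield = \bra B,\radfield\ket\,\radflux,
  \qquad \radflux := \int_{\Phi(T^2)}\radfield.
\end{align*}
To see that $\radflux$ is a purely geometric constant, I would pull back to the reference manifold: $\Phi^*\radfield$ is closed, hence cohomologous on $[0,1]\times S^1\times S^1$ to a multiple of the generator $\polgen\wedge\torgen$ of $H^2_{dR}$, and the normalisation $\int_{T^2}(\polgen\wedge\torgen)=1$ recovers $\radflux = \int_{T^2}\Phi^*\radfield$, consistent with the definition above. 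Since $\radfield$ is the generator of $\Harm_N^2(\HT)$, unique up to sign, this coefficient is fixed by the embedding alone and inherits its smooth dependence on $\Phi$.

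I expect the only delicate point to be the nonvanishing $\radflux\neq 0$. The cleanest route is the nondegeneracy of the de Rham pairing between $H^2_{dR}(\HT,d)\cong\R$ and $H_2(\HT)\cong\R$: because $[\radfield]$ is a nontrivial Neumann harmonic class and $[\Phi(T^2)]$ generates the one-dimensional second homology, their pairing $\radflux$ cannot vanish, for otherwise $[\radfield]$ would annihilate all of $H_2(\HT)$ and therefore be zero, contradicting $[\radfield]\neq 0$. Equivalently, $\Phi^*$ is an isomorphism on cohomology, so $[\Phi^*\radfield]$ is nonzero and its coefficient against $\polgen\wedge\torgen$ must be nonzero. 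With $\radflux\neq 0$ established, the identity $\int_{\radform}B = \bra B,\radfield\ket\,\radflux$ follows.
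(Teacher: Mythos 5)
Your proof is correct and follows essentially the same route as the paper's: decompose $B$ into an exact part plus $\bra B,\radfield\ket\,\radfield$ via Proposition~\ref{prop:closed-2form_ht}, observe that the exact part integrates to zero over the closed surface, and identify $\radflux$ as the pairing of $\radfield$ with the generator of $H_2(\HT)$ by pulling back to $T^2$. Your explicit nondegeneracy argument for $\radflux\neq 0$ is a welcome detail that the paper's proof leaves implicit.
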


\begin{proof}
  The closed $2$-form $B$ is cohomologous to $c\radfield$ where $c=\bra B,\radfield\ket$.
  \begin{align*}
    \int_{\radform} B = c \int_{\Phi(T^2)} \radfield.
  \end{align*}
  The pullback of the unit vacuum radial flux is closed and is thus cohomologous to $\radflux \radform$, with $\radflux \in \R$ computed by
  \begin{align*}
   \radflux = \int_{T^2} \Phi^*(\radfield) = \int_{\Phi(T^2)} \radfield.
 \end{align*}
 Because $\radfield$ is unique (up to a sign), the factor $\radflux$ depends only on the embedding.
\end{proof}

\begin{corollary}
  \label{prop:ht-closedexact}
  Closed Dirichlet $2$-forms are exact (in the absolute de Rham cohomology), i.e $B\in W^1\Omega_D^2(\HT)$, $dB=0$ and $\bm{t}B=0 \iff B=dA$ for some $A\in W^1\Omega^1(\HT)$.
\end{corollary}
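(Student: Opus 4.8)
The plan is to reduce the statement to showing that the single harmonic obstruction of Proposition~\ref{prop:closed-2form_ht} is switched off by the Dirichlet condition. The reverse implication needs nothing: if $B=dA$ then $dB=0$ since $d^2=0$, while $B\in W^1\Omega_D^2(\HT)$ already carries $\bm{t}B=0$ by hypothesis. For the forward implication I would invoke Proposition~\ref{prop:closed-2form_ht} to write the closed form as $B=dA+c\,\radfield$ with $c=\bra B,\radfield\ket$; the entire content of the corollary is then the vanishing of this coefficient $c$, since $c=0$ makes $B=dA$ exact in the absolute cohomology.

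To pin down $c$ I would apply Proposition~\ref{prop:radintegral} to $B$ itself, which evaluates the period over a toroidal surface $\radform\in[\Phi(T^2)]\in H_2(\HT)$ as $\int_{\radform}B=\bra B,\radfield\ket\,\radflux=c\,\radflux$. Since $\radflux\neq 0$, it suffices to show that this period vanishes, and here I would exploit the freedom in choosing the representative of the homology class. Because $\HT$ is diffeomorphic to $[0,1]\times S^1\times S^1$, the boundary torus $T^2_0=\Phi(\{0\}\times S^1\times S^1)$ already generates $H_2(\HT)\cong\R$; as $B$ is closed, its period is homology-invariant, so I may take $\radform$ to be this boundary component and compute $\int_{\radform}B=\int_{T^2_0}B$.

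The final step is the observation that integration of $B$ over a two-dimensional submanifold of $\partial\HT$ sees only the tangential restriction: pulling $B$ back along the inclusion $T^2_0\hookrightarrow\HT$ yields precisely $\bm{t}B$, which is zero by the Dirichlet hypothesis, so $\int_{T^2_0}B=0$. Hence $c\,\radflux=0$, and with $\radflux\neq0$ we obtain $c=0$ and $B=dA$. The only genuinely non-routine point is this topological manoeuvre of representing the second-homology generator by a boundary torus; once that is granted, the Dirichlet condition annihilates the period for free. As a cross-check, the same conclusion follows abstractly from the long exact sequence of the pair $(\HT,\partial\HT)$: the restriction map $H^2(\HT)\to H^2(\partial\HT)$ sends the generator $[\radfield]$ to a class with nonzero period $\radflux$ over the boundary torus $T^2_0$, hence is injective, forcing the preceding map $H^2(\HT,\partial\HT)\to H^2(\HT)$ to be the zero map.
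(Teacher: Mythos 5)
Your argument is correct and follows essentially the same route as the paper: deform the generating surface of $H_2(\HT)$ to the boundary torus, use $\bm{t}B=0$ to kill the period, conclude $\bra B,\radfield\ket=0$ via Proposition~\ref{prop:radintegral}, and then invoke Proposition~\ref{prop:closed-2form_ht}. The long-exact-sequence cross-check is a nice addition but not needed.
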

\begin{proof}
  The surface $\radform\sim \Phi(T^2_0)$ is homologous to the interior (or exterior) boundary on which the flux evaluates as the zero form. Thus, $\int_{\radform} B = \int_{\Phi(T^2_0)} \cancel{\bm{t} B} = 0$ and $\bra B, \rho\ket = 0 $ by proposition \ref{prop:radintegral}. The claim follows then from proposition \ref{prop:closed-2form_ht}.
\end{proof}
}

Proposition \ref{prop:closed-2form_ht} shows that a divergence-free vector field on a hollow toroidal volume is not necessarily the curl of a vector potential. Yet, from a physical perspective, the unit vacuum radial flux $\radfield$ cannot describe a real magnetic field. A contradiction with Gauss's law for magnetism over the whole of $\R^3$ arises by the fact that, as per Proposition~\ref{prop:radintegral}, its flux $\int_{\Phi(T_0^2)} \radfield = \Psi_R \neq 0$ through the interior boundary of the hollow torus does not vanish. But, this surface bounds a solid toroidal volume (the part that is removed to create the hollow torus), and if a physical magnetic field were to exist in there, its flux through the boundary would vanish. In other words, the unit vacuum radial flux, although mathematically admissible as solution to the boundary value problem, is unrealisable as a magnetic field. From this physical consideration, it is concluded that the magnetic field on a hollow toroidal volume derives from a vector potential, and that the magnetic flux belongs to the subspace of exact $2$-forms on $\HT$.

{\nochange 
\begin{proposition}
  \label{prop:vacuum-field-ht}
  The space of Neumann harmonic $1$-fields $\Harm_N^1(\HT)$ is a two-dimensional vector space, spanned by the pair of basis vectors $\{\torfield,\polfield\}$ called the \emph{unit vacuum poloidal and toroidal fields}, defined uniquely (up to signs) by the following properties. Letting indices $i,j\in \{\varphi,\vartheta\}$, the basis vectors are
  \begin{enumerate}
  \item closed and co-closed, $d\nu_i=0$ and $\delta\nu_i=0$;
  \item everywhere tangential to the boundary, $\bm{n}\nu_i=0$;
  \item $L^2$-orthonormal, $\bra \nu_i,\nu_j\ket = \delta_{ij}$;
  \item $\int_{\Gamma_T} \polfield = 0$ and $\int_{\Gamma_P} \torfield = 0 $, where $\Gamma _T\in[\Phi(\torpath)]$ and $\Gamma _P\in [\Phi(\polpath)]$ are closed toroidal and poloidal paths with $H_1(\HT)=\text{span}\{[\Phi(\torpath)],[\Phi(\polpath)]\}$.
  \end{enumerate}
  The unit vacuum poloidal and toroidal fields are purely geometric objects and depend smoothly on the embedding $\Phi$.
\end{proposition}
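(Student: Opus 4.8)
The plan is to promote the solid-torus argument of Proposition \ref{prop:vacuum-field} from one to two dimensions. First I would invoke the Hodge isomorphism (Theorem \ref{thm:hodge-isomorphism}) together with the computed absolute cohomology to write $\Harm_N^1(\HT) \cong H^1_{dR}(\HT,d) \cong H^1(\HT,\R) \cong \R^2$, which fixes the dimension. Exactly as in the solid-torus case, properties 1 and 2 are precisely the defining conditions of a Neumann harmonic $1$-field (closed, co-closed, vanishing normal trace), so every element of $\Harm_N^1(\HT)$ satisfies them; the substance of the statement lies in singling out and normalising a distinguished basis.

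Next I would exploit the period pairing. Because $H_1(\HT) = \mathrm{span}\{[\Phi(\torpath)],[\Phi(\polpath)]\}$, de Rham's theorem (equivalently the non-degeneracy behind Corollary \ref{thm:poincare-duality}) makes the period map $\omega \mapsto (\int_{\Gamma_T}\omega,\ \int_{\Gamma_P}\omega)$ a linear isomorphism $\Harm_N^1(\HT)\to\R^2$, so the functionals $\ell_T=\int_{\Gamma_T}(\cdot)$ and $\ell_P=\int_{\Gamma_P}(\cdot)$ are independent. Property 4 then confines $\torfield$ to the one-dimensional kernel $\ker\ell_P$ and $\polfield$ to $\ker\ell_T$; these lines are distinct, so they already span the space, and the unit-norm half of property 3 fixes each vector up to a sign. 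This settles the dimension, the characterisation, the scaling, and the uniqueness up to signs of the selected basis; what remains is to confirm that it is orthonormal.

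The step I expect to be the crux is the off-diagonal part of property 3, the mutual orthogonality $\langle\torfield,\polfield\rangle=0$, since a priori the period-selected lines $\ker\ell_P$ and $\ker\ell_T$ need not be $L^2$-orthogonal. I would make this tractable by rewriting the inner product cohomologically as $\langle\torfield,\polfield\rangle=\int_{\HT}\torfield\wedge\star\polfield$, noting that $\star\polfield$ is a Dirichlet harmonic $2$-field because $\bm{t}\star\polfield=\star\bm{n}\polfield=0$. Decomposing $[\torfield]$ against the toroidal generator $\torgen$ and $[\star\polfield]$ against the relative generators $\toroform=\radgen\wedge\torgen$ and $\poloform=\radgen\wedge\polgen$ in $H^2_{dR}(\HT,\partial\HT,d)$, and using the volume normalisation (\ref{eq:vol-ht}), the pairing collapses to a single cross term proportional to $\int_{\polocut}\star\polfield$, the flux of $\star\polfield$ through the poloidal cut. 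Orthogonality is thus equivalent to the vanishing of that flux, which I would analyse through the Poincaré–Lefschetz pairing of the toroidal class of $\torfield$ against the relative class carried by $\star\polfield$.

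I expect this cross term to be the genuine obstacle: unlike the dimension count and the period selection, it is not purely topological but records how the Hodge dual of the poloidal field threads the poloidal cut, a quantity governed by the metric induced by $\Phi$. Establishing its vanishing (or, more honestly, the compatibility of properties 3 and 4) is therefore the delicate point, and it is also what makes the basis genuinely geometric. Granting it, smooth dependence on $\Phi$ follows as in Proposition \ref{prop:vacuum-field}: the induced metric varies smoothly, the harmonic fields solve an elliptic boundary value problem depending smoothly on its coefficients, and the period and normalisation constraints cut out the basis smoothly away from the sign ambiguity.
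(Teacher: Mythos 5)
Your proposal follows essentially the same route as the paper: Hodge isomorphism plus the computed cohomology for the dimension count, the period functionals to single out the two lines, unit norm for the scaling, and the reduction of $\bra\torfield,\polfield\ket$ to the cross-flux $\torloop\int_{\Sigma_P}\star\polfield$ via the wedge-product/flux decomposition is exactly the computation that appears in the paper's proof. The one point worth flagging is that the step you identify as the crux --- whether the mutual orthogonality in property 3 is actually compatible with the period conditions in property 4 --- is not resolved in the paper either: the paper simply imposes orthogonality as part of the definition and reads off $\kappa_\varphi=\kappa_\vartheta=0$ (the vanishing of the cross-fluxes) as a consequence, without arguing that the two period-selected unit vectors are $L^2$-orthogonal for a general embedding. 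So your concern is legitimate and your honesty about it puts your write-up slightly ahead of the paper's on this point, but it does not represent a divergence in method.
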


\begin{proof}
  $\Harm_N^1(\HT)\cong H^1(\HT,\R) \cong \R^2$. The first two properties define any Neumann harmonic $1$-field. The last condition states that $\Phi^*\polfield = \polloop \polgen$ and $\Phi^*\torfield =\torloop \torgen$, where $\polloop = \int_{\Gamma_P}\polfield$ and $\torloop = \int_{\Gamma_T}\torfield$. The Hodge star of Neumann harmonic $1$-fields are Dirichlet $2$-fields. We list the pullbacks $\Phi^*\star \polfield = \polflux \toroform + \kappa_\vartheta \poloform$ and $\Phi^*\star\torfield = \kappa_\varphi \toroform + \torflux \poloform$. The orthogonality condition ensures that
  \begin{align*}
    0 &= \int\limits_{\HT} \polfield\wedge\star \torfield
    =\int\limits_{[0,1]\times S^1\times S^1}
    \polloop \left(\kappa_\varphi \polgen \wedge\toroform
      + \torflux \cancel{\polgen \wedge \poloform} \right)\\
    &= -\polloop \kappa_\varphi \iff \kappa_\varphi = 0 \\
0 &= \int\limits_{\HT} \torfield\wedge\star \polfield
    = \int\limits_{[0,1]\times S^1\times S^1}
    \torloop\left(\polflux \cancel{\torgen \wedge \toroform}
      + \kappa_\vartheta \torgen \wedge\poloform \right)\\
    &= \torloop \kappa_\vartheta \iff \kappa_\vartheta= 0,
  \end{align*}
  which implies that $\int_{\Sigma_T} \star \torfield = \kappa_\varphi = 0$ and $\int_{\Sigma_P} \star \torfield = \kappa_\vartheta = 0$ where $\Sigma_P\in[\Phi(\polocut)]$ is a poloidal cut and $\Sigma_T\in[\Phi(\toribbon)]$ is a toroidal annulus (or ribbon) with $H_2(\HT,\partial \HT)=\text{span}\{[\Phi(\toribbon)],[\Phi(\polocut)]\}$. Finally,
  \begin{align*}
    1 &= \int\limits_{\HT} \polfield\wedge\star \polfield
    = \int\limits_{[0,1]\times S^1\times S^1}\polloop\polflux \ \polgen\wedge\toroform
    = - \polloop\polflux\\
    1 &= \int\limits_{\HT} \torfield\wedge\star \torfield
    = \int\limits_{[0,1]\times S^1\times S^1}
    \torloop\torflux\ \torgen \wedge\poloform
    = \torloop \torflux,
  \end{align*}
  where $\polflux = \int_{\Sigma_T} \star \polfield$ and $\torflux = \int_{\Sigma_P}\star \torfield$, sets the normalisation with respect to the manifold's metric.
\end{proof}

\begin{proposition}
  \label{prop:hd-oneform-ht}
  All $1$-forms $A \in L^2\Omega^1(\HT)$ can be orthogonally decomposed as
  \begin{align*}
    A = \delta \beta + df_D + dh + c_\varphi \torfield + c_\vartheta\polfield,
  \end{align*}
  where $\beta\in W^1\Omega_N^2(\HT)$, $f_D\in  W^1C^\infty(\HT)$ is a smooth Dirichlet function on $\HT$ such that $f_D|_{\partial \HT}=0$, $h\in  W^1C^\infty(\HT)$ is a harmonic function such that $\Delta h = 0$, and $\torfield$ and $\polfield$ are the unit vacuum toroidal and poloidal fields respectively which span the space of Neumann harmonic fields $\Harm^1_N(\HT)$. The projection coefficients are $c_\varphi=\bra A,\torfield\ket$ and $c_\vartheta=\bra A,\polfield\ket$.
\end{proposition}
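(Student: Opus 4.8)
The plan is to mirror the proof of Proposition~\ref{prop:closed-oneform} for the solid torus: apply the Hodge-Friedrichs-Morrey decomposition (HFMD) to $A\in L^2\Omega^1(\HT)$ and then identify each summand using the cohomological data already assembled. First I would invoke HFMD to split $A$ orthogonally into a co-exact part $\delta\beta$ with $\beta\in W^1\Omega_N^2(\HT)$, an exact part $df_D$ with a Dirichlet potential $f_D|_{\partial\HT}=0$, and a remainder lying in the space of harmonic $1$-fields $\Harm^1(\HT)$.

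The second step is to resolve this harmonic remainder. The harmonic $1$-fields split orthogonally into the exact harmonic fields and the Neumann harmonic fields $\Harm_N^1(\HT)$. The exact harmonic contribution is exactly $dh$ with $h$ harmonic: indeed $d(dh)=0$ and, since $\delta h=0$ on functions, $\delta(dh)=\delta dh=\Delta h=0$, so any $dh$ with $\Delta h=0$ is a harmonic field that is exact yet not captured by the Dirichlet piece $df_D$, because its potential need not vanish on $\partial\HT$. By Proposition~\ref{prop:vacuum-field-ht}, the Neumann harmonic part lives in the two-dimensional space orthonormally spanned by $\torfield$ and $\polfield$, so it equals $c_\varphi\torfield+c_\vartheta\polfield$. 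Projecting $A$ onto each basis vector and using orthonormality then yields $c_\varphi=\bra A,\torfield\ket$ and $c_\vartheta=\bra A,\polfield\ket$.

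The argument is essentially routine once HFMD and Proposition~\ref{prop:vacuum-field-ht} are in hand; the only point demanding care is the orthogonality of all five summands. I would verify, via Green's formula, that $\bra df_D,dh\ket=\bra f_D,\Delta h\ket=0$ since $f_D|_{\partial\HT}=0$ and $\Delta h=0$, and that $\torfield,\polfield$ are $L^2$-orthogonal to every exact and co-exact form because they are closed, co-closed and tangential ($\bm{n}\nu_i=0$). The genuinely new feature compared with the solid-torus case is that $\Harm_N^1(\HT)$ is two-dimensional rather than one-dimensional, so the main thing to confirm is simply that the pair $\{\torfield,\polfield\}$ of Proposition~\ref{prop:vacuum-field-ht} exhausts the non-exact harmonic part of the decomposition. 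I anticipate no real obstacle: the result follows immediately from the decomposition theorem together with $H^1(\HT,\R)\cong\R^2$.
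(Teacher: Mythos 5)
Your proposal is correct and follows exactly the paper's route: the paper's own proof is the one-line observation that the result is immediate from HFMD together with Proposition~\ref{prop:vacuum-field-ht}, and your elaboration (Friedrichs splitting of the harmonic remainder into the exact harmonic piece $dh$ and the two-dimensional Neumann harmonic space spanned by $\torfield,\polfield$, with orthogonality checked via Green's formula) is precisely the reasoning the paper leaves implicit.
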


\begin{proof}
  The decomposition of $1$-forms on $\HT$ immediately follows from HFMD and proposition \ref{prop:vacuum-field-ht}.
\end{proof}

\begin{theorem}[Minimal gauge on a hollow torus]
  \label{thm:minimal_gauge-ht}
  Let $B=dA\neq 0$ be an exact $2$-form on $\HT$. The potential $1$-form $A$ can be chosen uniquely to be minimal in terms of its $L^2$-norm by requiring that it be co-closed $\delta A=0$, tangential $\bm{n}A=0$ and whirl-free, i.e. $\bra A,\torfield\ket = 0$ and $\bra A,\polfield \ket=0$.  In this gauge, $A$ carries the minimal amount of physical information about $B$.
\end{theorem}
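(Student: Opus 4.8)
The plan is to mirror the proof of Theorem~\ref{thm:minimal_gauge} for the solid torus, with the only structural change being that the Hodge decomposition of Proposition~\ref{prop:hd-oneform-ht} now carries \emph{two} Neumann-harmonic coefficients, so the single whirl-free condition is replaced by a pair. First I would write the decomposition $A = \delta\beta + df_D + dh + c_\varphi\torfield + c_\vartheta\polfield$ and apply the exterior derivative. Since $df_D$ and $dh$ are exact and the vacuum fields $\torfield,\polfield$ are closed, every term but the first is annihilated, leaving $B = dA = d\delta\beta$. I would then observe, exactly as in the solid case, that $B \neq 0$ holds if and only if $\delta\beta \neq 0$.

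Next, by the $L^2$-orthogonality of the Hodge decomposition, the squared norm splits into five nonnegative contributions,
\begin{align*}
  ||A||^2_{L^2} = ||\delta\beta||^2_{L^2} + ||df_D||^2_{L^2} + ||dh||^2_{L^2} + c_\varphi^2 + c_\vartheta^2 .
\end{align*}
The strategy is to show that each of the three gauge contributions is forced to vanish by one of the three imposed conditions, leaving only $||\delta\beta||^2_{L^2}$, which is fixed by $B$ and is therefore the minimal admissible value. Imposing co-closedness $\delta A = 0$ eliminates the Dirichlet-exact part, $df_D = 0$, via Proposition~\ref{prop:closed-exact}, since $\delta\beta$, $dh$ and the harmonic fields are already co-closed. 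Imposing tangentiality $\bm{n}A = 0$ then forces $\bm{n}dh = 0$; integrating by parts gives $||dh||^2_{L^2} = \bra h,\Delta h\ket$ plus a boundary term in $\bm{n}dh$, both of which vanish because $h$ is harmonic and $\bm{n}dh = 0$, so $dh = 0$. I would remark that this step is insensitive to the fact that $\partial\HT$ has two connected components: the boundary integral runs over all of $\partial\HT$ and the domain is connected, so $h$ is globally constant. Finally, the two whirl-free conditions $\bra A,\torfield\ket = 0$ and $\bra A,\polfield\ket = 0$ give $c_\varphi = c_\vartheta = 0$ directly, using that $\{\torfield,\polfield\}$ is an $L^2$-orthonormal basis of $\Harm^1_N(\HT)$ by Proposition~\ref{prop:vacuum-field-ht}. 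With all gauge terms removed, $A = \delta\beta$ is unique and attains the minimal norm compatible with the prescribed $B$.

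I do not anticipate a genuine obstacle, since the argument is structurally identical to the solid-torus case and reduces to bookkeeping the two-dimensional Neumann-harmonic space. The one point deserving care — and the closest thing to a subtlety — is the role of the \emph{exactness} hypothesis on $B$. For the solid torus all closed $2$-forms are exact (Proposition~\ref{prop:closed-2form}), so no assumption is needed; on the hollow torus a closed $2$-form instead decomposes as $dA + c\radfield$ by Proposition~\ref{prop:closed-2form_ht}, and only the exact summand can be realised as $d\delta\beta$. Assuming $B = dA$ from the outset is precisely what confines the analysis to this exact sector and removes the radial-flux ambiguity peculiar to the hollow topology; this is the one place where the hollow case differs materially from the solid one.
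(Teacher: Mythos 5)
Your proposal is correct and follows essentially the same route as the paper: decompose $A$ via Proposition~\ref{prop:hd-oneform-ht}, use $L^2$-orthogonality to split the norm into five terms, and kill the three gauge contributions with the three imposed conditions (co-closedness via Proposition~\ref{prop:closed-exact}, tangentiality via the integration-by-parts argument for $dh$, and the two whirl-free conditions for $c_\varphi,c_\vartheta$). Your closing remark on why the exactness hypothesis is needed on $\HT$ — to exclude the $\radfield$ summand of Proposition~\ref{prop:closed-2form_ht} — is a point the paper makes in the surrounding discussion rather than in the proof itself, but it is accurate.
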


\begin{proof}
  The proof is essentially the same as for theorem \ref{thm:minimal_gauge-ht}. By proposition \ref{prop:hd-oneform-ht}, $A=\delta \beta + df_D + dh + c_\vartheta \polfield + c_\varphi\torfield$. By orthogonality of the decomposition, the $L^2$-norm squared of $A$ is the sum of positive quantities
  \begin{align*}
    ||A||^2_{L^2} = ||\delta \beta||_{L^2}^2 + || df_D||_{L^2}^2 + || dh||_{L^2}^2 + c_\vartheta^2 + c_\varphi^2 > 0
  \end{align*}
  If $\delta A=0$, then by proposition \ref{prop:closed-exact} $df_D=0$. If $A$ is tangential, $0=\bm{n}A=\bm{n}dh$, but then $||dh||_{L^2}=0$ and so $dh=0$. Finally, provided that $0 = \bra A,\torfield\ket = c_\varphi$ and $0=\bra A,\polfield\ket = c_\vartheta$, $||A||^2_{L^2} = ||\delta\beta||_{L^2}^2$ is the minimal possible $L^2$-norm for which $B\neq0$.
\end{proof}

\begin{theorem}
  \label{thm:product-wedge-ht}
  Let $s\in W^1\Omega^1(\HT)$ be a closed $1$-form, $ds=0$ and $B\in W^1\Omega_D^2(\HT)$ be a closed Dirichlet $2$-form, $dB=0$ and $\bm{t}B=0$. Then,
  \begin{align*}
\int_{\HT} s\wedge B =     \bra s,\star^{-1} B\ket
    = \int_{\Gamma_T} s \int_{\Sigma_P}B - \int_{\Gamma_P}s \int_{\Sigma_T}B
  \end{align*}
  where $\Gamma_T\in [\Phi(\torpath)]$ and $\Gamma_P\in [\Phi(\polpath)]$ are closed toroidal and poloidal paths, $\Sigma_P\in[\Phi(\polocut)]$ is a poloidal cut and $\Sigma_T\in[\Phi(\toribbon)]$ is a toroidal annulus (or ribbon).
\end{theorem}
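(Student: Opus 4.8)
The plan is to follow the same homological strategy as in the proof of Theorem~\ref{thm:product-wedge}, exploiting that the exterior product of closed forms descends to the cup product in (relative) de Rham cohomology, $[\alpha]\smile[\beta]\equiv[\alpha\wedge\beta]$, and then to reduce everything to the explicit generators constructed on $[0,1]\times S^1\times S^1$. The essential new feature compared with the solid torus is that both the first absolute cohomology and the second relative cohomology of $\HT$ are two-dimensional, so each pullback now expands into two terms and the answer is a bilinear — and, as it turns out, antisymmetric — pairing rather than a single product.

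First I would pull back $s$. Since $\Phi^*s$ is closed, it is cohomologous on $[0,1]\times S^1\times S^1$ to a combination $l_T\torgen+l_P\polgen$ of the first absolute cohomology generators, the coefficients being recovered by pairing against the homology generators,
\begin{align*}
  l_T=\int_{\Gamma_T}s=\int_{\torpath}\Phi^*s, \qquad
  l_P=\int_{\Gamma_P}s=\int_{\polpath}\Phi^*s.
\end{align*}
Next I would pull back $B$. Exactly as in the solid case one checks that $\Phi^*B$ is closed and Dirichlet, using that $\bm{t}$ commutes with pullback together with $\bm{t}B=0$; it is therefore cohomologous (relative to the boundary) to a combination $\tilde l_P\poloform+\tilde l_T\toroform$ of the second relative cohomology generators, with coefficients
\begin{align*}
  \tilde l_P=\int_{\Sigma_P}B=\int_{\polocut}\Phi^*B, \qquad
  \tilde l_T=\int_{\Sigma_T}B=\int_{\toribbon}\Phi^*B.
\end{align*}

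The final step is to wedge the two representatives and integrate over $[0,1]\times S^1\times S^1$. Here I would use the products already recorded in the cohomology section: the diagonal terms vanish, $\torgen\wedge\toroform=0$ and $\polgen\wedge\poloform=0$, since each contains a repeated $\torgen$ or $\polgen$ factor, while the two surviving cross terms reduce to the generator of $H^3$ via $\torgen\wedge\poloform=\radgen\wedge\polgen\wedge\torgen$ and $\polgen\wedge\toroform=-\radgen\wedge\polgen\wedge\torgen$. Integrating with equation~(\ref{eq:vol-ht}) then gives
\begin{align*}
  \int_{\HT}s\wedge B
  =l_T\tilde l_P-l_P\tilde l_T
  =\int_{\Gamma_T}s\int_{\Sigma_P}B-\int_{\Gamma_P}s\int_{\Sigma_T}B,
\end{align*}
as claimed, the identification with $\bra s,\star^{-1}B\ket$ following from the definition of the $L^2$ inner product as in Theorem~\ref{thm:product-wedge}. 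I expect the only delicate point to be the sign bookkeeping in this last step: the relative minus sign is precisely the antisymmetry of the intersection pairing between the toroidal and poloidal cycles, and it hinges on the anticommutation identity $\poloform\wedge\torgen=-\toroform\wedge\polgen$ established earlier. As in Theorem~\ref{thm:product-wedge}, replacing $\HT$ by $\Phi([0,1]\times S^1\times S^1)$ for the purpose of integration requires $\Phi$ to be orientation-preserving, on pain of an overall sign.
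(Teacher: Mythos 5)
Your proposal is correct and follows essentially the same route as the paper's proof: pull back to $[0,1]\times S^1\times S^1$, expand $\Phi^*s$ and $\Phi^*B$ in the two-dimensional generator bases of first absolute and second relative cohomology with coefficients given by the loop and surface integrals, and evaluate the wedge product using the vanishing of the diagonal terms and the sign identity $\polgen\wedge\toroform=-\radgen\wedge\polgen\wedge\torgen$ together with equation~(\ref{eq:vol-ht}). The sign bookkeeping and the orientation caveat on $\Phi$ match the paper's argument exactly.
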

}
\begin{proof}
  The pullback $\Phi^*s\in [l_\vartheta\polgen + l_\varphi \torgen]\in H^1_{dR}([0,1]\times S^1\times S^1,d)$ with
  \begin{align*}
    l_\vartheta &= \int_{\Gamma_P} s = \int_{\polpath}\Phi^* s, &
    l_\varphi &= \int_{\Gamma_T} s =\int_{\torpath}\Phi^*s.
  \end{align*}
  The pullback $\Phi^*B\in [\tilde{l}_\varphi\toroform + \tilde{l}_\vartheta \poloform]\in H^2_{dR}([0,1]\times S^1\times S^1,\{0\}\times S^1\times S^1\cup \{1\}\times S^1\times S^1,d)$ with
  \begin{align*}
    \tilde{l}_\vartheta &= \int_{\Sigma_P} B=\int_{\polocut} \Phi^*B , &
   \tilde{l}_\varphi &= \int_{\Sigma_T} B =
  \int_{\toribbon}\Phi^*B 
  \end{align*}
  Then, recalling equation (\ref{eq:vol-ht}),
  \begin{align*}
    \int_{HT}s\wedge B&
    = \int\limits_{[0,1]\times S^1\times S^1} \Phi^*(s)\wedge \Phi^*(B) \\
    &= \int\limits_{[0,1]\times S^1\times S^1} \left(
      l_\vartheta\tilde{l}_\varphi \polgen\wedge\toroform
      + l_\varphi\tilde{l}_\vartheta\torgen\wedge\poloform
    \right)\\
    &= (l_\varphi\tilde{l}_\vartheta- l_\vartheta \tilde{l}_\varphi)\int\limits_{[0,1]\times S^1\times S^1} \radgen\wedge\polgen\wedge\torgen\\
    &= l_\varphi\tilde{l}_\vartheta- l_\vartheta \tilde{l}_\varphi
    = 
     \begin{pmatrix}
    l_\varphi & l_\vartheta
    \end{pmatrix}
    J
    \begin{pmatrix}
    \tilde{l}_\varphi \\ \tilde{l}_\vartheta
    \end{pmatrix},
  \end{align*}
  where $J=
    \begin{pmatrix}
    0 & 1 \\
    -1 & 0
    \end{pmatrix}$. The first step, where $\HT$ is replaced by $\Phi([0,1]\times S^1\times S^1)$ for the purpose of integration over the whole volume, is justified by the assumption that $\Phi$ preserves orientation with respect to the standard volume form on $\R^3$.
\end{proof}

The independence of this result on the embedding is highlighted by the following construction. Let $\Phi_1$ and $\Phi_2$ be two embeddings of the hollow torus describing the same manifold $\HT$, namely the images $\Phi^1([0,1]\times S^1\times S^1)=\Phi^2([0,1]\times S^1\times S^1)=\HT$ describe the same hollowed toroidal volume in $\R^3$. The diffeomorphism $\Phi_1\circ \Phi_2^{-1}:\HT\to\HT$ induces the mapping $A:H_1(\HT,\Z)\cong \Z^2\to H_1(\HT)\cong \Z^2$ on the first homology with integer coefficients such that toroidal and poloidal paths (generators) may be related by 
\begin{align*}
    \begin{pmatrix}
    [\Phi^1(\torpath)]\\
    [\Phi^1(\polpath)]
    \end{pmatrix}
    &=\begin{pmatrix}
    A_{11} & A_{12} \\
    A_{21} & A_{22}
    \end{pmatrix}
    \begin{pmatrix}
[\Phi^2(\torpath)]\\ 
[\Phi^2(\polpath) ]
    \end{pmatrix},& 
     A_{ij}\in \Z
    \end{align*}
The inverse of the matrix $A\in GL(2,\Z)$ is induced by the diffeomorphism $\Phi_2\circ \Phi_1^{-1}$ so that $A^{-1}\in GL(2,\Z)$ must also have integer coefficients. This is possible if and only if $\det(A)=\det(A^{-1})=\pm 1$. Indeed, $1=\det(A A^{-1}) = \det(A)\det(A^{-1})$ where $\det(A)\in \Z$ and $\det(A^{-1})\in \Z$. We note that in $2$-dimension $A^T JA = \det(A) J$, so that $\det(A)=\pm 1$ is equivalent to $A$ being (pseudo)-symplectic. A negative sign indicates here a reversal of the "handedness" or \emph{intersection number} of the bases.

Let $[\Lambda^{1,2}_{\varphi,\vartheta}]\in H_{dR}^1(\HT,d)$ be dual bases dual to the first homology generators  $[\Phi^{1,2}(C_{T,P})]\in H_1(\HT)$ such that $[{\Phi^{1,2}}^*(\Lambda^{1,2}_{\varphi,\vartheta})]=[\lambda_{\varphi,\vartheta}]$. The two dual bases are then related by
\begin{align*}
    \begin{pmatrix}
    [\Lambda^1_\varphi]\\
    [\Lambda^1_\vartheta ]
    \end{pmatrix}
    &=A^T
    \begin{pmatrix}
[\Lambda^2_\varphi]\\ 
[\Lambda^2_\vartheta ]
    \end{pmatrix}.
    \end{align*}
    
On similar grounds, the first relative homology bases must be related by $[\Phi^1(\radpath)]=B [\Phi^2(\radpath)]$ where $B=\pm 1$. A negative sign corresponds to exchanging the inner and outer boundaries. If $[R^{1,2}]\in H^1_{dR}(\HT,\partial\HT,d)$ are dual bases of first relative cohomology, then $[R^1]=B [R^2]$.

Let $[\Pi^{1,2}_{\varphi,\vartheta}]$ be dual bases of second relative cohomology, such that $[{\Phi^{1,2}}^*(\Pi^{1,2}_{\varphi,\vartheta})] =[ \sigma_{\varphi,\vartheta}] = [\lambda_r\wedge \lambda_{\varphi,\vartheta}]$. The latter relation implies that $[\Pi^{1,2}_{\varphi,\vartheta}] = [R^{1,2}\wedge \Lambda^{1,2}_{\varphi,\vartheta}]$. Hence, the second relative cohomology bases are related through matrix $A^T$ in the same way as for the first absolute cohomology bases, up to the sign conveyed by $B$. We have
\begin{align*}
    \begin{pmatrix}
    [\Pi^1_\varphi]\\
    [\Pi^1_\vartheta ]
    \end{pmatrix}
    &=B A^T
    \begin{pmatrix}
[\Pi^2_\varphi]\\ 
[\Pi^2_\vartheta ]
    \end{pmatrix}.
\end{align*}
By duality, generators of second relative homology $[\Phi^{1,2}(S_{T,P})]\in H_2(\HT,\partial \HT)$ are related by 
\begin{align*}
    \begin{pmatrix}
    [\Phi^1(\toribbon)]\\
    [\Phi^1(\polocut)]
    \end{pmatrix}
    &=BA
    \begin{pmatrix}
[\Phi^2(\toribbon)]\\ 
[\Phi^2(\polocut) ]
\end{pmatrix}.
\end{align*}
Combined together, $[\Lambda_\varphi^{1,2}\wedge \Pi^{1,2}_\vartheta] = -[\Lambda_\vartheta^{1,2}\wedge \Pi^{1,2}_\varphi]\in H^3(\HT,\partial\HT,d)$ are top forms on $\HT$, related to one another by the factor $\det(A)B=\pm 1$. The orientation of both top forms must be compatible with the standard volume form in $\R^3$ as per our global assumption on the embeddings. This implies that the mutual orientation of the embeddings' top forms must be preserved, and so $\det(A)B=1$. Interestingly, it is still possible for the intersection number of the first homology bases to differ, together with an exchange of inner and outer boundaries. It would be straightforward to make the above formula completely general by including information about the manifold's orientation. 

Given a closed $1$-form $s\in H_{dR}^1(\HT)$ and a closed Dirichlet $2$-form $B\in H_{dR}^2(\HT,\partial \HT)$ as above, let $l^{1,2}_{\varphi,\vartheta} = \int_{\Phi^{1,2}(C_{T,P})}s$ and $\tilde{l}^{1,2}_{\varphi,\vartheta} = \int_{\Phi^{1,2}(S_{T,P})}B$ be the projections on the first and second cohomology bases. It is then clear that these coefficients are related in the same way that the homology bases are, namely
\begin{align*}
    \begin{pmatrix}
    l^1_\varphi\\
    l^1_\vartheta 
    \end{pmatrix}
    &=A
    \begin{pmatrix}
l^2_\varphi\\ 
l^2_\vartheta
    \end{pmatrix},&
    \begin{pmatrix}
    \tilde{l}^1_\varphi\\
    \tilde{l}^1_\vartheta 
    \end{pmatrix}
    &=BA
    \begin{pmatrix}
\tilde{l}^2_\varphi\\ 
\tilde{l}^2_\vartheta
    \end{pmatrix}.
\end{align*}

We now verify that the factorisation result of Theorem \ref{thm:product-wedge-ht} is independent of the embedding by computing
\begin{align*}
    \int\limits_{\Phi^1([0,1]\times S^1\times S^1)} s\wedge B 
& =   \begin{pmatrix}
    l^1_\varphi & l^1_\vartheta
    \end{pmatrix}
    J
    \begin{pmatrix}
    \tilde{l}^1_\varphi \\ \tilde{l}^1_\vartheta
    \end{pmatrix}\\
    &= B \begin{pmatrix}
    l^2_\varphi & l^2_\vartheta
    \end{pmatrix} 
    A^T J A
    \begin{pmatrix}
    \tilde{l}^2_\varphi & \tilde{l}^2_\vartheta
    \end{pmatrix}\\
    &= \det(A)B\begin{pmatrix}
    l^2_\varphi & l^2_\vartheta
    \end{pmatrix}
    J
     \begin{pmatrix}
    \tilde{l}^2_\varphi & \tilde{l}^2_\vartheta
    \end{pmatrix}
    \\
    &=\int\limits_{\Phi^2([0,1]\times S^1\times S^1)} s\wedge B,
\end{align*}
This result is reminiscent of \emph{Riemann's bilinear relations} for compact Riemann surfaces of genus $g$, where $g=1$ in our case.

\section{Magnetic helicity on toroidal volumes}
\label{sec:helicity}
{\nochange
Assuming the magnetic flux $B$ to be an exact $2$-form $B=dA$, we define the magnetic helicity density as the top form on $M$
\begin{equation}
  \label{eq:helicity}
 H(A):=A\wedge dA.
\end{equation}
It can either be seen as a property derived from the potential $1$-form or, a fundamental gauge-dependent quantity related to the magnetic field. On a solid toroidal volume, this expression is well-defined for all closed $2$-forms, as per proposition \ref{prop:closed-2form}. On a hollow torus, not all closed $2$-forms are exact so that the notion of magnetic helicity only makes sense on the subspace of exact $2$-forms.

The definition of magnetic helicity density does not depend on the Riemannian metric, but can be written as the dot-product between the magnetic field and the vector potential,  $H(A)= \bm{A}\cdot\bm{B}\,\mu$ where $\mu$ is the natural volume form on $M$. The integral of the helicity density over the entire manifold is called the total helicity
\begin{align}
  K[A] :&= \int_{M} H(A) = \int_{M} A\wedge dA = \int_{M} \bm{A}\cdot\bm{B}\, \mu
\end{align}

The total helicity is important because it is one of several conserved quantities under ideal MHD motion called Casimirs~\cite{holm-marsden-ratiu,ono-1995,hattori-1994}. The total helicity is interpreted as the degree of knottedness of magnetic field-lines\cite{moreau-1961,moffatt-1969,arnold-1974}, which can be seen as a topological invariant. In a system with finite resistivity, energy dissipation is more rapid than the decay of total helicity~\cite{taylor-1974}. Taylor-relaxed states are critical solutions (within the same gauge) of the energy functional $E[A]=\frac{1}{2}\bra dA,dA\ket=\frac{1}{2}\int_{M} B^2 \mu$ subject to the constraint of fixed total helicity.

It is clear that the helicity density is a gauge-dependent quantity. If $A'$ and $A$ yield the same magnetic flux $B=dA'=dA$, their difference is a closed $1$-form $s:=A'-A$, $ds=0$. Then, the change in helicity density is $H(A')-H(A)=s\wedge B$ and the total helicity differs by the integral
\begin{align}
  K[A']-K[A]&= \int_{M} s\wedge B.
\end{align}
On a manifold without boundary, the total helicity is gauge-invariant by Stokes' theorem since $s\wedge B = d(A\wedge s)=d(A'\wedge s)$ is exact. On a manifold with boundary, the total helicity is a boundary term that can generally be altered by the gauge.

On a solid toroidal volume, HFMD of a closed $1$-form $s$ gives $s = df_D + dh_s + c_\varphi\torfield$ where $c_\varphi=(s,\torfield)$. Since $df\wedge B = d(fB) - f\cancel{dB}$ for any smooth function $f$, the change in total helicity consists of two terms
\begin{align}
  \label{eq:st-helicity}
  K[A']-K[A]
  &= \int_{\partial \ST} h_s\bm{t}B + \bra s,\torfield\ket \bra \star \torfield,B\ket
\end{align}
where the surface integral involving $f_D$ vanishes due to the Dirichlet boundary condition.

On a hollow toroidal volume, $s = df_D + dh_s + c_\vartheta\polfield + c_\varphi\torfield$ with $c_\vartheta=(s,\polfield)$ and $c_\varphi=(s,\torfield)$. By a similar argument, the difference in total helicity is decomposed as
\begin{multline}
  \label{eq:ht-helicity}
  K[A']-K[A]
  = \int_{\partial \HT} h_s\bm{t}B  + \bra s,\polfield\ket\bra\star\polfield,B\ket\\
   + \bra s,\torfield\ket \bra \star \torfield,B\ket.
\end{multline}

\subsection{Perfectly conducting boundary on a solid toroidal volume}
In the special case where the magnetic field is everywhere tangential to the boundary of the solid toroidal volume, $\bm{B}\cdot \bm{n}=0$, the magnetic flux $B$ is a closed Dirichlet $2$-form, $dB=0$ and $\bm{t}B=0$. Then, the first term in equation (\ref{eq:st-helicity}) vanishes and Theorem \ref{thm:product-wedge} greatly simplifies the computation of the second. Such boundary condition on the magnetic field is known as the \emph{perfectly or ideally conducting} boundary. It is often assumed in tokamak and stellarator physics, and describes flux tubes as regions bounded by magnetic surfaces, for example in solar flare dynamics. By virtue of Theorem \ref{thm:product-wedge}, the total magnetic helicity can be evaluated as the product of the integral of the gauge function over any closed toroidal path $\Gamma_T\in[\Phi(\torpath)]\in H_1(\ST)$ times the flux of the magnetic field through any poloidal cross-section $\Sigma_P\in[\Phi(\polocut)]\in H_2(\ST,\partial\ST)$,
\begin{align}
  K[A']-K[A] &= \int_{\Gamma_T} (A'-A) \int_{\Sigma_P} B.
\end{align}

\begin{remark}
  It is sufficient for two vector potentials to agree along a single closed toroidal path to yield the same value of total magnetic helicity. This condition is for example met in the context of toroidal magnetic confinement, such as tokamaks and stellarators, when the vector potential is considered to vanish on the so-called \emph{magnetic axis}. This convenient choice is adopted by several magnetostatics codes~\cite{vmec,spec}, for which the values of total magnetic helicity can be compared directly (provided that the magnetic fields are tangential to the same boundary, and the magnetic axes coincide in real space).
\end{remark}

By rearranging the terms in the above equation, one can define a \emph{gauge-invariant} or \emph{relative} total helicity,
\begin{align}
  K_r[A;\Gamma_T] :&= K[A] - \int_{\Gamma_T} A \int_{\Sigma_P} B .
\end{align}
The relative total helicity does not depend on the choice of gauge, $K_r[A';\Gamma_T]-K_r[A;\Gamma_T]=0$, but may depend on the closed toroidal path $\Gamma_T$. The latter must be the same when comparing two such quantities, which means that the reference poloidal flux must coincide.
Restricting to toroidal paths $\Gamma_T$ that lie on the boundary of the solid torus, the loop integral of $A$ is independent of the specific choice of path. Indeed, by Stokes' theorem, the difference equates to a surface integral of $\bm{t}B$ which vanishes due to the perfectly conducting boundary. In this case, the relative helicity reads
\begin{align}
 K^{ext}_r[A]: = K[A] - \Psi_P^{ext}\Psi_T 
\end{align}
where $\Psi_P^{ext}$ corresponds physically to the external poloidal flux through the hole of the solid torus and $\Psi_T = \int_{\Sigma_P} B $ is the magnetic field's flux through the solid torus (toroidal flux). This expression coincides with the usual relative helicity formula~\citep{finn-antonsen-1985,berger-1999}.

The main advantage of our derivation is that we only need to focus on the topology of the domain and how it affects the space of solutions to the boundary value problem $B=dA$. This step can be performed at the coarser homological level rather than via direct computation. Nothing needs to be assumed about the solution on the complement of the manifold (outside the domain); Hodge theory takes care of existence, uniqueness and regularity. Demonstrating these formulae in local coordinate would be extremely tedious depending on how the specific embedding affects the Riemannian metric; the coordinate-free notation is compact and general. Another advantage is that the use of so-called \emph{multi-valued} functions is avoided altogether and Stokes' theorem remains unconditionally true. There is also no need to split the domain into simply-connected components, nor to isolate the "vacuum components" of each field. Finally, the same strategy can be repeated to generalise the relative total helicity formula over to more complicated domains.

\subsection{Perfectly conducting boundary on a hollow toroidal volume}
A similar formula for the relative helicity can be derived in the special case where the magnetic field is everywhere tangential to the boundary of the hollow torus, namely when $\bm{B}\cdot \bm{n}=0$ on the exterior and interior tori. By proposition \ref{prop:ht-closedexact}, the magnetic flux (as a closed Dirichlet $2$-form $dB=0$ and $\bm{t}B=0$) is exact $B=dA$, and the total helicity of the magnetic field is well-defined by the functional of the potential $1$-form, $H[A]=A\wedge dA$. Furthermore, theorem \ref{thm:product-wedge-ht} applies and the effect of gauge can be computed by the following combination of loop and surface integrals
\begin{align}
  K[A']-K[A] = \int_{\Gamma_T}(A'-A)\int_{\Sigma_P}B - \int_{\Gamma_P}(A'-A)\int_{\Sigma_T}B,
\end{align}
where $\Gamma_T\in[\Phi(\torpath)]$, $\Gamma_P\in[\Phi(\polpath)]$ are closed toroidal and poloidal paths such that $H_1(\HT)=\text{span}\{[\Phi(\torpath)],[\Phi(\polpath)]\}$, $\Sigma_P\in[\Phi(\polocut)]$ is a poloidal cut and $\Sigma_T\in[\Phi(\toribbon)]$ a toroidal annulus (or ribbon) with $H_2(\HT,\partial \HT)=\text{span}\{[\Phi(\toribbon)],[\Phi(\polocut)]\}$. 

This suggests the following definition of \emph{relative helicity} for a perfectly conducting hollow toroidal volume
\begin{align}
  K_r[A;\Gamma_T,\Gamma_P] :&= K[A] - \int_{\Gamma_T} A\int_{\Sigma_P}B + \int_{\Gamma_P}A\int_{\Sigma_T}B.
\end{align}
This expression is gauge-invariant, $K_r[A';\Gamma_T,\Gamma_P]-K_r[A;\Gamma_T,\Gamma_P]=0$, but depends on the choice of paths $\Gamma_T$ and $\Gamma_P$. The relative helicity becomes independent of the specific path choice by restricting $\Gamma_T$ and $\Gamma_P$ to lie on the boundary of the hollow torus. The most natural choice among four possible combinations is to assign toroidal loops $\Gamma_T$ to the exterior boundary torus and poloidal loops $\Gamma_P$ to the interior. Specifically, one computes
\begin{align}
\label{eq:relative-exterior-helicity-ht}
  K^{ext}_r[A]:=K[A] -\Psi_P^{ext}\Psi_T + \Psi_P \Psi_T^{ext}
\end{align}
where $\Psi_P^{ext}$ is the exterior poloidal magnetic flux through the torus' ``major hole'', $\Psi^{ext}_T$ is the exterior toroidal magnetic flux through the ``minor hole'' (hollowed section), and $\Psi_T = \int_{\Sigma_T} B$ and $\Psi_P = \int_{\Sigma_P} B$ are respectively the toroidal and poloidal magnetic fluxes within the hollow torus. Expression (\ref{eq:relative-exterior-helicity-ht}) is a useful generalisation of the usual relative helicity formulae~\citep{finn-antonsen-1985,berger-1999} applicable to multi-region calculations of MHD equilibria~\cite{spec}.

\section{Conclusion}
Elements of Hodge theory and de Rham cohomology were introduced to formally address the effect of gauge freedom in magnetostatics problems. The Hodge-Morrey and Friedrichs decomposition theorems of $k$-forms on compact manifolds with boundary were applied to the representation problem of magnetic fluxes ($2$-forms) via potential $1$-forms on three-dimensional toroidal volumes. This coordinate-free and general method for solving boundary value problems identifies the various components in terms of boundary conditions and kernels of the exterior derivative and codifferential operator. This is a useful step to establish well-posedness and uniqueness of solutions~\cite{schwarz}.

We also highlighted the fact that, when the domain's homology is non-trivial, one needs to account for the important components belonging to the subspace of harmonic fields. This remarkable subspace is actually finite dimensional and can be characterised entirely by homological methods; the geometry (metric) does not interfere with the global (integral) properties of its constituents. The homology groups of two domains commonly used in the context of fusion plasmas were classified. All closed $2$-forms are exact on a solid toroidal volume and we saw that all physical magnetic fluxes must be exact on a hollow toroidal volume. In solid toroidal volumes, harmonic one-fields are spanned by the unique (up to sign) unit toroidal vacuum field. In axisymmetric devices (tokamaks), this $1$-form is simply a multiple of the gradient of the toroidal angle, which corresponds physically to the $1/R$ vacuum field from the toroidal field coils. In stellarators, this $1$-form is the only three-dimensional toroidal vacuum field that aligns with the boundary (last-closed flux-surface). As a vector field, the field-lines of the unit toroidal vacuum field within the domain can be extremely complex. From the perspective of stellarator design, the unit toroidal vacuum field is precisely the target configuration that is being optimised~\cite{hudson-2018,landreman-sengupta-2018,landreman-sengupta-plunk}. Questions about integral submanifolds (nested flux-surfaces), foliations, etc. naturally follow and would be important to investigate on similar grounds.

This paper provided a framework to distinguish the physically relevant potential $1$-form from the superfluous components, leading to gauge-freedom. The requirements for a minimal gauge were listed, namely that the vector potential be divergence-free, tangential and orthogonal to the space of harmonic fields. While the first condition is well-known to correspond to the Coulomb gauge, the remaining two -- specific to the domain having a boundary and a non-trivial homology -- were shown to be equally important. In fact, if any of these conditions are broken, the $L^2$-norm of the vector potential could be made arbitrary.

Theorems \ref{thm:product-wedge} and \ref{thm:product-wedge-ht} showed the remarkable splitting of the volume integral of the wedge product between a closed $1$-form and a closed Dirichlet $2$-form into a product (or a combination of products) of the line-integral of the former and the surface-integral of the latter. These results were essential in deriving relative total helicity formulae for the case of perfectly conducting boundary conditions. With these expressions, the value of the relative helicity does not depend on the choice of gauge and can be safely compared with that of another configuration. The derivation highlighted the homological origins of the well-established formulae~\cite{berger-field-1984,jensen-chu-1984,finn-antonsen-1985} and provided a formal justification for them.

The approach of this paper may help assess the effect of homology and gauge-freedom in  different applications such as the formation of eddy current formation in conducting material, magnetic levitation, Taylor-relaxed states, dynamo effect, solar flares, magnetic reconnection, etc.

\begin{acknowledgments}
  The authors would like to acknowledge stimulating discussions with S.R.Hudson, J.C.Loizu and A.Cerfon.
\end{acknowledgments}
}
\bibliographystyle{apsrev4-1}
\bibliography{biblio}

\appendix

\section{Hodge theory}
\label{sec:hodge-theory}
{\nochange 
We will be adopting the notation from the book by~\citet{schwarz} and reporting the most useful results for our purposes.

In order to avoid confusion, we reserve the symbol $H$ to refer to homology or cohomology classes $H_k(X)$ and $H^k(X)$. Sobolev-Hilbert spaces will be denoted by $W^1(X)=W^{1,2}(X)$ instead.
}

\subsection{The Hilbert spaces of square-integrable forms}

{\nochange

The vector space $\Omega^k(M)$ of smooth $k$-forms on an oriented compact smooth Riemannian $n$-manifold $(M,\langle\cdot,\cdot\rangle)$ with boundary is equipped with the $L^2$-inner product, $\bra\cdot,\cdot\ket:\Omega^k(M)\times\Omega^k(M)\mapsto \R$ defined by
\begin{align}
  \bra \alpha,\beta\ket := \int_M  \alpha \wedge \star \beta = \int_M \langle \alpha,\beta\rangle \mu
\end{align}
where $\mu\in \Omega^n(M)$ is the natural volume-form of the Riemannian metric $\langle\cdot,\cdot\rangle$ on $M$, and $\star:\Omega^k(M)\to \Omega^{n-k}(M)$ is the Hodge star operator. Let $L^2\Omega^k(M)$ be the $L^2$-completion of $\Omega^k(M)$. Denote by $W^1\Omega^k(M)$ the completion of $\Omega^k(M)$ with respect to the so-called Dirichlet inner product $\bra \alpha,\beta\ket_{\mathcal{D}}:=\bra d\alpha,d\beta\ket + \bra \delta \alpha,\delta \beta\ket$, where $d$ is the (metric-independent) exterior derivative and $\delta:=(-1)^k\star^{-1}d\star$ is the (metric-dependent) codifferential.

On manifolds with boundary, it is useful to label the subspaces of $k$-forms respecting Dirichlet and Neumann boundary conditions. Let $j:\partial M\to M$ be the natural inclusion, with tangent map $Tj:T\partial M\to TM|_{\partial M}$. The tangent bundle $T\partial M$ of the boundary manifold can be identified with the image of the tangent map, $Tj(\partial M)$, but not with the restriction of the tangent bundle to the boundary $TM|_{\partial M}$. We construct unit normal vector fields $\mathcal{N}\in \Gamma(TM|_{\partial M})$ such that $\langle \mathcal{N},\mathcal{N}\rangle = 1$ and $\langle \mathcal{N}, Tj (Y)\rangle = 0$, $\forall Y\in\Gamma(T\partial M)$, and extend them in a neighbourhood of the boundary by the collar theorem~\cite{brown-1962}. Any vector field $X\in\Gamma(TM)$ can then be decomposed on the boundary into tangential and normal components as
\begin{align*}
  X^\perp &= \langle X|_{\partial M},\mathcal{N}\rangle \mathcal{N}, &
  X^{||} &= X|_{\partial M} - X^\perp.
\end{align*}
In turn, the tangential and normal boundary operators are defined on $k$-forms $\alpha\in\Omega^k(M)$ by
\begin{align*}
  \bm{n}\alpha(X_1,\ldots,X_k) &:= \alpha(X_1^\perp,\ldots,X_k^\perp),&
  \bm{t}\alpha := \alpha|_{\partial M} - \bm{n}\alpha.
\end{align*}
for any set of vector fields on the boundary $X_1,\ldots,X_k\in \Gamma(TM|_{\partial M})$. Tangential projection coincides with the pullback of the inclusion in the sense that
\begin{align*}
  j^*\bm{t}\alpha &= j^*\alpha , &
  j^*\bm{n}\alpha &= 0.
\end{align*}
Subscripts $D$ and $N$ denote the subspaces of Dirichlet $k$-forms and Neumann $k$-forms:
\begin{align}
  \begin{split}
  \Omega_D^k(M):&=\{ \alpha \in \Omega^k(M)|\ \bm{t}\alpha=0\}, \\
  \Omega_N^k(M):&=\{ \alpha \in \Omega^k(M)|\ \bm{n}\alpha=0\}.
  \end{split}
\end{align}
The tangential and normal operators satisfy the following (commutation) relations (see~\citet[Proposition 1.2.6]{schwarz})
\begin{align}
  \bm{t} d\omega &=d \bm{t} \omega,&
  \bm{n}\delta\omega &=\delta\bm{n}\omega,\\
  \star \bm{n}\omega &= \bm{t}\star \omega, &
  \star \bm{t}\omega &= \bm{n}\star \omega.\label{eq:star-bound-op}
\end{align}

Stokes theorem is the fundamental result of exterior calculus on which Hodge decomposition of $k$-forms is based. For any $n-1$-form $\omega \in W^1\Omega^{n-1}(M)$, one has
\begin{align}
  \int_M d\omega = \int_{\partial M} \bm{t}\omega.
\end{align}
Importantly, the integral over a closed manifold $M$ (whose boundary is empty $\partial M=\varnothing$) of an exact form $d\omega$ vanishes. Exact forms are always closed, since $dd=0$ but the converse is not always true. Non-exact closed forms represent so-called \emph{cohomology classes} that are nonzero. Homotopy equivalences (in particular homeomorphisms) respect cohomology classes.

Green's formula is a useful corollary to Stokes theorem, demonstrating that the exterior derivative and the codifferential are almost dual operations under the $L^2$-pairing (see~\citet[Proposition 2.1.2]{schwarz}):
\begin{align}
  \label{eq:green}
  \bra d\omega,\eta\ket = \bra \omega,\delta \eta\ket
  + \int_{\partial M} \bm{t}\omega \wedge \star \bm{n}\eta.
\end{align}
Duality is achieved by restricting to families of forms that satisfy suitable boundary conditions, thereby eliminating the boundary term in equation (\ref{eq:green}). This is an important ingredient of the decomposition theorems below.
}

\subsection{de Rham cohomology in a nutshell}

{\nochange
Two forms, $s_1,s_2\in W^1\Omega^k(M)$ that are closed, namely $ds_1=ds_2=0$, are said to belong to the same cohomology class (are cohomologous) when their difference is exact, namely $s_1-s_2=d\epsilon$ for some $\epsilon \in W^1\Omega^{k-1}(M)$. The $k^{th}$ de Rham cohomology group of $M$ is the quotient (vector) space
\begin{align}
  H_{dR}^k(M,d):=\ker d|_{W^1\Omega^k(M)}/ \im d|_{W^1\Omega^{k-1}(M)},
\end{align}
namely the real vector space of closed $k$-forms modulo the space of exact $k$-forms.


Integration of $k$-forms over $k$-chains produces, by virtue of Stokes theorem, a well-defined (homomorphism) bilinear pairing $\ll\cdot,\cdot\gg : H_k(M)\times H_{dR}^k(M,d)\to \R$,
\begin{align}
  \ll [\Gamma],[\lambda] \gg\ := \int_\Gamma\lambda,
\end{align}
between $H_k(M)$, the $k^{th}$ singular homology group (space of $k$-cycles modulo the $k$-boundaries) and the $k^{th}$ de Rham cohomology group. 

De Rham's theorem states that the map induced by this pairing is an isomorphism from $H_{dR}^k(M,d)$ to the dual $H^k(M,\R)$ of $H_k(M,\R)$, namely $H_{dR}^k(M,d)\cong H^k(M,\R)$. Nonzero elements of de Rham cohomology groups are equivalence classes represented by closed forms whose integrals over closed $k$-cycles do not all vanish. So another way of defining the de Rham cohomology group is
\begin{align*}
  H_{dR}^k(M,d):=\ker d|_{W^1\Omega^k(M)}/ \ker \int_{\Gamma^k}, \quad \forall \Gamma^k\in H_k(M).
\end{align*}
On a compact manifold, the dimension of $H_k(M,\R)$ is a finite non-negative integer called the $k^{th}$ Betti number. Then, 
given a basis $\{[\Gamma_i]\}_{i=1}^{\beta_k}$ of $H_k(M,\R)$, there is a dual basis 
 $\{[\lambda_i]\}_{i=1}^{\beta_k}$, whose representatives are closed $k$-forms, for which $\int_{\Gamma_i}\lambda_j$ is Kronecker's
$\delta_{ij}$. This makes no use of a Riemannian metric on the manifold $M$.}

A similar isomorphism can be established over the \emph{relative} cohomology with respect to the manifold's boundary, namely for $k$-forms satisfying the Dirichlet boundary condition, 
\begin{align}
    H_{dR}^k(M,\partial M,d)
    :=\ker d|_{W^1\Omega^k_D(M)} / \im d|_{W^1\Omega^{k-1}_D(M)},
\end{align}
and when chains on the boundary are quotiented out. Indeed, the non-degenerate bilinear pairing $\ll \cdot,\cdot \gg: H_k(M,\partial M)\times H_{dR}^k(M,\partial M,d)\to \R$ with the same rule as above is perfect, so that $H_{dR}^k(M,\partial M,d)\cong H^k(M,\partial M,\R)$. 

\subsection{$L^2$-decomposition theorems on manifolds with boundary}
\label{sec:hfmd}
{\nochange
Now, making use of a Riemannian metric on $M$, we describe the Hodge-Friedrichs-Morrey decomposition. HFMD is a generalisation of the classical Helmoltz decomposition theorem~\citep{helmoltz-1858} in $\R^3$, and is a remarkable result on the splitting of the infinite-dimensional Hilbert space $L^2\Omega^k(M)$ into a sum of orthogonal subspaces. HFMD is actually the result of the Hodge-Morrey decomposition~\citep{morrey-1956} (HMD) together with the Friedrichs decomposition~\citep{friedrichs-1955} (FD), which we briefly discuss.

The HMD theorem says that $L^2\Omega^k(M)$ is the direct-sum of exact Dirichlet k-forms, co-exact Neumann k-forms and so-called harmonic k-fields (no boundary conditions). Reciting \citet[Theorem 2.4.2]{schwarz}, the conclusion of the HMD is that the Hilbert space of square integrable $k$-forms on a compact orientable manifold with boundary splits into the direct sum
\begin{align}
  \label{eq:hmd}
  L^2\Omega^k(M) = \mathcal{E}^k(M) \oplus \mathcal{C}^k(M)\oplus L^2\Harm^k(M)
\end{align}
where 
\begin{align}
  \mathcal{E}^k(M) :&= \im d\big|_{W^1\Omega_D^{k-1}(M)} \subset L^2\Omega_D^k(M)\nonumber\\
  &= \{ d\alpha |\ \alpha \in W^1\Omega^{k-1}(M), \bm{t}\alpha = 0 \}
\end{align}
is the space of exact $k$-forms produced by the exterior derivative of any $k-1$-form with vanishing tangential component ($\mathcal{E}^0(M):=\{0\}$),
\begin{align}
  \mathcal{C}^k(M) :&=\im \delta\big|_{W^1\Omega_N^{k+1}(M)}\subset L^2\Omega_N^k(M)\nonumber\\
  &=\{ \delta\beta|\ \beta \in W^1\Omega^{k+1}(M),\ \bm{n}\beta = 0 \}
\end{align}
is the space of co-exact $k$-forms produced by the codifferential of any $k+1$ form with vanishing normal component ($\mathcal{C}^n(M):=\{0\}$),
\begin{align}
  \Harm^k(M) :&=\ker d|_{W^1\Omega^k(M)}\cap \ker \delta \big|_{W^1\Omega^k(M)}\nonumber\\
  &=\{ \kappa \in W^1\Omega^k(M) |\ d\kappa=0,\ \delta \kappa=0 \}
\end{align}
is the space of \emph{harmonic} $k$-fields. Notice that being a harmonic $k$-field is more restrictive than being a harmonic $k$-form; $d\lambda=0$ and $\delta \lambda=0$ implies that $\Delta \lambda = (d\delta+\delta d)\lambda = 0 $, but the converse is in general not true on manifolds with non-empty boundary. It is worth mentioning that the spaces $\mathcal{E}^k(M)$ and $\mathcal{C}^k(M)$ are closed in the $L^2$-topology (see~\citet[Lemma 2.4.3]{schwarz}). 

The FD theorem says that the space of harmonic fields on a compact manifold with boundary can be further split into two orthogonal subspaces. This can be done in two ways (\citet[Theorem 2.4.8]{schwarz}),
\begin{align}
  \label{eq:fd}
  \begin{split}
  L^2\Harm^k(M) &= \Harm_D^k(M)\oplus L^2\Harm_{co}^k(M) \\
  &= \Harm_N^k(M) \oplus L^2\Harm_{ex}^k(M)
\end{split}
\end{align}
where
\begin{multline}
  \Harm_D^k(M)
  :=\{ \lambda_D \in W^1\Omega^k(M) |\\
  \ d\lambda_D=0, \delta \lambda_D=0, \bm{t}\lambda_D=0\}
\end{multline}
is the space of Dirichlet harmonic fields,
\begin{multline}
  \Harm_N^k(M)
  :=\{ \lambda_N \in W^1\Omega^k(M) |\\
  \ d\lambda_N=0, \delta \lambda_N=0, \bm{n}\lambda_N=0\}
\end{multline}
is the space of Neumann harmonic fields, 
\begin{align}
  \Harm_{co}^k(M) := \{ \delta \gamma |\ \gamma \in W^1\Omega^{k+1}(M), d\delta\gamma = 0 \}
\end{align}
is the space of co-exact harmonic fields, and
\begin{align}
  \Harm_{ex}^k(M) := \{ d \epsilon |\ \epsilon \in W^1\Omega^{k-1}(M), \delta d\epsilon = 0 \}
\end{align}
is the space of exact harmonic fields. Dirichlet and Neumann harmonic fields are smooth (see~\citet[Theorem 2.2.6 and 2.2.7]{schwarz}).

\subsection{General consequences and useful properties of Hodge theory}
To illustrate the use of Hodge theory for solving boundary value problems, we list some useful consequences of this decomposition in the form of propositions with proofs.

\begin{proposition}
  \label{prop:hmd-explicit}
  Any $k$-form $\omega\in L^2\Omega^k(M)$ can be uniquely (here and elsewhere) written as
\begin{align*}
  \omega = d\alpha + \delta \beta + \kappa
\end{align*}
where $\alpha\in W^1\Omega^{k-1}_D(M)$, $\beta\in W^1\Omega^{k+1}_N(M)$ and $\kappa\in L^2\Harm^k(M)$ is closed and co-closed. The three terms $d\alpha$, $\delta\beta$ and $\kappa$ are mutually orthogonal with respect to the $L^2$-inner product.
\end{proposition}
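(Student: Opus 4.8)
The plan is to obtain the whole statement as a direct reading of the Hodge--Morrey decomposition~(\ref{eq:hmd}), supplying only the orthogonality computation and the uniqueness argument, both of which reduce to Green's formula~(\ref{eq:green}).

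First, for existence I would invoke~(\ref{eq:hmd}) directly: since $L^2\Omega^k(M)=\mathcal{E}^k(M)\oplus\mathcal{C}^k(M)\oplus L^2\Harm^k(M)$, any $\omega\in L^2\Omega^k(M)$ is the sum of exactly one element from each summand. By the defining descriptions of $\mathcal{E}^k(M)$ and $\mathcal{C}^k(M)$, the first summand equals $d\alpha$ for some $\alpha\in W^1\Omega^{k-1}_D(M)$ and the second equals $\delta\beta$ for some $\beta\in W^1\Omega^{k+1}_N(M)$, while the third is a harmonic field $\kappa\in L^2\Harm^k(M)$, which is closed and co-closed by definition. I would flag here the one point that needs care: it is the three \emph{terms} $d\alpha$, $\delta\beta$, $\kappa$ that are being asserted unique, not the potentials $\alpha$ and $\beta$, since $\alpha$ may be altered by a closed Dirichlet form and $\beta$ by a co-closed Neumann form without changing $d\alpha$ or $\delta\beta$.

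Next, for orthogonality I would verify the three pairings vanish using~(\ref{eq:green}) together with the boundary conditions $\bm{t}\alpha=0$, $\bm{n}\beta=0$ and the harmonicity $d\kappa=\delta\kappa=0$. Taking $\omega=\alpha$, $\eta=\delta\beta$ gives $\bra d\alpha,\delta\beta\ket=\bra\alpha,\delta\delta\beta\ket+\int_{\partial M}\bm{t}\alpha\wedge\star\bm{n}\delta\beta=0$ since $\delta\delta=0$ and $\bm{t}\alpha=0$; taking $\omega=\alpha$, $\eta=\kappa$ gives $\bra d\alpha,\kappa\ket=\bra\alpha,\delta\kappa\ket+\int_{\partial M}\bm{t}\alpha\wedge\star\bm{n}\kappa=0$ since $\delta\kappa=0$ and $\bm{t}\alpha=0$; and taking $\omega=\kappa$, $\eta=\beta$ gives $0=\bra d\kappa,\beta\ket=\bra\kappa,\delta\beta\ket+\int_{\partial M}\bm{t}\kappa\wedge\star\bm{n}\beta$, so that $\bra\kappa,\delta\beta\ket=0$ because $d\kappa=0$ and $\bm{n}\beta=0$.

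Finally, uniqueness follows from orthogonality alone. If $d\alpha_1+\delta\beta_1+\kappa_1=d\alpha_2+\delta\beta_2+\kappa_2$, then $d(\alpha_1-\alpha_2)+\delta(\beta_1-\beta_2)+(\kappa_1-\kappa_2)=0$ is a sum of three mutually orthogonal vectors equal to zero; pairing this identity in $L^2$ with each term in turn forces $d(\alpha_1-\alpha_2)=0$, $\delta(\beta_1-\beta_2)=0$ and $\kappa_1-\kappa_2=0$. I do not expect a serious obstacle here: all the analytic content (the decomposition itself and the closedness of $\mathcal{E}^k(M)$ and $\mathcal{C}^k(M)$) is imported from~(\ref{eq:hmd}), and the remaining work is the routine but essential bookkeeping of boundary terms, the only genuine subtlety being the distinction between uniqueness of the terms and non-uniqueness of the potentials.
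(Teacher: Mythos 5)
Your proposal is correct and follows essentially the same route as the paper: existence is read off directly from the Hodge--Morrey decomposition~(\ref{eq:hmd}), and orthogonality is verified with Green's formula~(\ref{eq:green}) using the boundary conditions $\bm{t}\alpha=0$, $\bm{n}\beta=0$ and the harmonicity of $\kappa$. The paper writes out only one of the three pairings and leaves the uniqueness of the three terms implicit in the direct-sum statement, so your explicit uniqueness argument and the remark distinguishing uniqueness of the terms from non-uniqueness of the potentials are welcome but not a departure in method.
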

The freedom in choosing $\alpha$ and $\beta$, in their respective spaces, corresponds to \emph{gauge-freedom} in the wider context of $k$-forms.

\begin{proof}
  This is a direct application of the HMD, equation~(\ref{eq:hmd}). Orthogonality is shown via Green's formula,
  \begin{align*}
  \bra d\alpha ,\kappa\ket = \bra \alpha,\cancel{\delta\kappa}\ket
  +\int_{\partial M} \cancel{\bm{t}\alpha}\wedge\star\bm{n}\kappa
  = 0,
\end{align*}
and, similarly, $\bra d\alpha,\delta \beta \ket = 0$ and $\bra\delta \beta,\kappa\ket=0$.
\end{proof}

\begin{proposition}
  \label{prop:l2-norm-decomp} Using the notation from the previous result, the $L^2$-norm of any $k$-form $\omega$ is computed as
\begin{align*}
  ||\omega||_{L^2}^2 = \bra d\alpha, d\alpha\ket + \bra \delta \beta,\delta \beta\ket + \bra \kappa,\kappa\ket,
\end{align*}
\end{proposition}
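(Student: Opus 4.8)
The plan is to recognise that this statement is an immediate corollary of the orthogonality already established in Proposition \ref{prop:hmd-explicit}; it is nothing more than the Pythagorean identity applied to an orthogonal direct sum. First I would take the unique decomposition $\omega = d\alpha + \delta\beta + \kappa$ furnished by Proposition \ref{prop:hmd-explicit} and expand the squared norm using bilinearity of the $L^2$-inner product,
\begin{align*}
  ||\omega||^2_{L^2} = \bra \omega,\omega\ket
  = \bra d\alpha + \delta\beta + \kappa,\ d\alpha + \delta\beta + \kappa\ket,
\end{align*}
which produces three diagonal terms $\bra d\alpha,d\alpha\ket$, $\bra\delta\beta,\delta\beta\ket$, $\bra\kappa,\kappa\ket$ together with six cross terms.

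The next step is to eliminate the cross terms. Proposition \ref{prop:hmd-explicit} has already shown, via Green's formula (equation \ref{eq:green}) and the boundary conditions $\bm{t}\alpha=0$ and $\bm{n}\beta=0$, that $\bra d\alpha,\delta\beta\ket = 0$, $\bra d\alpha,\kappa\ket = 0$ and $\bra\delta\beta,\kappa\ket = 0$. By symmetry of the $L^2$-inner product the reversed pairings vanish as well, so all six cross terms are zero and only the three diagonal contributions survive, yielding exactly the claimed identity.

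There is essentially no obstacle here: the entire content of the statement is the mutual orthogonality of the three summands, and that has already been discharged in the preceding proposition. The only point worth re-emphasising is that the vanishing of the cross terms is not automatic for arbitrary representatives — it relies on the Dirichlet and Neumann boundary conditions carried by $\alpha$ and $\beta$, which are precisely what make the boundary integral in Green's formula drop out. Once those conditions are in force, the computation is purely formal and the proof reduces to a single line.
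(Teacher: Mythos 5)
Your proposal is correct and follows exactly the paper's own argument: expand $\bra\omega,\omega\ket$ bilinearly over the Hodge--Morrey decomposition and cancel the cross terms using the orthogonality already established in Proposition~\ref{prop:hmd-explicit}. Nothing is missing.
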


\begin{proof}
$\bra \omega,\omega\ket
  = \bra d\alpha+\delta \beta+\kappa,d\alpha+\delta \beta+\kappa\ket
  = \bra d\alpha,d\alpha \ket+\bra \delta \beta,\delta \beta\ket + \bra \kappa,\kappa\ket
  + 2\cancel{[\bra d\alpha,\delta\beta\ket + \bra d\alpha,\kappa\ket + \bra \delta\beta,\kappa\ket]}$.
\end{proof}

\begin{proposition}
  \label{prop:closed-exact}
  The following pair of equivalences hold:
  \begin{enumerate}[wide]
  \item A $k$-form $\omega\in L^2\Omega^k(M)$ is closed $d\omega = 0 \iff \bra \omega, \delta \beta\ket = 0, \forall \beta\in W^1\Omega_N^{k+1}(M)$, and thus $\omega = d\alpha +\kappa$ where $\alpha\in W^1\Omega_D^{k-1}(M)$ and $\mathcal{\kappa}\in L^2\Harm^k(M)$.
  \item A $k$-form $\tilde{\omega}\in L^2\Omega^k(M)$ is co-closed $\delta\tilde{\omega} = 0 \iff \bra \omega, d \alpha \ket = 0, \forall \alpha\in \Omega_D^{k-1}(M)$, and thus $\omega = \delta\beta +\tilde{\kappa}$ where $\beta\in W^1\Omega_N^{k+1}(M)$ and $\tilde{\kappa}\in L^2\Harm^k(M)$.
  \end{enumerate}
\end{proposition}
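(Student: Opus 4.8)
The plan is to establish the first equivalence directly from Green's formula (\ref{eq:green}) together with the orthogonality built into the Hodge--Morrey decomposition (\ref{eq:hmd}), and then to obtain the second by the symmetric argument, interchanging $d\leftrightarrow\delta$, Dirichlet $\leftrightarrow$ Neumann and $\mathcal{E}^k\leftrightarrow\mathcal{C}^k$ (equivalently, by applying the Hodge star and invoking (\ref{eq:star-bound-op})). The single identity that drives everything is obtained by taking $\eta=\beta\in W^1\Omega_N^{k+1}(M)$ in (\ref{eq:green}): since $\bm{n}\beta=0$ forces $\star\bm{n}\beta=0$, the boundary term drops and
\begin{align*}
  \bra d\omega,\beta\ket=\bra\omega,\delta\beta\ket .
\end{align*}

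For the forward implication I would read this identity off immediately: if $d\omega=0$ then $\bra\omega,\delta\beta\ket=\bra d\omega,\beta\ket=0$ for every Neumann $\beta$. For the converse, the hypothesis $\bra\omega,\delta\beta\ket=0$ for all $\beta\in W^1\Omega_N^{k+1}(M)$ states precisely that $\omega$ is $L^2$-orthogonal to the subspace $\mathcal{C}^k(M)=\im\delta|_{W^1\Omega_N^{k+1}}$, which is closed in the $L^2$-topology. By the mutual orthogonality in (\ref{eq:hmd}), the orthogonal complement of $\mathcal{C}^k(M)$ is $\mathcal{E}^k(M)\oplus L^2\Harm^k(M)$, so $\omega=d\alpha+\kappa$ with $\alpha\in W^1\Omega_D^{k-1}(M)$ and $\kappa\in L^2\Harm^k(M)$; since $\kappa$ is closed, $d\omega=dd\alpha+d\kappa=0$. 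This closes the equivalence and simultaneously yields the stated decomposition.

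The second statement is the Hodge dual of the first. Taking $\eta=\alpha\in\Omega_D^{k-1}(M)$ in (\ref{eq:green}) and using $\bm{t}\alpha=0$ to kill the boundary term gives $\bra d\alpha,\omega\ket=\bra\alpha,\delta\omega\ket$; co-closedness of $\omega$ is then equivalent to $\omega$ being orthogonal to $\mathcal{E}^k(M)=\im d|_{W^1\Omega_D^{k-1}}$, and (\ref{eq:hmd}) places $\omega$ in $\mathcal{C}^k(M)\oplus L^2\Harm^k(M)$, i.e. $\omega=\delta\beta+\tilde\kappa$. Alternatively one applies $\star$ to reduce the claim to the first part, using $\delta=(-1)^k\star^{-1}d\star$ and the interchange $\star\bm{n}\omega=\bm{t}\star\omega$ from (\ref{eq:star-bound-op}).

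I expect the only genuine subtlety to lie in the converse direction, where $d\omega=0$ must be recovered for a form that is a priori merely $L^2$: here $d\omega$ is a weak exterior derivative and one must argue that testing against Neumann $(k+1)$-forms alone suffices to conclude it vanishes. The resolution is that every smooth $(k+1)$-form compactly supported in the interior of $M$ automatically satisfies $\bm{n}\beta=0$, and such forms already determine the weak derivative on the interior; combined with the orthogonal-complement argument above (which relies on $\mathcal{C}^k(M)$ being $L^2$-closed) this forces $d\omega=0$ throughout $M$. Everything else is routine bookkeeping of boundary operators and degrees.
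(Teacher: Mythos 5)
Your proposal is correct and takes essentially the same route as the paper: Green's formula (\ref{eq:green}) with the Neumann (resp.\ Dirichlet) boundary condition killing the boundary term yields $\bra d\omega,\beta\ket=\bra\omega,\delta\beta\ket$ (resp.\ $\bra\alpha,\delta\omega\ket=\bra\omega,d\alpha\ket$), and the decomposition then follows from orthogonality in the Hodge--Morrey splitting. Your closing remark on recovering the weak derivative in the converse direction is a sensible elaboration of a point the paper's terser proof leaves implicit, but it does not change the argument.
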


\begin{proof}
  By Green's formula, $\forall \alpha\in \Omega_D^{k-1}(M)$ and $\forall \beta\in \Omega_N^{k+1}(M)$,
  \begin{align*}
    \bra \alpha,\delta\omega\ket &= \bra d \alpha,\omega\ket + \int_{\partial M} \cancel{\bm{t}\alpha}\wedge\star\bm{n}\omega = \bra \omega, d \alpha\ket\\
\bra d\omega,\beta\ket &= \bra \omega,\delta\beta\ket + \int_{\partial M} \bm{t}\omega\wedge\star\cancel{\bm{n}\beta} = \bra \omega, \delta \beta\ket.
\end{align*}
\end{proof}

The following is an explicit reading of~\citet[Corollary 2.4.9]{schwarz}.
\begin{proposition}[Helmoltz decomposition]
  \label{prop:helmoltz-decomp}
  Any $k$-form $\omega\in L^2\Omega^k(M)$ can be orthogonally decomposed into an exact $k$-form and a co-closed Neumann $k$-form,
  \begin{align*}
    \omega&= d\eta + \rho,&
    \eta&\in W^1\Omega^{k-1}(M),&
    \rho&\in W^1\Omega_N^k(M), \delta\rho=0
  \end{align*}
  with $\bra d\eta,\rho\ket=0$.
  
  By duality, that same $k$-form can be orthogonally decomposed into a co-exact $k$-form and a closed Dirichlet $k$-form,
  \begin{align*}
    \omega&=\delta \sigma + \tau,&
    \sigma&\in W^1\Omega^{k+1}(M), &
    \tau&\in W^1\Omega_D^k(M), d\tau=0
  \end{align*}
  with $\bra \delta\sigma,\tau\ket=0$.
\end{proposition}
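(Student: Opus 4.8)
The plan is to assemble both identities directly from the Hodge--Morrey decomposition (HMD, equation~(\ref{eq:hmd})) and the Friedrichs decomposition (FD, equation~(\ref{eq:fd})), so that no new analysis is needed beyond bookkeeping of subspaces. First I would write $\omega$ in its HMD form $\omega = d\alpha + \delta\beta + \kappa$ as in Proposition~\ref{prop:hmd-explicit}, with $\alpha\in W^1\Omega_D^{k-1}(M)$, $\beta\in W^1\Omega_N^{k+1}(M)$ and $\kappa\in L^2\Harm^k(M)$.

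For the first identity I would apply FD to the harmonic remainder in the form $L^2\Harm^k(M)=\Harm_N^k(M)\oplus L^2\Harm_{ex}^k(M)$, writing $\kappa = \kappa_N + d\epsilon$ with $\kappa_N\in\Harm_N^k(M)$ and $d\epsilon\in\Harm_{ex}^k(M)$. Regrouping gives $\omega = d(\alpha+\epsilon) + (\delta\beta+\kappa_N)$, so I set $\eta=\alpha+\epsilon$ and $\rho=\delta\beta+\kappa_N$. Then $d\eta$ is exact by construction; $\rho$ is co-closed because $\delta\delta\beta=0$ and $\delta\kappa_N=0$; and $\rho$ is Neumann because $\delta\beta\in\mathcal{C}^k(M)\subset\Omega_N^k(M)$ and $\bm{n}\kappa_N=0$. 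The conceptual point worth flagging is that $\eta$ need \emph{not} satisfy a Dirichlet condition: the exact harmonic piece $d\epsilon$ is absorbed into $d\eta$, which is exactly what distinguishes this Helmholtz form from the raw HMD of Proposition~\ref{prop:hmd-explicit}.

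Orthogonality $\bra d\eta,\rho\ket=0$ then follows by expanding into four cross terms. Three of them, $\bra d\alpha,\delta\beta\ket$, $\bra d\alpha,\kappa_N\ket$ and $\bra d\epsilon,\delta\beta\ket$, vanish by the mutual orthogonality of $\mathcal{E}^k(M)$, $\mathcal{C}^k(M)$ and $\Harm^k(M)$ in HMD (noting that $\kappa_N, d\epsilon\in\Harm^k(M)$), while the last term $\bra d\epsilon,\kappa_N\ket$ vanishes by the FD orthogonality $\Harm_{ex}^k(M)\perp\Harm_N^k(M)$. The second identity is obtained symmetrically: I would instead use the splitting $L^2\Harm^k(M)=\Harm_D^k(M)\oplus L^2\Harm_{co}^k(M)$, write $\kappa=\kappa_D+\delta\gamma$, regroup as $\omega=\delta(\beta+\gamma)+(d\alpha+\kappa_D)$, and set $\sigma=\beta+\gamma$, $\tau=d\alpha+\kappa_D$. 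Here $\tau$ is closed (since $d\kappa_D=0$) and Dirichlet (since $d\alpha\in\mathcal{E}^k(M)$ and $\bm{t}\kappa_D=0$), while $\sigma$ need not be Neumann, and $\bra\delta\sigma,\tau\ket=0$ again decomposes into three HMD terms and one FD term ($\Harm_{co}^k(M)\perp\Harm_D^k(M)$).

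There is no serious obstacle here, since existence, uniqueness and the $W^1$/smoothness regularity of the summands are already delivered by HMD and FD together with the smoothness of Dirichlet and Neumann harmonic fields. The only thing requiring care is selecting the correct one of the two FD splittings for each decomposition, so that the harmonic remainder carries the intended boundary condition, and then tracking which subspace each term belongs to when verifying orthogonality.
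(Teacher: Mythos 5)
Your proposal is correct and follows essentially the same route as the paper's proof: apply HMD, then the appropriate Friedrichs splitting of the harmonic remainder ($\Harm_N^k\oplus L^2\Harm_{ex}^k$ for the first identity, $\Harm_D^k\oplus L^2\Harm_{co}^k$ for the second), and regroup. The paper states the regrouping and stops, whereas you additionally verify the co-closedness/closedness, the boundary conditions on $\rho$ and $\tau$, and the orthogonality via the four cross terms; this is the same argument carried out in slightly more detail.
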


\begin{proof}
  By HMD, $\omega = d\alpha+\delta \beta + \kappa$ as in Proposition \ref{prop:hmd-explicit}.
  By $FD$, the harmonic $k$-field can be uniquely expressed as
  \begin{align*}
    \kappa=d\epsilon + \lambda_N \Rightarrow
    \omega = d(\underbrace{\alpha+\epsilon}_{\eta})+\underbrace{\delta\beta+\lambda_N}_{\rho}
  \end{align*}
  where $\lambda_N\in \Harm_N^k(M)$ and $\epsilon \in W^1\Omega^{k-1}(M)$ is such that $\delta d\epsilon=0$. Alternatively,
  \begin{align*}
    \kappa=\delta \gamma + \lambda_D \Rightarrow
    \omega = \delta(\underbrace{\beta + \gamma}_{\sigma}) + \underbrace{d\alpha+\lambda_D}_{\tau}
  \end{align*}
  where $\lambda_D\in \Harm_D^k(M)$ and $\gamma\in W^1\Omega^{k+1}(M)$ is such that $d\delta\gamma=0$.
\end{proof}
}

\begin{proposition}
\label{prop:hodge-duality}
  The Hodge star operator on $\Omega^k(M)$ induces the isomorphism $\Harm_N^k(M)\cong \Harm_D^{n-k}(M)$.
\end{proposition}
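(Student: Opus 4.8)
The plan is to show directly that the Hodge star restricts to a linear bijection $\star:\Harm_N^k(M)\to\Harm_D^{n-k}(M)$. Since $\star:\Omega^k(M)\to\Omega^{n-k}(M)$ is already a linear isomorphism, with inverse proportional to $\star$ itself (because $\star\star=(-1)^{k(n-k)}$ on $k$-forms), it suffices to verify that $\star$ carries Neumann harmonic $k$-fields into Dirichlet harmonic $(n-k)$-fields and that the same construction run in reverse carries the latter back.

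First I would take $\kappa\in\Harm_N^k(M)$, so that $d\kappa=0$, $\delta\kappa=0$ and $\bm{n}\kappa=0$, and check the three defining properties of $\Harm_D^{n-k}(M)$ for $\star\kappa$. Closedness $d(\star\kappa)=0$ follows from co-closedness of $\kappa$: using the definition $\delta=(-1)^k\star^{-1}d\star$, the equation $\delta\kappa=0$ reads $\star^{-1}d\star\kappa=0$, whence $d\star\kappa=0$ after applying the invertible $\star$. Co-closedness $\delta(\star\kappa)=0$ follows from closedness of $\kappa$: expanding $\delta$ on the $(n-k)$-form $\star\kappa$ and using $\star\star\kappa=(-1)^{k(n-k)}\kappa$ reduces the expression to a nonzero multiple of $d\kappa=0$. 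Finally the Dirichlet boundary condition $\bm{t}(\star\kappa)=0$ is exactly the commutation relation (\ref{eq:star-bound-op}), namely $\bm{t}\star\kappa=\star\bm{n}\kappa=\star\,0=0$.

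Then I would run the symmetric argument for the reverse inclusion: given $\lambda\in\Harm_D^{n-k}(M)$, apply $\star$ again and invoke the companion relation $\star\bm{t}\omega=\bm{n}\star\omega$ to convert the Dirichlet condition on $\lambda$ into the Neumann condition on $\star\lambda$, while the same manipulations with $d$ and $\delta$ interchange closed and co-closed. Because $\star$ is linear and $\star\star$ is a nonzero scalar, this shows the restriction is an injective and surjective linear map, hence the claimed isomorphism; one may additionally remark that $\star$ preserves the $L^2$-pairing, so the isomorphism is in fact isometric.

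The computations are entirely routine; the only points requiring care are the bookkeeping of the sign $(-1)^{k(n-k)}$ coming from $\star\star$ and the correct pairing of each boundary relation with the condition it is meant to convert. The main conceptual obstacle is simply recognising that the two commutation identities in (\ref{eq:star-bound-op}) are precisely what interchanges the Neumann and Dirichlet conditions under the Hodge star, so that no analytic input beyond the algebraic properties of $\star$, $d$ and $\delta$ (in particular none of the decomposition theorems) is needed.
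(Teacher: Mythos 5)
Your proposal is correct and follows essentially the same route as the paper: verify that $\star$ sends closed to co-closed, co-closed to closed, and converts the Neumann condition into the Dirichlet condition via the commutation relation $\bm{t}\star\omega=\star\bm{n}\omega$ of (\ref{eq:star-bound-op}), then invoke invertibility of $\star$ (indeed you apply the boundary identity in the correct orientation, whereas the paper's printed proof states it with $\bm{t}$ and $\bm{n}$ interchanged). No further comment is needed.
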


\begin{proof}
   Let $\lambda_N\in \Harm_N^k(M)$. Then, $\delta \star \lambda_N=(-1)^k \star^{-1} d\star^2 \lambda_N = (-1)^{k(n-k+1)}\star^{-1}d\lambda_N = 0$, %
   $d\star \lambda_N =
   (-1)^k \star \delta \lambda_N=0$, and by (\ref{eq:star-bound-op}), $\bm{n}\star \lambda_N = \star \bm{t}\lambda_N = 0$. Thus, $\star( \Harm_N^k(M))\subset \Harm_D^{n-k}(M)$. Similarly, one shows that $\star^{-1}\Harm_D^{n-k}(M)\subset \Harm_N^k(M)$.
\end{proof}

The following two theorems coincide with~\citet[Theorem 2.6.1]{schwarz}.
\begin{theorem}[Hodge isomorphism]
\label{thm:hodge-isomorphism}
$\Harm_N^k(M) \cong H^k_{dR}(M,d)$ and  $\Harm_D^k(M)\cong H^k_{dR}(M,\delta)$.
\end{theorem}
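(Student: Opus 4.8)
The plan is to realise the isomorphism through the period map $\Phi:\Harm_N^k(M)\to H_{dR}^k(M,d)$, $\kappa\mapsto[\kappa]$, and to show it is a bijection using the decomposition theorems (\ref{eq:hmd})--(\ref{eq:fd}) together with Green's formula (\ref{eq:green}). The map is well-defined since every Neumann harmonic field is in particular closed, $d\kappa=0$. Injectivity is immediate: if $\kappa\in\Harm_N^k(M)$ with $[\kappa]=0$, i.e. $\kappa=d\alpha$ for some $\alpha\in W^1\Omega^{k-1}(M)$, then Green's formula gives
\[
  ||\kappa||_{L^2}^2=\langle d\alpha,\kappa\rangle=\langle\alpha,\delta\kappa\rangle+\int_{\partial M}\bm{t}\alpha\wedge\star\bm{n}\kappa=0,
\]
because $\delta\kappa=0$ and $\bm{n}\kappa=0$, so $\kappa=0$.

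For surjectivity I would start from any closed representative $\omega\in W^1\Omega^k(M)$ and apply HFMD, $\omega=d\alpha+\delta\beta+\kappa$ with $\alpha\in W^1\Omega_D^{k-1}(M)$, $\beta\in W^1\Omega_N^{k+1}(M)$ and $\kappa\in L^2\Harm^k(M)$. Differentiating and using $d\omega=0$ yields $d\delta\beta=0$; pairing with $\beta$ and invoking $\bm{n}\beta=0$ in Green's formula gives $||\delta\beta||_{L^2}^2=\langle d\delta\beta,\beta\rangle=0$, hence $\delta\beta=0$ and $\omega=d\alpha+\kappa$. This only shows $\omega$ is cohomologous to a \emph{generic} harmonic field; to land inside $\Harm_N^k(M)$ I would then apply the Friedrichs splitting (\ref{eq:fd}), writing $\kappa=\lambda_N+d\epsilon$ with $\lambda_N\in\Harm_N^k(M)$ and $d\epsilon\in\Harm_{ex}^k(M)$, so that $\omega=d(\alpha+\epsilon)+\lambda_N$ and $[\omega]=\Phi(\lambda_N)$.

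For the second isomorphism I would avoid repeating the argument and instead transfer it through Hodge duality. Up to sign the star operator intertwines the two complexes (as in the proof of Proposition \ref{prop:hodge-duality}), $d\,\star=\pm\,\star\,\delta$, so it descends to an isomorphism $H_{dR}^k(M,\delta)\cong H_{dR}^{n-k}(M,d)$. Combining this with the duality $\Harm_D^k(M)\cong\Harm_N^{n-k}(M)$ of Proposition \ref{prop:hodge-duality} and the first isomorphism applied in degree $n-k$ gives the chain $\Harm_D^k(M)\cong\Harm_N^{n-k}(M)\cong H_{dR}^{n-k}(M,d)\cong H_{dR}^k(M,\delta)$.

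Granting the decomposition theorems, whose genuine analytic content (ellipticity, regularity and closed range) I take as given, the only delicate point is surjectivity, where one must use \emph{both} halves of HFMD: the Hodge--Morrey part to annihilate the co-exact component $\delta\beta$, and the Friedrichs part to trade the generic harmonic representative for a Neumann one. One must also keep track of the $W^1$-regularity of the primitives $\alpha$ and $\epsilon$ so that every appeal to Green's formula is legitimate; without boundary-regular primitives the vanishing of the boundary integrals could not be asserted.
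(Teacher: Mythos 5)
Your proof is correct and follows essentially the same route as the paper's: the Hodge--Morrey decomposition kills the co-exact part of a closed form (the paper cites this step as Proposition \ref{prop:closed-exact}), the Friedrichs splitting trades the harmonic remainder for a Neumann representative, and orthogonality/Green's formula gives injectivity. The only cosmetic difference is in the second isomorphism, where the paper reruns the argument directly in the codifferential complex while you transfer the first isomorphism through the Hodge star via Proposition \ref{prop:hodge-duality}; both are equally valid and the paper itself uses such a transfer elsewhere.
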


\begin{proof}
  By proposition \ref{prop:closed-exact} and FD, every closed $k$-form is uniquely decomposed as $\omega = d\alpha + d\epsilon + \lambda_N$ where $\alpha\in W^1\Omega_D^{k-1}(M)$, $\epsilon\in W^1\Omega^{k-1}(M)$ is such that $\delta d\epsilon=0$ and $\lambda_N\in \Harm_N^k(M)$. Thus, every closed $k$-form is cohomologous to an element of $\Harm_N^k(M)$ and, by orthogonality of the decomposition, $\Harm_N^k(M)\cong \ker d|_{W^1\Omega^k(M)}/ \im d|_{W^1\Omega^{k-1}(M)}$.
  
  Similarly, every co-closed $k$-form is decomposed as $\tilde{\omega}=\delta \beta + \delta \gamma + \lambda_D$ where $\beta\in W^1\Omega^{k+1}(M)$, $\gamma\in W^1\Omega^{k+1}(M)$ is such that $d\delta\gamma=0$ and $\lambda_D\in \Harm_D^k(M)$. Hence, $\Harm_D^k(M)\cong \ker \delta|_{W^1\Omega^k(M)}/ \im \delta|_{W^1\Omega^{k+1}(M)}$.
\end{proof}

\begin{theorem}
\label{thm:hodge-isomorphism2}
  $\Harm_D^k(M)\cong H_{dR}^k(M,\partial M ,d)$ and $\Harm_N^k(M)\cong H_{dR}^k(M,\partial M,\delta)$.
\end{theorem}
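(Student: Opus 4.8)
The plan is to mirror the proof of Theorem~\ref{thm:hodge-isomorphism} almost verbatim, except that I now work inside the boundary-constrained subspaces: Dirichlet forms for the $d$-statement and Neumann forms for the $\delta$-statement. I read the relative cohomology groups as $H_{dR}^k(M,\partial M,d)=\ker d|_{W^1\Omega_D^k(M)}/\im d|_{W^1\Omega_D^{k-1}(M)}$ and, by the obvious dualisation, $H_{dR}^k(M,\partial M,\delta)=\ker \delta|_{W^1\Omega_N^k(M)}/\im \delta|_{W^1\Omega_N^{k+1}(M)}$. The one new ingredient, compared with Theorem~\ref{thm:hodge-isomorphism}, is verifying that the harmonic representative handed over by the decomposition automatically carries the correct boundary condition.

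For the first isomorphism I would start with a closed Dirichlet $k$-form $\omega\in W^1\Omega_D^k(M)$. The first part of Proposition~\ref{prop:closed-exact} writes it as $\omega=d\alpha+\kappa$ with $\alpha\in W^1\Omega_D^{k-1}(M)$ and $\kappa\in L^2\Harm^k(M)$. Using the commutation relation $\bm{t}d=d\bm{t}$ together with $\bm{t}\omega=0$ and $\bm{t}\alpha=0$, I get $\bm{t}\kappa=\bm{t}\omega-\bm{t}d\alpha=-d\bm{t}\alpha=0$, so in fact $\kappa\in\Harm_D^k(M)$. Since $d\alpha$ lies in $\im d|_{W^1\Omega_D^{k-1}(M)}=\mathcal{E}^k(M)$, the class $[\omega]$ equals $[\kappa]$, which shows the map $\Harm_D^k(M)\to H_{dR}^k(M,\partial M,d)$, $\kappa\mapsto[\kappa]$, is surjective. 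Injectivity is then immediate from the $L^2$-orthogonality built into the HMD, equation~(\ref{eq:hmd}): if $\kappa\in\Harm_D^k(M)$ satisfies $[\kappa]=0$ then $\kappa=d\alpha\in\mathcal{E}^k(M)$, but $\mathcal{E}^k(M)\cap L^2\Harm^k(M)=\{0\}$ forces $\kappa=0$.

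The second isomorphism follows the identical template with $d$ replaced by $\delta$, $\mathcal{E}^k(M)$ by $\mathcal{C}^k(M)$, and Dirichlet by Neumann: the second part of Proposition~\ref{prop:closed-exact} decomposes a co-closed Neumann $k$-form as $\tilde\omega=\delta\beta+\tilde\kappa$, the relation $\bm{n}\delta=\delta\bm{n}$ forces $\tilde\kappa\in\Harm_N^k(M)$, and orthogonality of $\mathcal{C}^k(M)$ with $L^2\Harm^k(M)$ again gives injectivity. Alternatively, one can deduce the $\delta$-statement from the $d$-statement through Hodge duality, which identifies $\Harm_N^k(M)\cong\Harm_D^{n-k}(M)$ by Proposition~\ref{prop:hodge-duality} and, at the level of cohomology, the relative $\delta$-complex in degree $k$ with the relative $d$-complex in degree $n-k$. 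I expect the main obstacle to be purely bookkeeping: confirming that exactness \emph{within the boundary-constrained subspaces} matches the denominators $\im d|_{W^1\Omega_D^{k-1}(M)}$ and $\im \delta|_{W^1\Omega_N^{k+1}(M)}$ precisely, so that no spurious classes survive, and keeping the two Friedrichs splittings of $L^2\Harm^k(M)$ in~(\ref{eq:fd}) assigned to the correct Dirichlet-versus-Neumann side.
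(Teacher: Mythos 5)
Your proof is correct, and it shares the paper's overall skeleton: reduce a closed Dirichlet $k$-form to its harmonic component via Proposition~\ref{prop:closed-exact}, then get injectivity from the $L^2$-orthogonality of the Hodge--Morrey summands. Where you genuinely diverge is in how the harmonic representative is shown to be Dirichlet. The paper applies the Friedrichs decomposition to the harmonic part, $\kappa=\delta\gamma+\lambda_D$, deduces $\bm{t}\delta\gamma=0$ from $\bm{t}\tilde{\omega}=0$, and then kills $\delta\gamma$ with Green's formula, so that $\tilde{\omega}=d\alpha+\lambda_D$ with $\lambda_D\in\Harm^k_D(M)$ by construction. You skip the Friedrichs step entirely: since $\bm{t}$ is linear and commutes with $d$, the identity $\bm{t}\kappa=\bm{t}\omega-d\bm{t}\alpha=0$ shows directly that the harmonic field $\kappa$ satisfies the Dirichlet condition and hence lies in $\Harm^k_D(M)$ by definition. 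Your route is shorter and makes the injectivity argument ($\mathcal{E}^k(M)\cap L^2\Harm^k(M)=\{0\}$) explicit where the paper only appeals to ``orthogonality of the decomposition''; the paper's route has the minor advantage of exhibiting concretely that the co-exact harmonic component $\delta\gamma$ vanishes, which is the same computation one would use to show $\Harm^k_D(M)\perp\Harm^k_{co}(M)$. Note that both arguments take a boundary trace of an object with a priori only $L^2$ regularity ($\kappa$ for you, $\delta\gamma$ for the paper), so neither is more rigorous on that point. Your handling of the $\delta$-statement, either by the dual template or via Proposition~\ref{prop:hodge-duality}, matches the paper's one-line ``shown correspondingly''.
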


\begin{proof}
By proposition \ref{prop:closed-exact} and FD, every closed $k$-form is uniquely decomposed as $\tilde{\omega} = d\alpha + \delta \gamma + \lambda_D$, where $\alpha\in W^1\Omega_D^{k-1}(M)$, $\gamma\in W^1\Omega^{k+1}(M)$ is such that $d\delta \gamma=0$ and $\lambda_D\in \Harm_D^k(M)$. If $\tilde{\omega}$ is Dirichlet, $0=\bm{t}\tilde{\omega}=\bm{t}\delta \gamma$, then by Green's formula
\begin{align*}
    ||\delta \gamma||_{L^2} = \bra \delta \gamma,\delta \gamma \ket
    = \bra \cancel{d\delta\gamma},\gamma \ket - \int_{\partial M} \cancel{\bm{t}\delta} \gamma\wedge \star \bm{n} \gamma 
    =0,
\end{align*}
which implies that $\delta \gamma=0$ and $\tilde{\omega}=d\alpha+\lambda_D$. Hence, every closed Dirichlet $k$-form is cohomologous to an element of $\Harm^k_D(M)$ and, by orthogonality of the decomposition, $\Harm_D^k(M)\cong \ker d|_{W^1\Omega_D^k(M)}/ \im d|_{W^1\Omega_D^{k-1}(M)}$. The dual isomorphism is shown correspondingly.
\end{proof}

The following coincides with~\citet[Corollary 2.6.2]{schwarz}.
\begin{corollary}[Poincaré-Lefschetz duality]
\label{thm:poincare-duality}
The following isomorphisms hold:
\begin{align*}
    H_{dR}^k(M,\partial M,d)&\cong H_{dR}^k(M,\delta),\\
    H_{dR}^k(M,d)&\cong H_{dR}^k(M,\partial M,\delta),\\
    H_{dR}^k(M,d)&\cong H_{dR}^{n-k}(M,\delta),\\
    H_{dR}^k(M,\partial M,d)&\cong H_{dR}^{n-k}(M,\partial M,\delta),\\
    H_{dR}^k(M,d)&\cong H_{dR}^{n-k}(M,\partial M,d),\\
    H_{dR}^k(M,\delta)&\cong H_{dR}^{n-k}(M,\partial M,\delta).
\end{align*}
\end{corollary}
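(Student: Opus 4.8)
The plan is to observe that every one of the six de Rham groups appearing in the statement has already been identified, by the two Hodge isomorphism theorems, with one of the four harmonic-field spaces $\Harm_N^k(M)$, $\Harm_D^k(M)$, $\Harm_N^{n-k}(M)$, $\Harm_D^{n-k}(M)$. Once this dictionary is in place, each line of the corollary becomes the assertion that two of the listed groups correspond either to the same harmonic-field space, or to two harmonic-field spaces that the Hodge star identifies. No new analysis is then required: the proof is a short composition of Theorems \ref{thm:hodge-isomorphism} and \ref{thm:hodge-isomorphism2} together with the star duality of Proposition \ref{prop:hodge-duality}.

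First I would dispose of the two same-degree isomorphisms, which do not involve the Hodge star. For the first line, Theorem \ref{thm:hodge-isomorphism2} gives $H_{dR}^k(M,\partial M,d)\cong \Harm_D^k(M)$ while Theorem \ref{thm:hodge-isomorphism} gives $\Harm_D^k(M)\cong H_{dR}^k(M,\delta)$; composing yields $H_{dR}^k(M,\partial M,d)\cong H_{dR}^k(M,\delta)$. For the second line, Theorem \ref{thm:hodge-isomorphism} gives $H_{dR}^k(M,d)\cong \Harm_N^k(M)$ and Theorem \ref{thm:hodge-isomorphism2} gives $\Harm_N^k(M)\cong H_{dR}^k(M,\partial M,\delta)$, so $H_{dR}^k(M,d)\cong H_{dR}^k(M,\partial M,\delta)$.

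Next I would treat the four complementary-degree isomorphisms by inserting the Hodge star, each as a three-step chain: pass from the source de Rham group to its harmonic-field model in degree $k$, apply $\star$ to reach the complementary harmonic-field space in degree $n-k$ via Proposition \ref{prop:hodge-duality} (recalling $\Harm_N^k\cong\Harm_D^{n-k}$ and, on replacing $k$ by $n-k$, $\Harm_N^{n-k}\cong\Harm_D^{k}$), and finally return to the target group by the relevant Hodge isomorphism. For instance $H_{dR}^k(M,d)\cong\Harm_N^k(M)\cong\Harm_D^{n-k}(M)\cong H_{dR}^{n-k}(M,\delta)$ gives the third line, while replacing the last step by Theorem \ref{thm:hodge-isomorphism2} gives $\Harm_D^{n-k}(M)\cong H_{dR}^{n-k}(M,\partial M,d)$ and hence the fifth. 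The fourth and sixth lines are obtained identically, starting from $H_{dR}^k(M,\partial M,d)\cong\Harm_D^k(M)$ and $H_{dR}^k(M,\delta)\cong\Harm_D^k(M)$ respectively and using $\Harm_D^k\cong\Harm_N^{n-k}$.

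There is no serious obstacle, since the one genuinely analytic ingredient — the star-induced isomorphism between Neumann and Dirichlet harmonic fields — is already established in Proposition \ref{prop:hodge-duality}. The only point requiring care is the bookkeeping: tracking that $\star$ simultaneously swaps the degree $k\leftrightarrow n-k$, exchanges $d$ with $\delta$, and interchanges the Dirichlet and Neumann boundary conditions, so that each chain lands on exactly the group named in the statement. I would also remark that the six isomorphisms are not independent — combining the first line in degree $n-k$ with the fifth recovers the third, for example — but establishing each directly by the above composition is cleaner than exhibiting the redundancies.
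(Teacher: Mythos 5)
Your proposal is correct and matches the paper's intent exactly: the paper states this as an immediate corollary of Theorems~\ref{thm:hodge-isomorphism} and~\ref{thm:hodge-isomorphism2} together with Proposition~\ref{prop:hodge-duality} (citing Schwarz, Corollary 2.6.2, without writing out the chains), and your explicit compositions through the harmonic-field spaces $\Harm_N^k$, $\Harm_D^k$ and the star isomorphism $\Harm_N^k(M)\cong\Harm_D^{n-k}(M)$ are precisely the omitted bookkeeping. All six lines check out.
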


It is a remarkable result that Dirichlet and Neumann harmonic fields can be identified with de Rham cohomology classes. Every de Rham cohomology class has a unique Neumann harmonic representative, and a unique Dirichlet harmonic representative. The dimension of the spaces of Dirichlet and Neumann harmonic fields is thus finite and can be computed through homological methods; $\text{dim} \ \Harm_N^k(M) = \beta_k$ the $k$-th Betti number, and $\text{dim}\ \Harm_D^k(M) =\beta_{n-k}$.

The only harmonic $k$-field that simultaneously satisfies Dirichlet and Neumann boundary conditions is the zero-form, $\Harm^k_D(M)\cap\Harm^k_N(M)=\{ 0 \}$. For $k=0$ and $k=n$, respectively, harmonic fields are constants, i.e. if $M$ is connected, then $\Harm^0(M)=\Harm^0_N(M)=\R$ ($\Harm^0_D(M)=\{0\}$) and $\Harm^n(M)=\{ c\mu \in \Omega^n(M)|\, c\in \R\}=\Harm^n_D(M)$ where $\mu$ is the Riemannian volume-form ($\Harm^n_N(M)=\{0\}$).

\end{document}